\newtheorem{observation}{Observation}
\newtheorem{lemma}{Lemma}
\newtheorem{theorem}{Theorem}
\newtheorem{corollary}{Corollary}
\newtheorem{definition}{Definition}
\newtheorem{claim}{Claim}
\newtheorem{property}{Property}
\newtheorem{problem}{Problem}
\newcommand*\patchAmsMathEnvironmentForLineno[1]{%
  \expandafter\let\csname old#1\expandafter\endcsname\csname #1\endcsname
  \expandafter\let\csname oldend#1\expandafter\endcsname\csname end#1\endcsname
  \renewenvironment{#1}%
     {\linenomath\csname old#1\endcsname}%
     {\csname oldend#1\endcsname\endlinenomath}}% 
\newcommand{\hp}[2]{\ensuremath{l_{#1 #2}^+}}
\newcommand{\eps}{\ensuremath{\varepsilon}}
\newcommand{\sa}{self-ap\-proaching\xspace}
\newcommand{\ic}{increasing-chord\xspace}
\newcommand{\N}{\ensuremath{\mathbb N_0}}
\newcommand{\ray}[2]{\ensuremath{\textnormal{ray}(#1,#2)}}
\newcommand{\dg}{^\circ}
\newcommand{\Wlg}{Without loss of generality}
\newcommand{\wlg}{without loss of generality}
\newcommand{\sublem}[1]{\lowercase\expandafter{(\romannumeral #1)\relax}\xspace}
\renewcommand{\vec}[1]{\vv{#1}}
\newcommand{\parent}[1]{\pi_{#1}}
\newcommand{\rp}[1]{\rho_l^i}
\newcommand{\lp}[1]{\rho_r^i}
\newcommand{\gtd}{GTD\xspace}
\newcommand{\gtds}{GTDs\xspace}
\newcommand{\minmulticut}{\textsc{Minimum Multicut}\xspace}
\newcommand{\nray}[2]{\textnormal{ray}_{#1}(#2)}
\newcommand{\normal}[2]{\textnormal{n}_{#1}(#2)}
\newcommand{\nproj}[2]{\textnormal{proj}_{#1}(#2)}
\newenvironment{reptheorem}[2]{%
        \par\addvspace{8pt plus3pt minus2pt}%
   \noindent{\textbf{#1 #2.}}%
   \hskip.5em\ignorespaces
   }{%
        \par\addvspace{8pt plus3pt minus2pt}}
\DeclareMathOperator{\opt}{OPT}
\providecommand*{\cupdot}{%
  \mathbin{%
    \mathpalette\@cupdot{}%
  }%
}
\newcommand*{\@cupdot}[2]{%
  \ooalign{%
    $\m@th#1\cup$\cr
    \hidewidth$\m@th#1\cdot$\hidewidth
  }%
}
\begin{document}

\title{Partitioning Graph Drawings\\ and Triangulated Simple Polygons\\ into Greedily Routable Regions\thanks{A preliminary version of this paper has been presented at the 26th
  International Symposium on Algorithms and Computation (ISAAC
  2015)~\cite{Noellenburg2015}.}}
%\titlerunning{Partitioning Graph Drawings and Triangulated Simple Polygons into GRRs}
\author[1]{Martin Nöllenburg}
\author[2]{Roman Prutkin}
\author[3]{Ignaz Rutter}

\affil[1]{Algorithms and Complexity Group, TU
  Wien, Vienna, Austria}
\affil[2]{Institute of Theoretical
  Informatics, Karlsruhe Institute of Technology, Germany}
\affil[3]{Algorithms and Visualization W\&I, Technische Universiteit
  Eindhoven, Eindhoven, Netherlands}

\date{}
\maketitle

\begin{abstract}
  A greedily routable region (GRR) is a closed subset
  of~$\mathbb R^2$, in which any destination point can be reached from
  any starting point by always moving in the direction with maximum
  reduction of the distance to the destination in each point of the
  path.
  Recently, Tan and Kermarrec proposed a geographic routing protocol
  for dense wireless sensor networks based on decomposing the network
  area into a small number of interior-disjoint GRRs. They showed that
  minimum decomposition is \NP-hard for polygonal regions with holes.

  We consider minimum GRR decomposition for plane straight-line
  drawings of graphs. Here, GRRs coincide with self-approaching
  drawings of trees, a drawing style which has become a popular
  research topic in graph drawing.  We show that minimum decomposition
  is still \NP-hard for graphs with cycles and even for trees, but can
  be solved optimally for trees in polynomial time, if we allow only
  certain types of GRR contacts. Additionally, we give a
  2-approximation for simple polygons, if a given triangulation has to
  be respected.

\paragraph{Keywords:} {Greedy region decomposition; increasing-chord drawings; decomposing graph drawings;
greedy routing in wireless sensor networks.}
\end{abstract}

\section{Introduction}
\label{sec:introduction}

Geographic or geometric routing is a routing approach for wireless
sensor networks that became popular recently. It uses geographic
coordinates of sensor nodes to route messages between them. 
One simple routing strategy is greedy routing. Upon receipt of a
message, a node tries to forward it to a neighbor node that is closer
to the destination than itself. However, delivery cannot be
guaranteed, since a message may get stuck in a local minimum or \emph{void}.
Another local routing strategy is compass routing. It forwards the
message to a neighbor, such that the direction from the node to this
neighbor is closest to the direction from the node to the
destination. Kranakis et al.~\cite{kranakis_compass_1999} showed that
compass routing can produce loops even in plane triangulations. They
also showed that compass routing is always successful on Delaunay
triangulations.
More advanced geometric routing protocols employ strategies like face
routing~\cite{bmsu-rgdahwn-01} and related techniques based on planar
graphs to get out of local minima; see~\cite{mwh-spbr-01, cv-svht-07}
for an overview.

An alternative approach is to decompose the network into components
such that in each of them greedy routing is likely to perform
well~\cite{Glider2005, tbk-cpsn-09, zsg-ssasn-07}. A global data
structure of preferably small size is used to store interconnectivity
between components.
One such network decomposition approach has been recently proposed by
Tan and Kermarrec~\cite{tk-ggrlssn-2012}.
They assume that \emph{global} connectivity irregularities, i.e, large
holes in the network and the network boundary, are the main source of
local minima in which greedy routing between a pair of sensor nodes
might get stuck. They note that in practical sensor networks,
\emph{local} connectivity irregularity normally has low impact on the
cost of routing and the quality of the resulting paths, since the
local minima in this context can be overcome by simple and
light-weight techniques; see~\cite{tk-ggrlssn-2012} for a list of such
strategies. With this reasoning, Tan and Kermarrec model the network
as a polygonal region with obstacles or holes inside it and consider
greedy routing inside this continuous domain. Local minima now only
appear on the boundaries of the polygonal region. In this work, we use
the same model.

Tan and Kermarrec~\cite{tk-ggrlssn-2012} try to partition this region
into a minimum number of polygons, in which greedy routing works
between any pair of points. They call such components \emph{greedily
  routable regions (GRRs)}. For intercomponent routing, region
adjacencies are stored in a graph. The protocol is able to guarantee
finding paths of \emph{bounded stretch}, i.e., the length of such a
path exceeds the distance between its endpoints only by a constant
factor.

For routing in the underlying network of sensor nodes corresponding to
discrete points inside the polygonal region, greedy routing is used if
the source and the destination nodes are in the same component, and
existing techniques are used to overcome local minima. For
inter-component routing, each node stores a neighbor on a shortest path
to each component. This path is used to get to the component of the
destination, and then intra-component routing is used.

Tan and Kermarrec~\cite{tk-ggrlssn-2012} emphasize the importance for
the nodes to store as small routing tables as possible and note that
the number of network components in a decomposition directly reflects
the number of nonlocal routing states of a node. This number
determines the size of the node's routing table. Therefore, the goal
is to partition the network into a minimum number of GRRs. In this
work, we focus on the problem of partitioning a polygonal region or a
graph drawing (for which we extend the notion of a GRR) into a minimum
number of GRRs. For a detailed description of an actual routing
protocol based on GRR decompositions, see the original work of Tan and
Kermarrec~\cite{tk-ggrlssn-2012}.

The authors prove that partitioning a polygon with holes into a
minimum number of regions is \NP-hard and they propose a simple
heuristic. Its solution may strongly deviate from the optimum even for
very simple polygons; see Fig.~\ref{fig:heuristic-bad}.

Some real-world instances from the work of~Tan and
Kermarrec~\cite[Fig.~17]{tk-ggrlssn-2012} are networks of sensor nodes
distributed on roads of a city. The resulting polygonal regions are
very narrow and strongly resemble plane straight-line graph
drawings. Therefore, considering plane straight-line graph drawings in
addition to polygonal regions is a natural adjustment of the minimum
GRR partition problem.

In this paper, we approach the problem of finding minimum or
approximately minimum GRR decompositions by first considering the
special case of partitioning drawings of graphs, which can be
interpreted as very thin polygonal regions. We notice that in this
scenario, GRRs coincide with \ic drawings of trees as studied by
Alamdari et al.~\cite{acglp-sag-12}.

A \emph{self-approaching} curve is a curve, where for any point~$t'$
on the curve, the Euclidean distance to~$t'$ decreases continuously
while traversing the curve from the start to~$t'$~\cite{ikl-sac-99}.
An \emph{increasing-chord} curve is a curve that is self-approaching
in both directions.  The name is motivated by their equivalent
characterization as those curves, where for any four points $a,b,c,d$
in this order along the curve, $|bc| \le |ad|$, where~$|pq|$ denotes
the Euclidean distance from point~$p$ to point~$q$.

A graph drawing is \emph{\sa} or \emph{\ic} if every pair of vertices
is joined by a \sa or \ic path, respectively.  The study of \sa and
\ic graph drawings was initiated by Alamdari et
al.~\cite{acglp-sag-12}.  They studied the problem of recognizing
whether a given graph drawing is self-approaching and gave a complete
characterization of trees admitting self-approaching drawings.  In our
own previous work~\cite{npr-sid3pg-16}, we studied \sa and \ic
drawings of triangulations and 3-connected planar graphs.
Furthermore, the problem of connecting given points to form an \ic
drawing has been investigated~\cite{acglp-sag-12,dfg-icgps-15}.

\paragraph{Contributions.} First, we show that partitioning a plane
graph drawing into a minimum number of \ic components is
\NP-hard. This extends the result of Tan and
Kermarrec~\cite{tk-ggrlssn-2012} for polygonal regions with holes to
plane straight-line graph drawings. Next, we consider plane drawings
of trees. We show that the problem remains \NP-hard even for trees, if
arbitrary types of GRR contacts are allowed. For a restriction on the
types of GRR contacts, we show how to model the decomposition problem
using \minmulticut, which provides a polynomial-time
2-approximation. We then solve the partitioning problem for trees and
restricted GRR contacts optimally in polynomial time
using dynamic programming.
Finally, we use the insights gained for decomposing graphs and apply
them to the problem of minimally decomposing simple triangulated
polygons into GRRs.  We provide a polynomial-time
2-approximation for decompositions that are formed along chords of the
triangulation.

\section{Preliminaries}

\newcommand{\vis}[1]{V(#1)} \newcommand{\forwtan}[1]{\vec{d_{#1}}}
\newcommand{\polyg}{\mathcal P}
\newcommand{\vangle}[2]{\angle(#1,#2)}
In the following, let~$\mathcal P$ be a polygonal region, and
let~$\partial \mathcal P$ denote its boundary.  For
$p \in \mathcal P$, let~$\vis{p}$ denote
the~$\emph{visibility region}$ of~$p$, i.e., the set of
points~$q \in \mathcal P$ such that the line segment $p q$ lies
inside~$\mathcal P$.
For directions~$\vec {d_1}$ and~$\vec {d_2}$,
let~$\vangle{\vec {d_1}}{\vec {d_2}} \leq 180\dg$ denote the angle
between them. For points~$p,q$, $p \neq q$, let~$\ray{p}{q}$ denote
the ray with origin~$p$ and direction~$\vec{p q}$.

\begin{definition}
  For an~$s$-$t$-path~$\rho$ and a point~$p \neq t$ on~$\rho$, we define
  the \emph{forward tangent} on~$\rho$ in~$p$ as the direction
  $\vec{d} = \lim_{\eps \rightarrow 0} \{ \vec{pq} \mid q \textnormal{ succeeds
  } p \textnormal{ on } \rho \textnormal {, and } |pq|=\eps \}$.
\end{definition}

Next, we formally define paths resulting from greedy routing
inside~$\mathcal P$. We call such paths~\emph{greedy}. Note that this
definition of greediness is different from the one used in the context
of greedy embeddings of graphs~\cite{Papadimitriou2005}.

\begin{definition}
  For points~$s,t \in \mathcal P$, an~$s$-$t$-path~$\rho$ is
  \emph{greedy} if the distance to~$t$ strictly decreases along~$\rho$
  and if for every point~$s' \neq t$ on~$\rho$, the forward
  tangent~$\vec{d}$ on~$\rho$ in~$s'$ has the
  minimum angle with~$\vec{s't}$ among all vectors~$\vec{s'q}$ for
  any~$q \in \vis{s'} \setminus \{s' \}$.
  \label{def:greedy-path}
\end{definition}

\begin{figure}[tb] \hfill
  \subfloat[]{\includegraphics[scale=1.0,page=1]{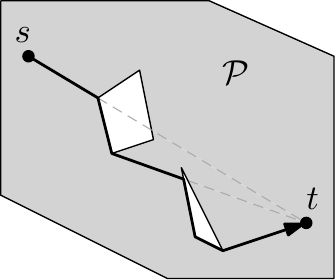}\label{fig:greedy-path}}\hfill
  \subfloat[]{\includegraphics[scale=1.0,page=2]{fig/greedy-path.pdf}\label{fig:greedy-path-alg}}
 \hfill\null
 \caption{\protect\subref{fig:greedy-path}~The thick $s$-$t$-path
   inside the polygonal region~$\mathcal P$ (grey) is
   greedy. \protect\subref{fig:greedy-path-alg}~If~$t$ is not visible,
   a greedy path must trace an edge until the endpoint. If it is not
   possible, a local minimum must exist. }
  
\end{figure}

A greedy path is shown in Fig.~\ref{fig:greedy-path}. Note that such
paths are polylines. The way greedy paths are defined resembles
compass routing~\cite{kranakis_compass_1999}.

\subsection{Greedily Routable Regions.}

Greedily Routable Regions were introduced by Tan and
Kermarrec~\cite{tk-ggrlssn-2012} as follows.

\begin{definition}[\hspace{0.1mm}\cite{tk-ggrlssn-2012}]
  A polygonal region~$\mathcal P$ is a \emph{greedily routable region
    (GRR)}, if for any two points~$s, t \in \mathcal P$, $s \neq t$,
  point~$s$ can always move along a straight-line segment
  within~$\mathcal P$ to some point~$s'$ such that $|s't| < |st|$.
\label{def:grr}
\end{definition}

Next we show that~$\mathcal P$ is a GRR if and only if every pair of
points in~$\mathcal P$ is connected by a greedy path. Therefore,
Definition~\ref{def:grr} is equivalent to the one used in the
abstract. We shall show that the following procedure produces a greedy
path inside a GRR.

\bigskip

\makeatletter
\renewcommand{\@algocf@capt@plain}{above}% formerly {bottom}
\makeatother
\begin{algorithm}[H]
\SetAlgorithmName{Procedure}{procedure}{List of Procedures}
  Set~$p = s$.
 
  If~$t$ is visible from $p$, move $p$ to $t$ and finish the
  procedure.  

  Move~$p$ to the first intersection of $pt$
  and~$\partial \mathcal P$. (Note that $p$ itself may be the first
  intersection.)

  If~$p$ is in the interior of a boundary edge~$v_1 v_2$, consider the
  angle between~$\vec {p v_i}$ and~$\vec {p t}$, $i = 1,2$. Let~$v_i$
  be the vertex minimizing~$\vangle {\vec {p v_i}}{\vec {p t}}$,
  $i=1,2$ (break ties arbitrarily). If~$v_i$ is the closest point
  to~$t$ on the segment~$p v_i$, move~$p$ to~$v_i$ and return to
  Step~2, otherwise, return failure.

  If~$p$ coincides with the vertex~$v_2$ incident to boundary
  edges~$v_1 v_2$ and~$v_2 v_3$, consider the angle
  between~$\vec {p v_i}$ and~$\vec {p t}$, $i = 1,3$. Let~$v_i$ be the
  vertex minimizing~$\vangle {\vec {p v_i}}{\vec {p t}}$, $i=1,3$
  (break ties arbitrarily). Again, if~$v_i$ is the closest point
  to~$t$ on the segment~$p v_i$, move~$p$ to~$v_i$ and return to
  Step~2, otherwise, return failure.

  \caption{Constructing a greedy $s$-$t$-path inside a GRR.}
  \label{alg:greedy-path}
\end{algorithm} 
\makeatletter
\renewcommand{\@algocf@capt@plain}{bottom}
\makeatother

\begin{lemma}
  \label{lem:grr-equiv}
  A polygonal region $\mathcal P$ is a GRR if and only if for
  every~$s,t \in \mathcal P$ there exists a
  greedy~$s$-$t$-path~$\rho \subseteq \mathcal
  P$. Procedure~\ref{alg:greedy-path} produces such a greedy path.
\end{lemma}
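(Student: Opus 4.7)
\begin{namedproof}{Proof plan}
The statement bundles three claims: (A) if every pair of points in $\mathcal P$ is joined by a greedy path, then $\mathcal P$ is a GRR; (B) if $\mathcal P$ is a GRR, then such greedy paths exist; and (C) Procedure~\ref{alg:greedy-path} constructs them. I will dispose of (A) in one line and spend the bulk of the proof on (B)+(C), which I treat together by analyzing the procedure.

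For (A), fix $s,t\in\mathcal P$ with $s\neq t$ and let $\rho$ be a greedy $s$-$t$-path. Since $\rho$ is a polyline, for sufficiently small $\eps>0$ the point $s'\in\rho$ at arc distance $\eps$ from $s$ lies on a line segment emanating from $s$, and by Definition~\ref{def:greedy-path} the distance strictly decreases along $\rho$, so $|s't|<|st|$ and $s'\in\vis{s}$. Hence the GRR condition of Definition~\ref{def:grr} holds.

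For (B)+(C), assuming $\mathcal P$ is a GRR, I verify three subclaims about Procedure~\ref{alg:greedy-path}: (i) it terminates in finitely many iterations; (ii) it never returns failure; (iii) the polyline it produces satisfies Definition~\ref{def:greedy-path}.

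Termination (i) is the easiest: in every iteration the procedure either moves $p$ to $t$ (and halts), or moves $p$ along $pt$ to a boundary hit, or moves $p$ to a polygon vertex $v_i$. In every case the distance $|pt|$ is strictly decreasing during the move---in step~3 because we travel along the ray $\ray{p}{t}$, and in steps~4 and 5 precisely because the procedure only makes the move when $v_i$ is the closest point of the segment $pv_i$ to $t$, which is equivalent to $x\mapsto|(p+x\hat{v}_i)-t|$ being strictly decreasing on $[0,|pv_i|]$. Because distance decreases strictly, no polygon vertex can be visited twice, and the procedure terminates after $O(|\partial\mathcal P|)$ iterations.

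For greediness (iii) I argue pointwise on each segment of the polyline. If the segment goes to $t$ (step~2), the forward tangent is $\vec{s't}$ itself, attaining the minimum possible angle $0$. If it lies on a boundary edge $v_1v_2$ (step~4), then at any $s'$ in the relative interior of this edge the set of directions $\{\vec{s'q}:q\in\vis{s'}\setminus\{s'\}\}$ is precisely the closed half-plane of directions on the interior side of $v_1v_2$, so the direction in $\vis{s'}$ minimizing the angle with $\vec{s't}$ is the projection of $\vec{s't}$ onto this half-plane's boundary, namely one of $\vec{s'v_1},\vec{s'v_2}$---which is exactly what the procedure picks. The reflex-vertex analysis in step~5 is analogous, with the visibility cone cut out by the two incident boundary edges.

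The main obstacle is non-failure (ii). I argue by contradiction: suppose at some $p$ the procedure chooses $v_i$ in step~4 (the vertex case is analogous) but $v_i$ is not the closest point of $pv_i$ to $t$. Because $v_i$ minimizes the angle with $\vec{pt}$, the angle $\vangle{\vec{pv_i}}{\vec{pt}}$ is at most $90\dg$, and since the foot of perpendicular $f$ from $t$ onto the line through $v_1v_2$ is strictly interior to $pv_i$, the foot $f$ satisfies $|ft|<|pt|$ and lies on $\partial\mathcal P$. At $f$ the boundary directions are $\pm\vec{v_1v_2}$, both perpendicular to $\vec{ft}$ (one exactly $90\dg$, the other $90\dg$ as well since $\vec{ft}\perp v_1v_2$), so no motion along the boundary edge decreases $|\cdot t|$. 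By the GRR property applied at $f$, there is an $s'\in\vis{f}$ with $|s't|<|ft|$, which forces $s'$ to lie strictly on the interior side of the line $v_1v_2$, and in particular forces $\vec{ft}$ to point into the open visibility half-plane of $f$. But then the same direction $\vec{pt}\parallel\vec{ft}$ (restricted to a small neighborhood at $p$, which lies on the same edge) would also have been available in the procedure's earlier step~2 or step~3 invocation that brought us to $p$---contradicting the fact that $p$ was obtained as a \emph{boundary hit} of the ray $\ray{(\text{prev})}{t}$, i.e.\ that $t$ was not visible from the previous location and that the ray exited $\mathcal P$ at $p$. The remaining geometric case, in which $f$ coincides with $p$ (foot before $p$), corresponds to $\vangle{\vec{pv_i}}{\vec{pt}}\geq 90\dg$; then no visible direction at $p$ decreases distance, directly contradicting the GRR property at $p$.

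Combining (i)--(iii) shows that the procedure outputs a greedy $s$-$t$-path, which establishes (B) and (C) simultaneously; together with (A) this proves the biconditional.
\end{namedproof}
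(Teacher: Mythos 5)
Your proposal is correct and follows essentially the same route as the paper's own proof: the forward direction via the first segment of a greedy path, and the converse by analyzing the procedure and showing it cannot fail because a failure would place the perpendicular foot $f$ of $t$ on the traced edge at a point of $\partial\mathcal P$ from which no visible direction decreases the distance to $t$, contradicting Definition~\ref{def:grr}. One slip worth fixing in your non-failure argument: $\vec{pt}$ is not parallel to $\vec{ft}$; what you actually need (and what your argument implicitly uses) is only that $p$ and $f$ lie on the same edge, so if $t$ were strictly on the interior side of the line through $v_1v_2$, then $\ray{p}{t}$ would locally enter $\mathcal P$, contradicting that $p$ is a point where the ray towards $t$ leaves $\mathcal P$.
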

\begin{proof}
  First, consider~$s,t \in \mathcal P$ connected by a greedy
  $s$-$t$-path~$\rho$. Then $s,t$ satisfy the condition in
  Definition~\ref{def:grr} using the endpoint $s'$ of the first
  segment~$s s'$ of~$\rho$.

  Conversely, let~$\mathcal P$ be a GRR. Let~$s,t$ be two distinct
  points in $\mathcal P$, and consider a path~$\rho$ constructed by
  moving a point~$p$ from~$s$ to~$t$ according to
  Procedure~\ref{alg:greedy-path}. We consider the segments of~$\rho$
  iteratively and show that each of them would be taken by a greedy
  path.
  Since~$\mathcal P$ is a GRR, every point~$p \in \mathcal P$ can get
  closer to~$t$ by a linear movement. If all points on~$\ray{p}{t}$
  sufficiently close to~$p$ are in~$\mathcal P$, a greedy path would
  move along~$\ray{p}{t}$, until it hits~$\partial \mathcal P$.  This
  shows that Step~3 of the procedure traces a greedy path.

  Assume all points on~$\ray{p}{t}$ sufficiently close to~$p$ are not
  in~$\mathcal P$. Then,~$p$ is on~$\partial \mathcal
  P$. Let~$\vec{d_1}$ and~$\vec{d_2}$ be the two tangents in~$p$ to
  the paths that start at~$p$ and go along~$\partial \mathcal
  P$. Let~$\Lambda$ be the cone of directions spanned by~$\vec{d_1}$
  and~$\vec{d_2}$, such that~$\vec{pt} \notin
  \Lambda$. Then,~$\Lambda$ contains the directions of all possible
  straight-line movements from~$p$. By Definition~\ref{def:grr}, for
  some direction~$\vec{d} \in \Lambda$, we have
  $\vangle{\vec{pt}}{\vec{d}} < 90\dg$. But then,
  $\min_{i=1,2} \vangle{\vec{pt}}{\vec{d_i}} \leq
  \vangle{\vec{pt}}{\vec{d}} < 90\dg$. Therefore, a greedy path would
  continue in the direction~$\vec{d_i}$, as does~$\rho$. Let~$v_i$ be
  the endpoint of the edge containing~$p$, such
  that~$\vec{p v_i} = \vec{d_i}$. Therefore, $\angle t p v_i <
  90\dg$. We must show that a greedy path is traced if~$p$
  follows~$\vec{d_i}$ until~$v_i$. We
  have~$\angle p v_i t \geq 90\dg$. Otherwise, the projection
  point~$x$ of~$t$ on the line through~$p v_i$ lies in the interior of
  the segment~$p v_i$ and is a local minimum with respect to the
  distance to~$t$, which is not possible in a GRR; see
  Fig.~\ref{fig:greedy-path-alg}. Therefore, when~$p$ moves in the
  direction~$\vec{d_i}$ towards~$v_i$, its distance to~$t$ decreases
  continuously, and the forward tangent always has the minimum
  possible angle with respect to the direction towards~$t$.  This
  shows that Steps~4 and~5 of the procedure trace a greedy path and
  never return failure.

  It follows that, when moving along~$\rho$, point~$p$ either moves
  directly to~$t$ or slides along a boundary edge until it reaches one
  of the endpoints. Therefore, point~$p$ never reenters an edge and
  must finally reach~$t$. The forward tangent on~$\rho$ always
  satisfies the condition of Definition~\ref{def:greedy-path},
  therefore,~$\rho$ is a greedy $s$-$t$-path.

\end{proof}

A \emph{decomposition} of a polygonal region~$\mathcal P$ is a
partition of~$\mathcal P$ into polygonal regions~$\mathcal P_i$ with
no holes, $i=1, \dots, k$, such
that~$\bigcup_{i=1}^k \mathcal P_i = \mathcal P$ and
no~$\mathcal P_i$, $\mathcal P_j$ with $i \ne j$ share an interior
point. Recall that GRRs have no holes.  A decomposition of
$\mathcal P$ is a \emph{GRR decomposition} if each
component~$\mathcal P_i$ is a GRR.
We shall use the terms \emph{GRR decomposition} and \emph{GRR
  partition} interchangeably. 
Using the concept of a \emph{conflict relationship} between edges of a
polygonal region (see Fig.~\ref{fig:conflict}), Tan and Kermarrec give
a convenient characterization of GRRs.

\begin{definition}[Normal ray\hspace{0mm}]
  Let~$\mathcal P$ be a polygonal region, $e = uv$ a boundary edge
  and~$p$ an interior point of~$uv$. Let~$\nray{uv}{p}$ denote the ray
  with origin in~$p$ orthogonal to~$uv$, such that all points on this
  ray sufficiently close to~$p$ are not in the interior
  of~$\mathcal P$.
\end{definition}

  \begin{figure}[tb]
    \hfill
\subfloat[]{\includegraphics[scale=0.9,page=2]{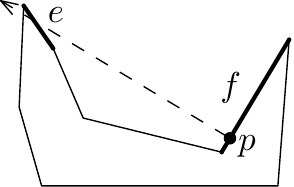}\label{fig:heuristic-bad}}
    \hfill
\subfloat[]{\includegraphics[scale=0.9,page=1]{fig/grr.pdf}\label{fig:conflict}}
\hfill\null
\caption{\protect\subref{fig:heuristic-bad}~The heuristic
  in~\cite{tk-ggrlssn-2012} splits a non-greedy region by a bisector
  at a maximum reflex angle. If the splits are chosen in order of
  their index, seven regions are created, although two is minimum
  (split only at~6). \protect\subref{fig:conflict}~Normal ray~$\nray{f}{p}$ and
  a pair of conflicting edges~$e$,~$f$.}
  \end{figure}

  We restate the definition of conflicting edges
  from~\cite{tk-ggrlssn-2012}.

\begin{definition}[Conflicting edges of a polygonal region]
  Let~$e$ and~$f$ be two edges of a polygonal region~$\mathcal P$.
  If for some point~$p$ in the interior of~$e$, $\nray{e}{p}$
  intersects~$f$, then~$e$ \emph{conflicts} with~$f$.
\end{definition}
A polygonal region is a GRR if and only if it has no pair of
conflicting edges;~\cite[Theorem~1]{tk-ggrlssn-2012}.
Furthermore, GRRs are known to have no holes.

Now consider a plane straight-line drawing~$\Gamma$ of a
graph~$G=(V,E)$. We identify the edges of~$G$ with the corresponding
line segments of~$\Gamma$ and the vertices of~$G$ with the
corresponding points.  Plane straight-line drawings can be considered
as infinitely thin polygonal regions. The routing happens along the
edges of~$\Gamma$, and we define GRRs for graph drawings as follows.
\begin{definition}[GRRs for plane straight-line drawings]
  A plane straight-line graph drawing~$\Gamma$ is a GRR if for any two
  points~$s \neq t$ on~$\Gamma$ there exists a point~$s'$ on an edge
  that also contains~$s$, such that~$|s't| < |st|$.
\end{definition}
Note that for an interior point~$p$ of an edge~$e$ of~$\Gamma$ there
exist two normal rays at~$p$ with opposite
directions. Let~$\normal{e}{p}$ denote the normal line to~$e$
at~$p$. We define conflicting edges of~$\Gamma$ as follows.

\begin{definition}[Conflicting edges of a plane straight-line drawing]
  Let~$e$ and~$f$ be two edges of a plane straight-line
  drawing~$\Gamma$.  If for some point~$p$ in the interior of~$e$,
  $\normal{e}{p}$ intersects~$f$, then~$e$ \emph{conflicts}
  with~$f$.
\end{definition}

Assume $\normal{e}{s}$ for an interior point $s$ on an edge~$e$
of~$\Gamma$ crosses another edge~$f$ in point~$t$. Then, any
movement along~$e$ starting from~$s$ increases the distance
to~$t$. We call such edges \emph{conflicting}. It is easy to see
that~$\Gamma$ is a GRR if it contains no pair of conflicting
edges. Obviously, such a drawing~$\Gamma$ contains no cycles.  In
fact, a straight-line drawing of a tree is \ic if and only if it has
no conflicting edges~\cite{acglp-sag-12}, which implies the following
lemma.

\begin{lemma}
  \label{lem:tree-grr}  
  The following two properties are equivalent for a straight-line
  drawing~$\Gamma$ to be a GRR.
  \begin{compactenum}[1)]
    \item $\Gamma$ is connected and has no conflicting edges; 
    \item $\Gamma$ is an \ic drawing of a tree.
  \end{compactenum}
\end{lemma}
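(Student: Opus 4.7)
The plan is to show that both conditions (1) and (2) characterize GRRs, chaining through the cited result of Alamdari et al.\ that a straight-line tree drawing is \ic if and only if it has no conflicting edges.

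First, I would establish (1)$\Leftrightarrow$(2). The direction (2)$\Rightarrow$(1) is immediate: an \ic drawing of a tree is connected, and by the Alamdari et al.\ characterization it has no conflicting edges. For (1)$\Rightarrow$(2) I first argue that a drawing with no conflicting edges contains no cycles. If $C$ were such a cycle, it would bound a topological disk $R$ in the plane. Take any edge $e$ of $C$ and an interior point $p$ of $e$; exactly one of the two normal directions at $p$ enters $R$, and because $R$ is bounded, this normal ray must exit $R$ by crossing its boundary, which is made up of edges of $C$. That crossing edge conflicts with $e$, a contradiction. Hence $\Gamma$ is a forest, and since it is also connected, it is a tree; applying Alamdari et al.\ once more yields that $\Gamma$ is \ic.

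Next, I would tie the two conditions to the GRR property. For (2)$\Rightarrow$GRR: for any two points $s \neq t$ on $\Gamma$, being \ic provides a self-approaching $s$-$t$-path, and any point $s'$ on its first segment after $s$ satisfies $|s't| < |st|$, which is exactly the GRR condition from Definition of GRRs for plane straight-line drawings. For GRR$\Rightarrow$(1): no conflicting edges follows directly from the definition, since a conflicting pair $e,f$ yields an interior point $s$ of $e$ and a point $t$ on $f$ for which every movement from $s$ along $e$ strictly increases the distance to $t$. Connectedness follows from a compactness argument: if $\Gamma$ had two components, pick $s,t$ in distinct components minimizing $|st|$; then $s$ is the closest point in its component to $t$, so no point $s'$ on any edge through $s$ can have $|s't| < |st|$, contradicting GRR.

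The main obstacle I expect is the acyclicity argument, that is, making the ``normal must exit the bounded face'' reasoning fully rigorous for a plane straight-line drawing where the cycle's interior may contain further parts of $\Gamma$; once that topological step is granted, every other implication is a short consequence of the definitions or of the cited result of Alamdari et al.\ together with the self-approaching property of \ic paths.
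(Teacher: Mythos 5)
Your proof is correct and follows essentially the route the paper intends: the paper gives no written-out proof, instead deriving the lemma from exactly the observations you make rigorous --- a conflicting pair defeats the GRR condition because the target $t$ lies on the normal at $s$ (so every move along the unique edge through $s$ increases $|st|$), absence of conflicts forces acyclicity via the ``normal ray must exit the bounded face'' argument, and the rest is the cited characterization of Alamdari et al. The only detail worth adding is in your step (2)~$\Rightarrow$~GRR: the increasing-chord property of a drawing is defined over pairs of \emph{vertices}, so for arbitrary points $s,t$ on edges you should remark that the unique $s$-$t$ path in the tree is a subcurve of a vertex-to-vertex increasing-chord path and is therefore itself increasing-chord, which then yields the required point $s'$ with $|s't|<|st|$.
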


Since every individual edge in a straight-line drawing is a GRR, the following
observation can be made on the worst-case size of a minimum GRR
partition.
\begin{observation}
  A plane straight-line drawing~$\Gamma$ of graph~$G=(V,E)$, $|E|=m$,
  has a GRR decomposition of size~$m$.
\end{observation}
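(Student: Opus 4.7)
The plan is to exhibit an explicit decomposition of size $m$, namely the partition consisting of each edge of $\Gamma$ taken as its own component. The verification then reduces to checking two things: that each single edge is a GRR, and that the collection of edges is a valid decomposition in the sense defined for straight-line drawings.

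For the first point, I would apply Lemma~\ref{lem:tree-grr}: a single edge, viewed as a one-edge straight-line drawing, is connected and trivially has no pair of conflicting edges (there is only one edge), hence it is a GRR. Alternatively, and equivalently, one can verify the definition directly: for any two points $s \neq t$ on a segment $uv$, moving $s$ along $uv$ toward $t$ yields a point $s'$ with $|s't| < |st|$, so the drawing consisting of this single segment satisfies the GRR condition for plane straight-line drawings.

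For the second point, I would note that the $m$ edges of $\Gamma$ cover $\Gamma$ by definition, and any two distinct edges meet at most at a common endpoint, i.e., at a single point, which is not an interior point of either edge. So the edges form a partition of $\Gamma$ in the relevant sense, yielding a GRR decomposition of size $m$.

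There is no real obstacle here; the statement is essentially an immediate consequence of the definitions combined with Lemma~\ref{lem:tree-grr}. The only minor point worth stating carefully is that the notion of ``decomposition'' adapts from polygonal regions to one-dimensional straight-line drawings by interpreting ``no shared interior point'' along the edges, under which the edge partition is visibly valid.
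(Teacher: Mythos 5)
Your proposal is correct and matches the paper's justification, which simply notes that every individual edge of a straight-line drawing is a GRR and hence the edge-by-edge partition is a GRR decomposition of size $m$. Your extra care in checking that distinct edges meet only at endpoints is a harmless elaboration of the same argument.
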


Therefore, if~$G$ is a tree, the drawing~$\Gamma$ has a GRR partition
of size~$n-1$ for~$n=|V|$.

\subsection{Splitting graph drawings at non-vertices.}\label{sec:split}

Note that in a GRR partition of a plane straight-line drawing~$\Gamma$
of a graph~$G=(V,E)$, an edge~$e \in E$ does not necessarily lie in
\emph{one} GRR. Pieces of the same edge can be part of different
GRRs. Allowing splitting edges at intermediate points might result in
smaller GRR partitions; see Fig.~\ref{fig:split-nonvertex}. In this
section, we discuss splitting~$\Gamma$ at non-vertices. We will show
that there are only a discrete set of $O(n^2)$ points where we might
need to split edges.

\begin{figure}[tb]
  \subfloat[]{\includegraphics[page=1]{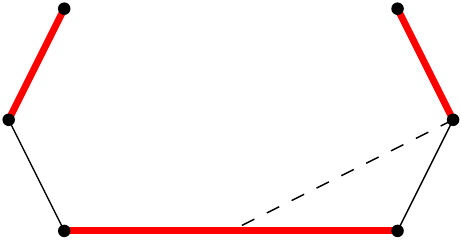}
    \label{fig:split-nonvertex:1}} \hfill
  \subfloat[]{\includegraphics[page=2]{fig/grr-partition-split.pdf}
    \label{fig:split-nonvertex:2}}
  \caption{Splitting at non-vertices results in a smaller
    partition. \protect\subref{fig:split-nonvertex:1}~No pair of the
    thick red edges can be in the same GRR. Therefore, if no edge
    splits are allowed, every GRR partition has size at
    least~3. \protect\subref{fig:split-nonvertex:2}~Splitting the
    longest edge results in a GRR partition of size~2.}
  \label{fig:split-nonvertex}
\end{figure}

\begin{definition}[Subdivided drawing~$\Gamma_s$]
  Let~$\Gamma_s$ be the drawing created by subdividing edges
  of~$\Gamma$ as follows. For every pair of original
  edges~$u_1 u_2, u_3 u_4 \in E$, let~$\ell_i$ be the normal
  to~$u_1 u_2$ at~$u_i$, $i=1,2$. If~$\ell_i$ intersects~$u_3 u_4$, we
  subdivide~$u_3 u_4$ at the intersection.
\end{definition}

Since we consider only the original edges of~$\Gamma$, the
subdivision~$\Gamma_s$ has $O(n^2)$ vertices.

\begin{lemma}
  Any GRR decomposition of~$\Gamma$ with potential edge splits can be
  transformed into a GRR decomposition of~$\Gamma_s$ in which no edge
  of~$\Gamma_s$ is split, such that the size of the decomposition does
  not increase.
  \label{lem:split}
\end{lemma}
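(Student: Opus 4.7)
The plan is to start from an arbitrary GRR decomposition $D$ of~$\Gamma$ and iteratively slide every split point of $D$ that lies in the interior of an edge of $\Gamma_s$ to one of the two $\Gamma_s$-vertices bounding that edge, preserving validity and not increasing the size of the decomposition. The crucial structural property of $\Gamma_s$ I would establish first is that for every edge $e'$ of $\Gamma_s$ contained in an original edge $e$ of $\Gamma$ and every other original edge $g$ of $\Gamma$, the conflict status of interior points of $e'$ with $g$ is uniform along $e'$: the subdivision rule defining $\Gamma_s$ places a vertex on $e$ at precisely those points where the conflict pattern with some other original edge can change, so nothing can change between two consecutive $\Gamma_s$-vertices on $e$.

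With this uniformity in hand, I would take any split point $s$ in the interior of an edge $e'$ of $\Gamma_s$, call $\alpha$ and $\beta$ the pieces of $e'$ immediately to the left and right of $s$, and let $A$ and $B$ be the GRRs of $D$ containing $\alpha$ and $\beta$, respectively. I would argue that at least one of the two local relocations, moving $s$ to the right endpoint of $e'$ (absorbing $\beta$ into $A$) or to the left endpoint (absorbing $\alpha$ into $B$), preserves validity. The uniformity ensures the absorbed piece carries the same conflict profile as the already-contained piece with respect to every entire original edge, so no new conflict with a full original edge of $\Gamma$ can arise in the absorbing GRR. Iterating the relocation over all interior splits then yields the desired $D'$ of $\Gamma_s$, and since each relocation only reassigns portions between two existing GRRs, $|D'| \le |D|$.

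The main obstacle I anticipate is that the absorbing GRR may contain only portions of other split original edges rather than entire ones, so uniformity on $e'$ does not directly rule out new conflicts with such sub-pieces. To resolve this, I would show that if relocating $s$ in one direction is blocked by a conflict between the absorbed piece and some sub-piece $g$ of an original edge $g_0$ in the absorbing GRR, then the complementary part of $g_0$ must sit in another GRR, and the conflict-freeness of the original decomposition together with uniformity of $e'$ with respect to $g_0$ then forces the opposite relocation direction to succeed. Getting this bidirectional case analysis right is where the bulk of the technical work lies.
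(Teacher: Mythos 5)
Your overall strategy --- slide every interior split point of a $\Gamma_s$-edge to one of its endpoints by absorbing one piece into a neighbouring GRR, without increasing the count --- is the same as the paper's, but the uniformity lemma you build it on is false in one of the two directions you need. The conflict relation is directional: a piece of $e$ conflicts with $g$ if a normal of $e$ at one of its interior points meets $g$, and $g$ conflicts with that piece if a normal of $g$ meets it. The subdivision defining $\Gamma_s$ only controls the second direction: it inserts a vertex on $e$ wherever an \emph{endpoint-normal of $g$} crosses $e$, so along a $\Gamma_s$-edge $e'\subseteq e$ the predicate ``this point lies in the slab swept by the normals of $g$'' is indeed constant. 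But the predicate ``the normal of $e$ at this point crosses $g$'' changes at the \emph{orthogonal projections of $g$'s endpoints onto the line of $e$}, and those are not subdivision points. Concretely, take $e$ from $(0,0)$ to $(10,0)$ and $g$ from $(3,5)$ to $(4,6)$: the endpoint-normals of $g$ meet $e$ at $x=8$ and $x=10$, so $[0,8]\times\{0\}$ is a single edge of $\Gamma_s$, yet the vertical normals of $e$ hit $g$ exactly for $x\in[3,4]$, strictly inside that edge. Hence absorbing a piece of $e'$ into a GRR can create a conflict of the enlarged piece \emph{against} an entire original edge of that GRR, which your uniformity argument claims is impossible; your fallback case analysis only targets conflicts caused by sub-pieces of split edges and does not repair this.

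What actually makes the absorption safe is not uniformity along $e'$ but the increasing-chord structure of the absorbing component itself. If $B$ contains the piece $xv$ of the $\Gamma_s$-edge $uv$ and absorbs $ux$, then every point of $B - xv$ is reached from $x$ through $v$, so by the halfplane property of increasing-chord trees it lies beyond the normal to $uv$ at $v$ and cannot be met by any normal of the extended edge $uv$. For the converse direction, each edge piece of $B$ confines $v$ to the halfplane beyond the normal of its underlying original edge at the endpoint nearer $v$, and the construction of $\Gamma_s$ guarantees that this halfplane boundary does not cross the interior of $uv$, so all of $uv$ stays outside that edge's normal slab. This (the paper's argument) is what you need in place of the uniformity claim; your relocation scheme and the bookkeeping of the decomposition size are otherwise fine.
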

\begin{proof}
  Consider edge~$u v$ of the subdivision~$\Gamma_s$, a point~$x$ in
  its interior and assume an \ic component~$C$ (green in
  Fig.~\ref{fig:trees:splits}) contains~$vx$, but not~$ux$. We claim
  that we can reassign~$ux$ to~$C$.  Note that iterative application of this
  claim implies the lemma.

\begin{figure}[tb]
  \hfill
\subfloat{
  \includegraphics[page=2]{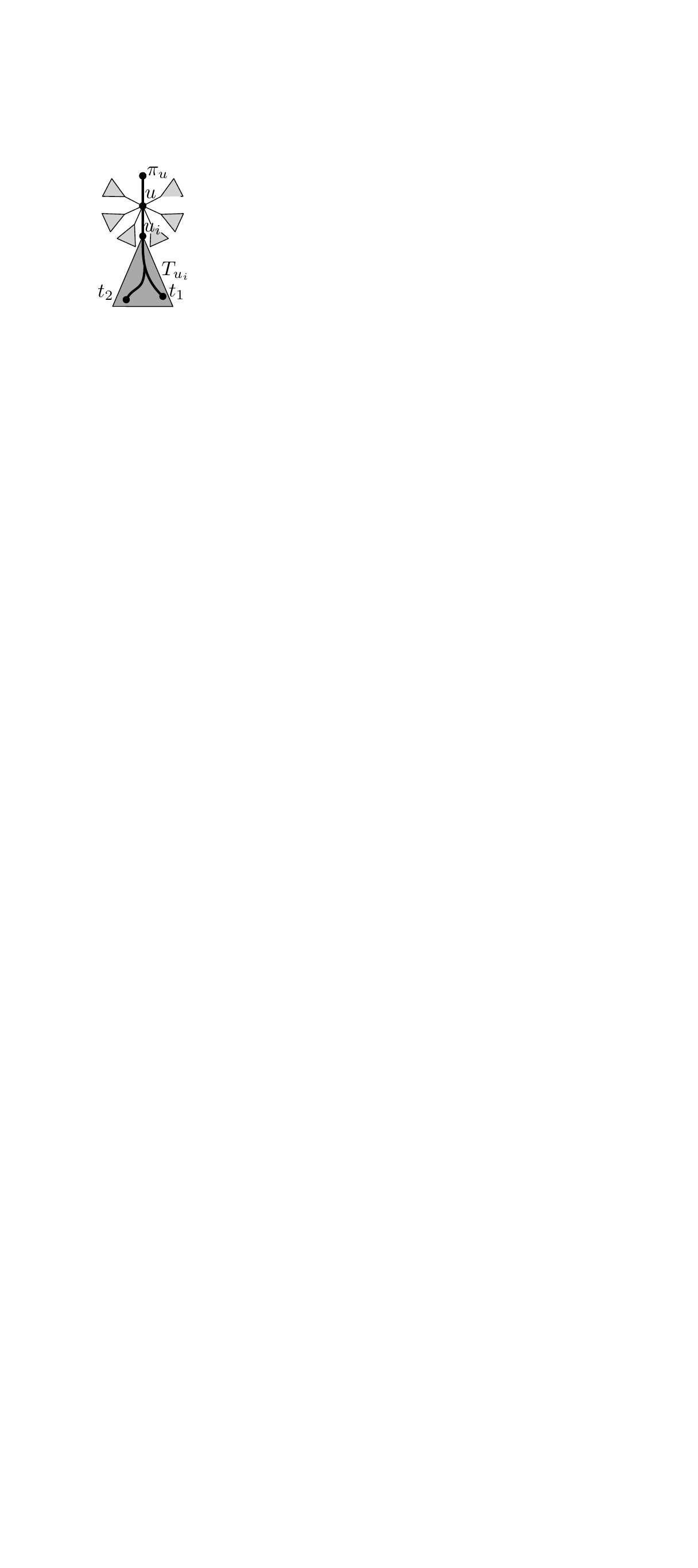}} 
\hfill\null
\caption{Proof of Lemma~\ref{lem:split}. Segment~$u x$ can be added to
  the thick green GRR~$C$, such that the entire edge~$u v$
  of~$\Gamma_s$ is in one GRR.}
\label{fig:trees:splits}
\end{figure}

For points $p,q \in \mathbb R^2, p \neq q$, let~$\hp{p}{q}$ denote the
halfplane not containing $p$ bounded by the line through~$q$
orthogonal to the segment~$pq$. Note that if segment~$p q$ is on the
path from vertex~$p$ to vertex~$r$ in an \ic tree drawing then~$r \in \hp{p}{q}$~\cite{acglp-sag-12}.

Let $u_2 v_2$ be an original edge of~$\Gamma$ such that~$v_2$ is
in~$C$, as well as a subsegment~$y v_2$ of~$u_2 v_2$ with a non-zero
length containing~$v_2$. Since segment~$y v_2$ is on the $y$-$v$-path 
in~$C$, the halfplane $\hp{u_2}{v_2} = \hp{y}{v_2}$ contains~$v$, and its boundary
does not cross~$uv$ by the construction of $\Gamma_s$.  Thus,
$\hp{u_2}{v_2}$ contains~$uv$. In this way, we have shown that no
normal ray of an edge of~$C$ crosses~$uv$.

  Furthermore, $\hp{u}{v} = \hp{x}{v}$. Since~$C - xv$ lies entirely
  in~$\hp{x}{v} = \hp{u}{v}$, this shows that no normal of~$uv$
  crosses another edge of~$C$. It follows that the union of~$C$
  and~$ux$ contains no conflicting edges and, therefore, is \ic by
  Lemma~\ref{lem:tree-grr}.

  Finally, removing~$ux$ from the component~$C'$ containing it doesn't
  disconnect them, since no edge or edge part is attached to~$x$ (or
  an interior point of~$ux$). Since~$C' - ux$ is connected and~$C'$ is
  a GRR, $C' - ux$ is also a GRR.
\end{proof}

\subsection{Types of GRR contacts in plane straight-line graph
  drawings}

We distinguish the types of contacts that two GRRs can have in a GRR
partition of a plane straight-line graph drawing.

\begin{definition}[Proper, non-crossing and crossing contacts]
  Consider two drawings~$\Gamma_1$, $\Gamma_2$ of trees with the only
  common point~$p$.
  \begin{compactenum}[1)]
  \item $\Gamma_1$ and~$\Gamma_2$ have a \emph{proper contact} if~$p$
    is a leaf in at least one of them.
  \item $\Gamma_1$ and~$\Gamma_2$ have a \emph{non-crossing contact}
    if in the clockwise ordering of edges of~$\Gamma_1$ and~$\Gamma_2$
    incident to~$p$, all edges of~$\Gamma_1$ (and, thus, also
    of~$\Gamma_2$) appear consecutively.
  \item $\Gamma_1$ and~$\Gamma_2$ are~\emph{crossing} or have a
    \emph{crossing contact} if in the clockwise ordering of edges
    of~$\Gamma_1$ and~$\Gamma_2$ incident to~$p$, edges of~$\Gamma_1$
    (and, thus, also of~$\Gamma_2$) appear non-consecutively.
\end{compactenum}
\label{def:contacts}
\end{definition}

The first part of Definition~\ref{def:contacts}  allows GRRs to only
have contacts as shown in Fig.~\ref{fig:trees:proper} and forbids contacts
as shown in
Fig.~\ref{fig:trees:non-proper:non-crossing},~\ref{fig:trees:non-proper:crossing}. The
second part allows contacts as those in
Fig.~\ref{fig:trees:non-proper:non-crossing}, but forbids the contacts
in Fig.~\ref{fig:trees:non-proper:crossing}.

Note that a contact of two trees~$\Gamma_1, \Gamma_2$ with a single
common point~$p$ is either crossing or non-crossing. Moreover, if the contact of~$\Gamma_1$ and~$\Gamma_2$ is proper, then it is
necessarily non-crossing, since for a proper contact,~$\Gamma_1$
or~$\Gamma_2$ has only one edge incident to~$p$, therefore, all edges
of~$\Gamma_1$ and of~$\Gamma_2$ appear consecutively around~$p$.

We shall show that for trees, restricting ourselves to GRR
decompositions with only non-crossing contacts makes the otherwise
\NP-complete problem of finding a minimum GRR partition solvable in
polynomial time.

\begin{figure}[tb]
  \hfill
\subfloat[]{
  \includegraphics[page=3]{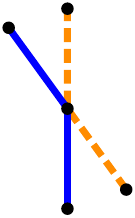}\label{fig:trees:proper}}
  \hfill
\subfloat[]{
  \includegraphics[page=1]{fig/non-proper-contacts.pdf}\label{fig:trees:non-proper:non-crossing}}
  \hfill 
\subfloat[]{
  \includegraphics[page=2]{fig/non-proper-contacts.pdf}\label{fig:trees:non-proper:crossing}}
\hfill\null
\caption{\protect\subref{fig:trees:proper}~Proper GRR contact;
  \protect\subref{fig:trees:non-proper:non-crossing}~non-crossing contact which is not proper and
  \protect\subref{fig:trees:non-proper:crossing}~crossing contact.
  \label{fig:trees:contacts}}
\end{figure}

\section{NP-completeness for graphs with cycles}

We show that finding a minimum decomposition of a plane straight-line
drawing~$\Gamma$ into \ic trees is \NP-hard. This extends the
\NP-hardness result by Tan and Kermarrec~\cite{tk-ggrlssn-2012} for
minimum GRR decompositions of polygonal regions with holes to plane
straight-line drawings.

Note that in the graph drawings used for our proof, all GRRs will
have \emph{proper contacts}; see
Definition~\ref{def:contacts}. Moreover, the graph drawings can be
turned into thin polygonal regions in a natural way by making them
slightly ``thicker'', and the proof can be reused as another proof for
the \NP-hardness result in~\cite{tk-ggrlssn-2012}.

Both our \NP-hardness proof and the proof in~\cite{tk-ggrlssn-2012}
are reductions from the \NP-complete problem \textsc{Planar
  3SAT}~\cite{l-pftu-82}.  Recall that a Boolean 3SAT formula
$\varphi$ is called \emph{planar}, if the corresponding variable
clause graph $G_\varphi$ having a vertex for each variable and for
each clause and an edge for each occurrence of a variable (or its
negation) in a clause is a planar graph.  In fact, $G_\varphi$ can be
drawn in the plane such that all variable vertices are aligned on a
vertical line and all clause vertices lie either to the left or to the
right of this line and connect to the variables via E- or
$\exists$-shapes~\cite{kr-pcr-92}; see Fig.~\ref{fig:planar3sat}.

\begin{figure}[tb]
  \centering
  \includegraphics[]{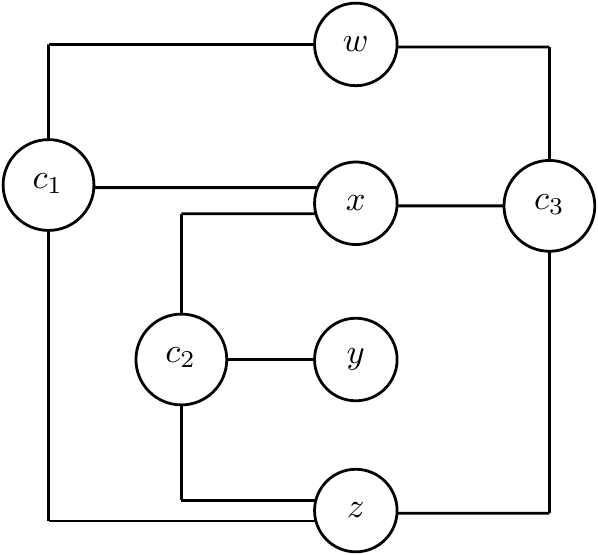}
  \caption{An orthogonal graph drawing of the variable-clause
    graph~$G_\varphi$ for a planar 3SAT formula
    $\phi = (w \vee x \vee z) \wedge~(\bar x \vee y \vee \bar z)
    \wedge~(\bar w \vee \bar x \vee \bar z)$.}
  \label{fig:planar3sat}
\end{figure}

The basic idea of the gadget proof is as follows. Using a number of
building blocks, or \emph{gadgets}, we construct a plane straight-line
drawing~$\Gamma_\varphi$, whose geometry mimics the variable-clause
graph~$G_\varphi$ drawn as described above. We
construct~$\Gamma_\varphi$ in a way such that its minimum GRR
decompositions are in correspondence with the truth assignments of the
\textsc{Planar 3SAT} formula~$\varphi$.

The variable gadgets in~\cite{tk-ggrlssn-2012} are cycles formed by
T-shaped polygons which can be made arbitrarily thin. Thus, in the
case of plane straight-line drawings we can use very similar variable
gadgets (see Fig.~\ref{fig:cycles:variable}). The clause gadgets
in~\cite{tk-ggrlssn-2012}, however, are squares, at which three
variable cycles meet. This construction cannot be adapted for
straight-line plane drawings, and we have to construct a significantly
different clause gadget; see Fig.~\ref{fig:cycles:clause}.

\begin{figure}[tb]
  \hfill
  \subfloat[]{\includegraphics[scale=0.9,page=1]{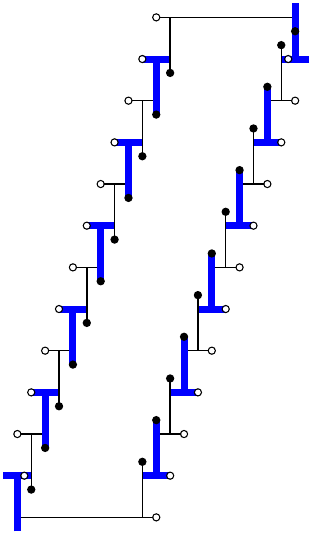}\label{fig:cycles:var-true}}
  \hfill
  \subfloat[]{\includegraphics[scale=0.9,page=2]{fig/variable-1.pdf}\label{fig:cycles:var-false}}
  \hfill
  \subfloat[]{\includegraphics[scale=0.9,page=3]{fig/variable-1.pdf}\label{fig:cycles:var-arm-stubs}}
  \hfill\null
  \caption{Variable gadget and the two possibilities to pair vertical
    and horizontal segments to make GRRs:
    \protect\subref{fig:cycles:var-true}~\emph{true} variable state:
    $\top$-shapes and $\bot$-shapes;
    \protect\subref{fig:cycles:var-false}~\emph{false} variable state:
    $\dashv$-shapes and~$\vdash$-shapes.
    \protect\subref{fig:cycles:var-arm-stubs} Extending the variable
    gadgets to create the upper, middle and lower arm gadgets by
    substituting T-shapes of the variable gadget.}
  \label{fig:cycles:variable}
\end{figure} 

\begin{figure}[tb]
  \hfill
  \subfloat[]{\includegraphics[scale=0.9,page=1]{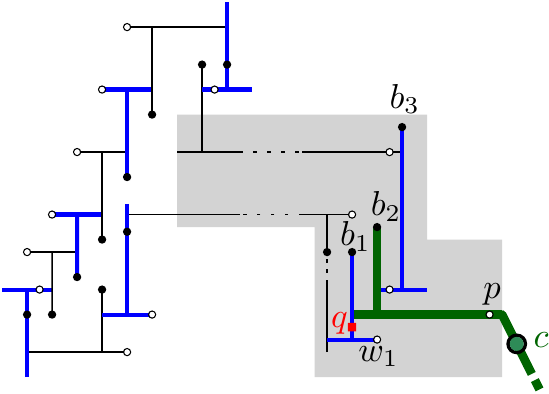}\label{fig:cycles:arm-pos-true}}
  \hfill
  \subfloat[]{\includegraphics[scale=0.9,page=2]{fig/arm-1.pdf}\label{fig:cycles:arm-pos-false}}
  \hfill\null
  \caption{Variable gadget with a right upper positive arm (shaded
    region). \protect\subref{fig:cycles:arm-pos-true}~\emph{true} and
    \protect\subref{fig:cycles:arm-pos-false}~\emph{false} states.}
  \label{fig:cycles:arm}
\end{figure} 

\begin{figure}[tb]
  %\hfill
  \subfloat[]{\includegraphics[scale=1.05,page=1]{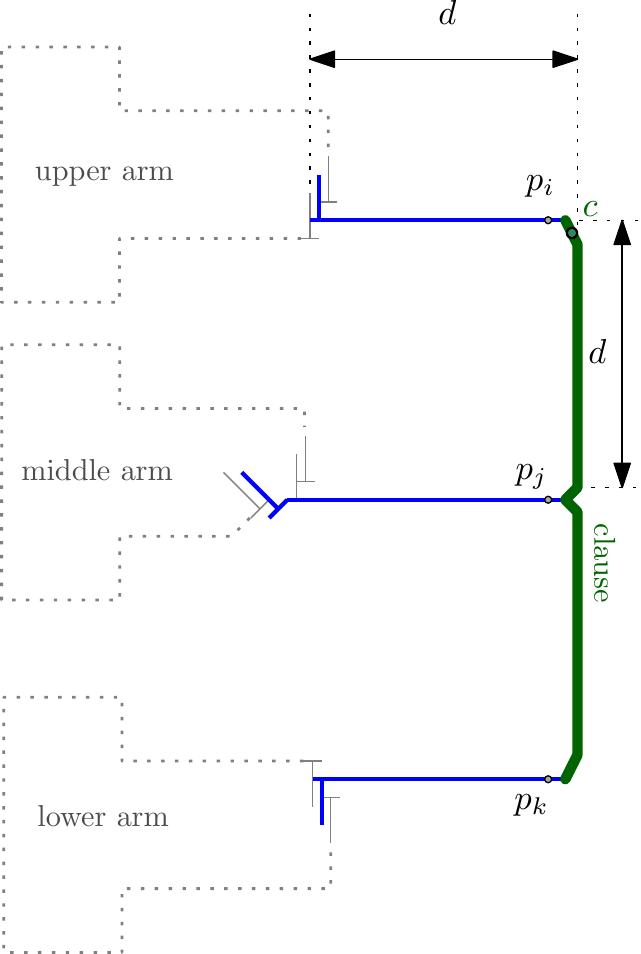}\label{fig:cycles:clause:3true}}
  \hfill
  \subfloat[]{\includegraphics[scale=1.05,page=2]{fig/clause.pdf}\label{fig:cycles:clause:3false}}
  %\hfill\null
  \caption{Clause gadget (thick green). (a) \emph{true} and (b)
    \emph{false} state of the involved literals.}
  \label{fig:cycles:clause}
\end{figure}

\begin{figure}[tb]
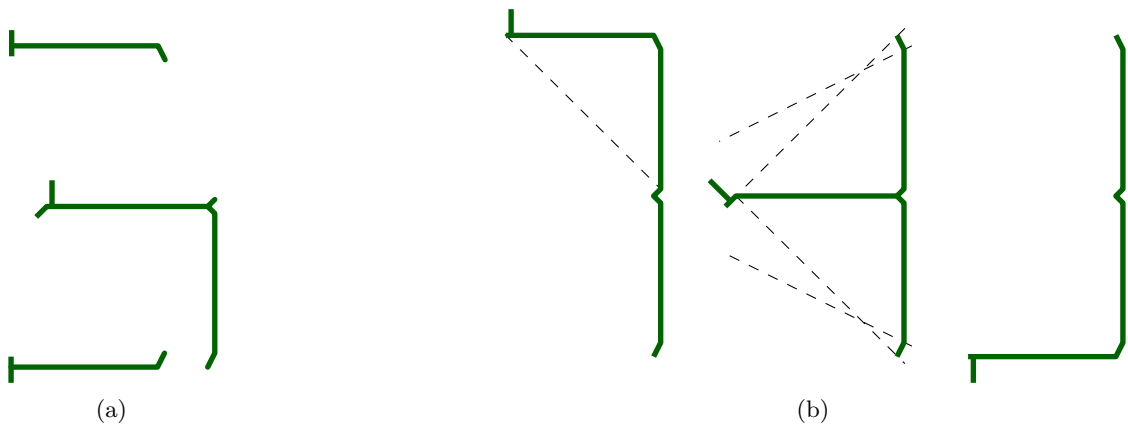

  %\hfill
  \subfloat[]{\includegraphics[scale=1,page=4]{fig/clause.pdf}\label{fig:cycles:clause:merged:false}}
  \hfill
  \subfloat[]{\includegraphics[scale=1,page=3]{fig/clause.pdf}\label{fig:cycles:clause:merged:true}}
  %\hfill\null
  \caption{Merging the clause gadget with GRRs from the arm
    loops. (a)~None of the three components is a GRR. (b)~ All three
    components are GRRs; see the dashed normals. }
  \label{fig:cycles:clause-merged}
\end{figure}

We define a variable gadget as a cycle of alternating vertical and
horizontal segments. The tip of each segment touches an interior point
of the next segment. We can join pairs of consecutive segments into
a GRR by assigning each vertical segment either to the next
or to the previous horizontal segment on the cycle. In this way, the
variable loop is partitioned either in $\top$-shapes and $\bot$-shapes
or in~$\dashv$-shapes and~$\vdash$-shapes; see
Fig.~\ref{fig:cycles:variable}.

Consider a variable gadget consisting of~$k$ T-shapes as shown in
Fig.~\ref{fig:cycles:variable}.  On each T-shape we place one black
and one white point as shown in the figure. The points are placed in
such a way that neither two black points nor two white points can be
in one \ic component. Thus, a minimum GRR decomposition of a variable
gadget contains at least~$k$ components. If it contains exactly~$k$
components, then each component must contain one black and one white
point, and there are exactly two possibilities. Each black point has
exactly two white points it can share a GRR with, and once one pairing
is picked, it fixes all the remaining pairings. The corresponding
possibilities are shown in Fig.~\ref{fig:cycles:var-true}
and~\ref{fig:cycles:var-false} and will be used to encode the values
\emph{true} and \emph{false}, respectively. For the pairing of the
black and white points corresponding to the true state, the variable
loop can be partitioned in $\top$-shapes and $\bot$-shapes, and for
the pairing corresponding to the false state, it can be partitioned
in~$\dashv$-shapes and~$\vdash$-shapes.

To pass the truth assignment of a variable to a clause it is part of,
we use \emph{arm} gadgets. Arm gadgets are extensions of the variable
gadget.
%
% For example, the arm gadget in Fig.~\ref{fig:cycles:arm} was created
% by substituting the dashed segment in Fig.~\ref{fig:cycles:variable}
% by a sequence of segments (shaded region).
%
To add an arm gadget to the variable, we substitute several $\top$- or
$\bot$-shapes from the variable loop by a more complicated
structure. Fig.~\ref{fig:cycles:var-arm-stubs} shows such extensions
for all arm types pointing to the right, the other case is
symmetric. In this way, for a variable, we can create as many arms as
necessary.
 Each variable loop will have one arm extension for each
occurrence of the corresponding variable in a clause in~$\varphi$. The
working principle for the arm gadgets is the same as for the variable
gadgets. The drawing created by the variable cycle and the arm
extensions (the \emph{variable-arm loop}) will once again contain
distinguished black and white points, such that only one black and one
white point can be in a GRR. However, for variable-arm loops, the
cycles formed by segments of varying orientation are more complicated
than the loop in Fig.~\ref{fig:cycles:variable}. For example, for some
arm types we use segments of slopes~$\pm 1$ in addition to vertical
and horizontal segments.

In total twelve variations of the arm gadget will be used, depending
on the position of the literal in the clause, the position of the
clause, and whether the literal is negated or not.  Since in
$G_\varphi$ each clause $c$ connects to three variables, we denote
these variables or literals as the \emph{upper}, \emph{middle}, and
\emph{lower} variables of $c$ depending on the order of the three
edges incident to~$c$ in the one-bend orthogonal drawing
of~$G_\varphi$ used by Knuth and Raghunathan~\cite{kr-pcr-92}; see
Fig.~\ref{fig:planar3sat}.
Similarly, an arm of $c$ is called an \emph{upper}, \emph{middle}, or
\emph{lower} arm if it belongs to a literal of the same type in~$c$.
An arm is called a \emph{right} (resp. \emph{left}) arm if it belongs
to a clause that lies to the right (resp. to the left) of the vertical
variable line.
Finally, an arm of $c$ is \emph{positive} if the corresponding literal
is positive in $c$ and it is \emph{negative} otherwise.

The basic principle of operation of any arm gadget is the same; as an
example consider the right upper positive arm in
Fig.~\ref{fig:cycles:arm}.
Figures~\ref{fig:cycles:arm-2}, \ref{fig:cycles:arm-3},
\ref{fig:cycles:arm-inv} and the proof of
Property~\ref{lem:cycles:arms} cover the remaining arm types.

The positive and the negative arms are differentiated by an additional
structure that switches the pairing of the black and white points
close to the part of the arm that touches the clause gadget; for
example, compare Fig.~\ref{fig:cycles:arm-pos-false}
and~\ref{fig:cycles:arm-inv-1}. By this inversion, for a fixed truth
assignment of the variable, the~$\top$- and~$\bot$-shapes next to the
clause are turned into~$\vdash$- and~$\dashv$-shapes, and vice
versa. In this way, the inverted truth assignment of the corresponding
variable is passed to the clause.

Note that each arm can be arbitrarily extended both horizontally and
vertically to reach the required point of its clause gadget. We select
again black and white points (also called \emph{distinguished} points)
on the line segments of the arm gadget.

The \emph{clause gadget} (the thickest green polyline in
Fig.~\ref{fig:cycles:clause}, partly drawn in
Fig.~\ref{fig:cycles:arm}) is a polyline which consists of six
segments. The first segment has slope~2, the second is vertical, the
third has slope~$-1$, the fourth has slope~1, the fifth is vertical,
and the sixth has slope~$-2$. Each clause gadget connects to the long
horizontal segments of the arms of three variable gadgets. The three
connecting points of the clause gadget are the start and end of the
polyline as well as its center, which is the common point of the two
segments with slopes~$\pm 1$.

We shall prove the following property which is crucial for our
construction.

\begin{property}
  \begin{compactenum}
  \item Consider a drawing~$\Gamma_i$ of a variable gadget together
    with all of its arms. Then, neither two black nor two white points
    on~$\Gamma_i$ can be in one GRR. In a minimum GRR decomposition
    of~$\Gamma_i$, each component has one black and one white point,
    and exactly two such pairings of points are possible, one for each
    truth assignment.
  \item Consider two such drawings~$\Gamma_i$, $\Gamma_j$ for two
    different variables. Then, no distinguished point of~$\Gamma_i$
    can be in the same GRR as a distinguished point of~$\Gamma_j$.
  \end{compactenum}
  \label{lem:np:black-white-points}
\end{property}

\begin{proof}
Part~(1) of Property~\ref{lem:np:black-white-points} extends the same
property that we already showed for variable gadgets without arms to
the case including all arms. It is an immediate consequence of the way
we constructed the arm gadgets and placed the distinguished points;
see Figures~\ref{fig:cycles:arm}, \ref{fig:cycles:arm-2},
\ref{fig:cycles:arm-3}, \ref{fig:cycles:arm-inv}.

Part~(2) follows from the way the arms are connected by a clause,
i.e., in Fig.~\ref{fig:cycles:clause} no pair of points from~$p_i$,
$p_j$, $p_k$ can be in the same GRR, since the three points lie on
three horizontal segments and are vertically collinear.
\end{proof}

The clause gadget is connected to the arm by a horizontal segment with
a distinguished point~$p$ on its end, which is either black or white
depending on the arm type. Each clause has one special point~$c$
chosen as shown in Fig.~\ref{fig:cycles:clause}.

We show that~$c$ and~$p$ can be in the same GRR in a minimum GRR
decomposition if and only if the variable gadget containing~$p$ is in
the state that satisfies the clause.

\begin{property}
  \begin{compactenum}
  \item In a minimum GRR decomposition, the special point~$c$ of a
    clause gadget can share a GRR with a black or white point of an
    arm gadget if and only if the corresponding literal is in the
    \emph{true} state.
  \item If a variable assignment satisfies a clause, then its entire
    clause gadget can be contained in a GRR of an arm corresponding to
    a \emph{true} literal.
\end{compactenum}
  \label{lem:cycles:arms} 
\end{property}

\begin{proof}
  For each arm gadget we select a special \emph{red} point~$q$; see
  Fig.~\ref{fig:cycles:arm}. Point~$q$ is neither white nor black.
  By Property~\ref{lem:np:black-white-points}, in a minimum GRR
  decomposition, point~$q$ must be in a GRR together with one black
  and one white point.

  For the various arm types, if points~$q$ and~$p$ are in the same
  GRR, we shall show that this GRR cannot contain the entire clause
  gadget and, in particular, cannot contain point~$c$. This is
  illustrated in Fig.~\ref{fig:cycles:clause:merged:false}.

  Furthermore, we shall show that if the literal is in the \emph{true}
  state, then points~$p$ and~$q$ are in different GRRs, and the GRR
  containing~$p$ can be merged with the entire clause gadget,
  including $c$.
  For example, in Fig.~\ref{fig:cycles:clause:3true}, each variable is
  in a state that satisfies the clause. The lengths of the thick
  segments are chosen such that each thick blue component can be
  merged with the clause gadget (thickest green) into a single GRR, as
  shown in Fig.~\ref{fig:cycles:clause:merged:true}.

 \medskip

$i)$~We first show the lemma for a positive right upper arm.  We use the
notation from Fig.~\ref{fig:cycles:arm} to refer to the distinguished
points. In the \emph{true} state of the variable (see
Fig.~\ref{fig:cycles:arm-pos-true}), points~$w_1$, $b_1$ and~$q$ are
in the same GRR. Points~$b_2$ and~$p$ are in another GRR (e.g., the
thickest green one in Fig.~\ref{fig:cycles:arm}) which can contain the
distinguished point~$c$ of the clause.

In the \emph{false} state of the variable (see
Fig.~\ref{fig:cycles:arm-pos-false}), the points~$b_1$ and~$p$ are in
the same GRR.  Moreover, point~$q$ can share a GRR with exactly one
point from~$b_1$, $b_2$ or~$b_3$. But if~$q$ were with~$b_2$ or~$b_3$,
then~$b_1$ would be disconnected from any white point, a contradiction
to the minimality of the decomposition. Thus, points~$q$, $b_1$
and~$p$ are in the same GRR, which cannot contain a point of the
clause.
	
  \begin{figure}[htb] \hfill
    \subfloat[]{\includegraphics[page=1]{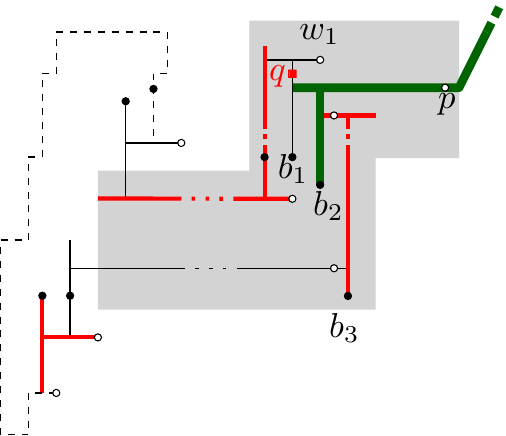}\label{fig:cycles:arm-neg-false-2}}
    \hfill
    \subfloat[]{\includegraphics[page=2]{fig/arm-2.pdf}\label{fig:cycles:arm-neg-true-2}}
    \hfill\null
    \caption{Right lower negative arm
      gadget. \protect\subref{fig:cycles:arm-neg-false-2}~\emph{false}
      and \protect\subref{fig:cycles:arm-neg-true-2}~\emph{true}
      variable state. Thin dashed lines indicate that the variable-arm
      loop continues.}
    \label{fig:cycles:arm-2}
  \end{figure}
	
  \medskip
	
  $ii)$~We now show the lemma for a negative right lower arm. We use
  the notation from Fig.~\ref{fig:cycles:arm-2}. In the \emph{false}
  state of the variable (which corresponds to the \emph{true} state of
  the considered literal), points~$w_1$, $b_1$ and~$q$ are in the same
  GRR; see Fig.~\ref{fig:cycles:arm-neg-false-2} Points~$b_2$ and~$p$
  are in another GRR (e.g., the very thick green one in
  Fig.~\ref{fig:cycles:arm}) which can contain the entire clause; see
  the lower arm in Fig.~\ref{fig:cycles:clause} and the corresponding
merged component in Fig.~\ref{fig:cycles:clause:merged:true}.

  Now consider a \emph{true} state of the variable; see
  Fig.~\ref{fig:cycles:arm-neg-true-2}. Point~$q$ shares a GRR with
  exactly one point from~$b_1$, $b_2$ or~$b_3$. If~$q$ is with~$b_2$
  or~$b_3$, then~$b_1$ is disconnected from any white point, a
  contradiction to the minimality of the decomposition. Thus,
  points~$q$, $b_1$ and~$p$ are in the same GRR, which cannot contain
  a point of the clause.

  \medskip

  $iii)$~Next, consider a positive right middle arm; see
  Fig.~\ref{fig:cycles:arm-3}. We identify points~$p$ and~$b_1$.
  Point~$b_1$ is either with~$w_0$ (\emph{true} state of the variable)
  or~$w_1$ (\emph{false} state of the variable).

  In the \emph{true} state, points~$b_1$ and~$w_0$ are in one GRR,
  which cannot contain~$q$. This GRR can be merged with the clause
  gadget; see Fig.~\ref{fig:cycles:arm-neg-true-3},
  \ref{fig:cycles:clause} and~\ref{fig:cycles:clause:merged:true}.

  In the \emph{false} state, points~$b_1$, $w_1$ and~$q$ are in one
  GRR, which cannot contain point~$c$ of the clause.

\begin{figure}[htb]
  \hfill
  \subfloat[]{\includegraphics[scale=1.0,page=1]{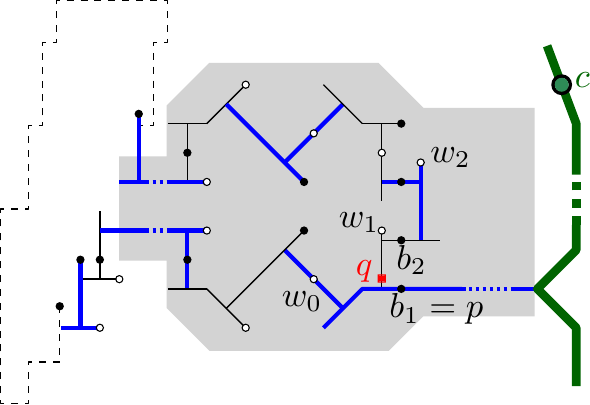}\label{fig:cycles:arm-neg-true-3}}
  \hfill
  \subfloat[]{\includegraphics[scale=1.0,page=2]{fig/arm-3.pdf}\label{fig:cycles:arm-neg-false-3}}
  \hfill\null
  \caption{Right positive middle arm
    gadget. \protect\subref{fig:cycles:arm-neg-true-3}~\emph{true} and
    \protect\subref{fig:cycles:arm-neg-false-3}~\emph{false} variable
    state.}
  \label{fig:cycles:arm-3}
\end{figure}

\medskip

$iv)$~To construct the negative right upper arm, the positive right
lower arm and the negative right middle arm, we invert the arm gadgets
constructed before. The inverted gadgets are shown in
Fig.~\ref{fig:cycles:arm-inv}. The proofs are analogous to the
respective non-inverted cases.

\medskip

$v)$~The left arms are constructed by mirroring.
\end{proof}

\begin{figure}[htb]
  \hfill
  \subfloat[]{\includegraphics[scale=0.80,page=1]{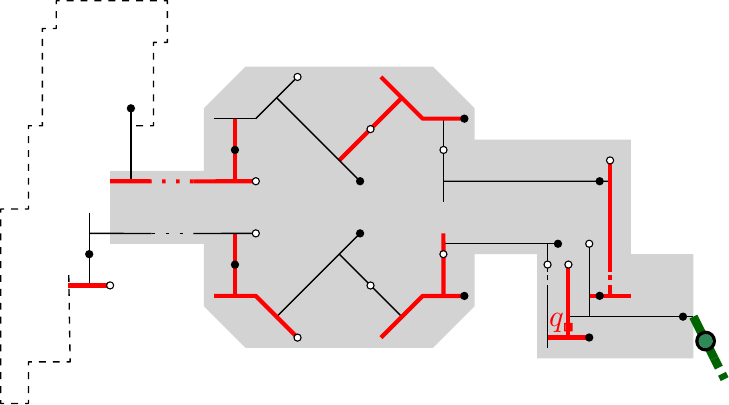}\label{fig:cycles:arm-inv-1}}
  \hfill
  \subfloat[]{\includegraphics[scale=0.80,page=2]{fig/arm-inv.pdf}\label{fig:cycles:arm-inv-2}}
  \hfill\null\\
  \centering \subfloat[]{\includegraphics[scale=1.0,page=3]{fig/arm-inv.pdf}\label{fig:cycles:arm-inv-3}}
  
  \caption{The remaining three right arms in the satisfying variable
    state. \protect\subref{fig:cycles:arm-inv-1}~negative right upper
    arm, \protect\subref{fig:cycles:arm-inv-2}~the positive right
    lower arm and \protect\subref{fig:cycles:arm-inv-3}~the negative
    right middle arm.}
  \label{fig:cycles:arm-inv}
\end{figure}

Finally, we can prove the \NP-hardness result by showing that any
satisfying truth assignment for a formula $\varphi$ yields a GRR
decomposition into a fixed number $k$ of GRRs, where $k$ is the total
number of black points in our construction. Likewise, using
Property~\ref{lem:np:black-white-points} and~\ref{lem:cycles:arms}, we
can show that any decomposition into~$k$ GRRs necessarily satisfies
each clause in $\varphi$.

\newcommand{\ThmNPhard}{For~$k \in \N$, deciding whether a plane
  straight-line drawing can be partitioned into~$k$ \ic components is
  \NP-complete.  }

\begin{theorem}
	\ThmNPhard
  \label{thm:cycles:nphard}  
\end{theorem}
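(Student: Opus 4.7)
The plan is to verify \NP-membership quickly and then glue together the gadget properties already established into a reduction from \textsc{Planar 3SAT}. Membership is immediate: given a candidate partition into $k$ pieces, for each piece I check that it is connected and that no two of its edges conflict; by Lemma~\ref{lem:tree-grr} this certifies the piece is \ic. Both tests run in polynomial time (a quadratic pass over pairs of edges suffices for conflict detection).

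For hardness I would reduce from \textsc{Planar 3SAT} in the Knuth--Raghunathan layout of $G_\varphi$, producing $\Gamma_\varphi$ by assembling the variable loops, the appropriate upper/middle/lower, left/right, positive/negative arm extensions for each literal occurrence, and the clause gadget for each clause, exactly as described above. The reduction is polynomial-size because each literal occurrence contributes a constant-size arm plus at most linearly many T-shapes, and each clause contributes a constant-size gadget. Let $B$ be the total number of black distinguished points in $\Gamma_\varphi$ (equal to the number of white points by construction); I take $k := B$ as the target partition size.

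The equivalence ``$\varphi$ is satisfiable iff $\Gamma_\varphi$ has a GRR decomposition of size at most $k$'' is then proved in two directions. For the forward direction, given a satisfying assignment, I set each variable-arm loop into the corresponding pure state ($\top/\bot$-shapes for true, $\vdash/\dashv$-shapes for false). By Property~\ref{lem:np:black-white-points}(1) this partitions each loop into exactly as many GRRs as it contains black points, each pairing one black and one white point. For every clause $c$, some incident literal is true, so Property~\ref{lem:cycles:arms}(2) lets me absorb the entire clause gadget into the GRR of that arm that otherwise contains only the contact point~$p$, adding no new components; the total count remains $B$. For the reverse direction, suppose a decomposition of size at most $B$ exists. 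Property~\ref{lem:np:black-white-points} (both parts) says the variable-arm loops alone already require at least $B$ GRRs and equality forces each loop into one of its two pure states, which I read off as a truth assignment. With the budget exhausted, each clause gadget must share a GRR with some arm; by Property~\ref{lem:cycles:arms}(1) this is possible only through an arm whose literal is true, so every clause is satisfied.

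The main obstacle is not a conceptual one but the bookkeeping needed to make the global count tight: I have to confirm that in the forward construction the clause-absorption step never forces two distinguished points of different variable-arm loops into the same GRR (ruled out by Property~\ref{lem:np:black-white-points}(2), since the clause gadget is merged with exactly one arm-GRR at a time), and that in the reverse direction the minimum $B$ is actually attained only by decompositions of the shape the properties describe. Everything else is a direct assembly of the two gadget properties, so once these are in hand the theorem statement follows.
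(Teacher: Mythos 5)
Your hardness reduction is exactly the paper's: same source problem (\textsc{Planar 3SAT} in the Knuth--Raghunathan layout), same target value $k$ equal to the number of black points, and both directions of the equivalence are argued by combining Property~\ref{lem:np:black-white-points} and Property~\ref{lem:cycles:arms} precisely as the paper does. That part is fine.

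The one genuine gap is in your \NP-membership argument. You take the certificate to be ``a candidate partition into $k$ pieces'' and check connectivity and conflict-freeness of each piece, but a GRR decomposition is allowed to split edges of $\Gamma$ at interior points (see Fig.~\ref{fig:split-nonvertex}, where splitting an edge at a non-vertex strictly reduces the number of components). A priori such split points need not have short descriptions, so it is not clear that a yes-instance has a polynomial-size witness at all. The paper closes this by first forming the subdivision $\Gamma_s$ of Section~\ref{sec:split}, which has only $O(n^2)$ candidate split points, and invoking Lemma~\ref{lem:split} to show that any GRR decomposition can be converted, without increasing its size, into one that respects the edges of $\Gamma_s$; only then is the certificate a partition of a polynomial-size edge set, and your connectivity-plus-conflict check (via Lemma~\ref{lem:tree-grr}) applies. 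You should add this discretization step before claiming membership in \NP.
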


\begin{proof}
  First, we show that the problem is in~\NP. Given a plane
  straight-line drawing~$\Gamma$, we construct its
  subdivision~$\Gamma_s$ as described in Section~\ref{sec:split}. By
  Lemma~\ref{lem:split}, it is sufficient to consider only partitions
  of edges in~$\Gamma_s$ into~$k$ components. To verify a positive
  instance, we non-deterministically guess the partition of the edges
  of~$\Gamma_s$ into~$k$ components. Testing if each component is a
  tree and if it is \ic can be done in polynomial time.

  Next, we show \NP-hardness.  Given a Planar 3SAT formula~$\varphi$,
  we construct a plane straight-line drawing $\Gamma_\varphi$ using
  the gadgets described above. It is easy to see that $\Gamma_\varphi$
  can be constructed on an integer grid of polynomial size and in
  polynomial time. Let $k$ be the number of black points produced by
  the construction. Note that~$k$ is $O(m+n)$, where~$n$ is the number
  of variables and~$m$ the number of clauses in~$\varphi$. We claim
  that~$\Gamma_\varphi$ can be decomposed in~$k$ GRRs if and only
  if~$\varphi$ is satisfiable.

  Consider a truth assignment of the variables
  satisfying~$\varphi$. We decompose each variable gadget and the
  attached arms as intended in our gadget design, which yields
  exactly~$k$ GRRs. By Property~\ref{lem:cycles:arms}, each
  clause gadget can be merged with the GRR of the arm of a literal
  which satisfies the clause. Therefore, we have~$k$ GRRs in total.

  Conversely, consider a decomposition of~$\Gamma_\varphi$ into~$k$
  GRRs. Then, each variable and the attached arms must be decomposed
  minimally and, by Property~\ref{lem:np:black-white-points}, must be
  either in the \emph{true} or in the \emph{false} state. Furthermore,
  each special point~$c$ of a clause must be in a component belonging
  to one of the arms of the clause. But then, the corresponding
  variable must satisfy the clause by
  Lemma~\ref{lem:cycles:arms}. This induces a satisfying variable
  assignment for~$\varphi$.
\end{proof}

\section{Trees}

In this section we consider \emph{greedy tree decompositions}, or
\gtds. For trees, greedy regions correspond to \ic drawings. Note that
\ic tree drawings are either subdivisions of~$K_{1,4}$, subdivisions
of the~\emph{windmill} graph (three caterpillars with maximum degree~3
attached at their ``tails'') or paths; see the characterization by
Alamdari et al.~\cite{acglp-sag-12}.

In the following, we consider a plane straight-line drawing~$\Gamma$
of a tree $T=(V,E)$, with~$|V|=n$. As before, we identify the tree
with its drawing, the vertices with the corresponding points and the
edges with the corresponding line segments. We want to partition it
into a minimum number of \ic subdrawings. In such a partition, each
pair of components shares at most one point.

Recall that a contact of two trees~$\Gamma_1, \Gamma_2$ with a single
common point~$p$ is either crossing or non-crossing; see
Definition~\ref{def:contacts}. Also, recall that proper contacts are
non-crossing.
Let~$\Pi_\textnormal{all}$ be the set of all GRR partitions of the
plane straight-line tree drawing~$\Gamma$. Let~$\Pi_{nc}$ be the set
of GRR partitions of~$\Gamma$, in which every pair of GRRs has a
non-crossing contact. Finally, let~$\Pi_{p}$ be the set of GRR
partitions of~$\Gamma$, in which every pair of GRRs has a proper
contact. It
holds:~$\Pi_{p} \subseteq \Pi_{nc} \subseteq
\Pi_\textnormal{all}$. For minimum
partitions~$\pi_{p}, \pi_{nc},\pi_\textnormal{all}$ from
$\Pi_{p}, \Pi_{nc}, \Pi_\textnormal{all}$, respectively, we have
$|\pi_\textnormal{all}| \leq |\pi_{nc}| \leq
|\pi_{p}|$.

We show that finding a minimum GTD of a plane straight-line tree
drawing is \NP-hard; see Section~\ref{sec:trees:npc}. In
Section~\ref{sec:trees:restricted-contacts}, we show that the problem
becomes polynomial if we consider GRR partitions in which GRRs
have only non-crossing contacts, i.e., partitions from~$\Pi_{nc}$. The
same holds if we only consider GRR partitions in which GRRs only have
proper contacts, i.e., partitions from~$\Pi_{p}$.

\subsection{NP-completeness}
\label{sec:trees:npc}

We show that if GRR crossings as in Definition~\ref{def:contacts}
 are allowed, deciding
whether a partition of given size exists is NP-complete.

The problem \textsc{Partition into Triangles (PIT)} has been shown to be
\NP-complete by {\'C}usti{\'c} et
al.~\cite[Proposition~5.1]{custic_geometric_2015} and will be
useful for our hardness proof.

\begin{problem}[PIT]
  Given a tripartite graph~$G=(V,E)$ with
  tripartition~$V=V_1 \cupdot V_2 \cupdot V_3$, 
  where~$|V_1| = |V_2| = |V_3| = q$. Does there exist a set~$T$ of~$q$
  triples in~$V_1 \times V_2 \times V_3$, such that every vertex
  in~$V$ occurs in exactly one triple and such that every triple
  induces a triangle in~$G$?
\end{problem}

It is easy to show that the following, similar problem
\textsc{Partition into Independent Triples (PIIT)} is \NP-complete as
well.

\begin{problem}[PIIT]
  Given a tripartite graph~$G=(V,E)$ with
  tripartition~$V=V_1 \cupdot V_2 \cupdot V_3$,
  where~$|V_1| = |V_2| = |V_3| = q$. Does there exist a set~$T$ of~$q$
  triples in~$V_1 \times V_2 \times V_3$, such that every vertex
  in~$V$ occurs in exactly one triple and such that no two vertices of
  a triple are connected by an edge in~$G$?
\end{problem}

\begin{lemma}
  PIIT is \NP-complete.
\end{lemma}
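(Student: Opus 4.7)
The plan is a straightforward complement-graph reduction from PIT, which is already known to be \NP-complete by~\cite{custic_geometric_2015}.

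Membership in \NP{} is immediate: a certificate is the partition~$T$ itself, which consists of~$q$ triples of vertices; we verify in polynomial time that the triples cover~$V$ exactly once, that each triple picks one vertex from each~$V_i$, and that no two vertices within a triple are adjacent in~$G$.

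For hardness, given an instance~$G = (V_1 \cupdot V_2 \cupdot V_3, E)$ of PIT, I construct an instance~$G' = (V_1 \cupdot V_2 \cupdot V_3, E')$ of PIIT on the same vertex set and tripartition by letting~$E'$ consist of all pairs~$\{u,v\}$ with $u \in V_i$, $v \in V_j$, $i \neq j$, such that $\{u,v\} \notin E$. That is,~$G'$ is the bipartite-complement of~$G$ taken separately between each pair of color classes (edges inside a class are irrelevant since neither PIT nor PIIT can place two same-class vertices in the same triple). This transformation is clearly polynomial-time.

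The correctness is by a direct equivalence: a triple $(v_1, v_2, v_3) \in V_1 \times V_2 \times V_3$ induces a triangle in~$G$ if and only if all three cross-class pairs are edges of~$G$, which by construction happens if and only if none of these pairs is an edge of~$G'$, i.e., the triple is independent in~$G'$. Hence a partition of~$V$ into~$q$ triangles in~$G$ is in bijection with a partition of~$V$ into~$q$ independent triples in~$G'$, so~$G$ is a yes-instance of PIT if and only if~$G'$ is a yes-instance of PIIT. Combined with membership in \NP, this shows PIIT is \NP-complete. The only ``obstacle'' worth mentioning is to be careful that the complementation is performed only between color classes, so that the tripartite structure required by PIIT is preserved.
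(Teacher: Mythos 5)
Your proposal is correct and matches the paper's proof essentially verbatim: both take the cross-class complement $E' = \{uv \mid uv \notin E,\ u \in V_i,\ v \in V_j,\ i \neq j\}$ and observe that triples inducing triangles in $G$ correspond exactly to independent triples in $G'$. Your additional remarks on \NP{} membership and on preserving the tripartition are fine but add nothing beyond what the paper's argument already implies.
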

\begin{proof}
  It is easy to see that PIIT is in~\NP. For \NP-hardness, consider a
  graph~$G=(V,E)$ from an instance of~PIT. We construct~$G' = (V,E')$
  with
  $E' = \{ uv \mid uv \not \in E,~u \in V_i,~v \in V_j,~i \neq j
  \textnormal{ for } i,j=1,2,3 \}$. In this way, a triple
  from~$V_1 \times V_2 \times V_3$ induces a triangle in~$G$ if and
  only if it is independent in~$G'$. Therefore, PIT can be reduced to
  PIIT in polynomial time.
\end{proof}

We now show that deciding whether a GRR partition of a plane
straight-line tree drawing of given size exists is \NP-complete even
for subdivisions of a star.

\begin{theorem}
  Given a plane straight-line drawing~$\Gamma$ of a tree~$T=(V,E)$, which is a
  subdivision of a star with~$3q$ leaves, it is \NP-complete to decide
  whether~$\Gamma$ can be partitioned into~$q$ GRRs.
 \label{thm:grr:crossing-nphard}
\end{theorem}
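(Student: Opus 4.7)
The plan is to prove the theorem by reduction from PIIT, which was just shown to be \NP-complete. Membership in \NP follows from the toolkit developed earlier: by Lemma~\ref{lem:split} it suffices to search for GRR partitions that respect the polynomially sized subdivision $\Gamma_s$ of Section~\ref{sec:split}, so one can non-deterministically guess an assignment of the edges of $\Gamma_s$ to $q$ classes and verify in polynomial time, using Lemma~\ref{lem:tree-grr}, that each class is a connected tree with no pair of conflicting edges.

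For \NP-hardness, given a PIIT instance with tripartite graph $G=(V,E)$, $V = V_1 \cupdot V_2 \cupdot V_3$, $|V_i|=q$, the idea is to build a plane straight-line subdivision $\Gamma$ of the star $K_{1,3q}$ with center $c$ and one leg $\ell_v$ per vertex $v \in V$. The legs of $V_i$ are placed inside a narrow angular sector $S_i$ around $c$; the three sectors $S_1, S_2, S_3$ are pairwise separated by at least $90\dg$; and within each $S_i$ the $q$ leg directions at $c$ pairwise form angles strictly less than $90\dg$. A direct normal-line computation (exactly as in the definition of conflicting edges for straight-line drawings) shows that two initial segments at $c$ meeting at an angle of less than $90\dg$ always conflict, so any GRR containing $c$ can contain at most one leg per sector, and hence at most three legs in total. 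Since a GRR that does not contain $c$ is merely a subpath of a single leg and therefore covers at most one leg, a short pigeonhole argument forces any decomposition of $\Gamma$ into $q$ GRRs to consist of exactly $q$ GRRs, each meeting $c$ and containing one full leg from each of $S_1, S_2, S_3$.

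To finally couple this forced partition to the edge relation of $G$, I would install, for each edge $uv \in E$, a small conflict-inducing bend on $\ell_u$ or $\ell_v$ (using extra subdivision vertices of the star) so that some pair of edges of $\ell_u$ and $\ell_v$ becomes conflicting, while $\ell_u$ and $\ell_w$ remain conflict-free whenever $w$ is a non-neighbour of $u$ lying in a different sector. Then three legs from distinct sectors can coexist in a single GRR iff the corresponding three vertices form an independent triple in $G$, and Property~\ref{lem:tree-grr} together with the previous counting argument gives the desired equivalence between $q$-GRR partitions of $\Gamma$ and PIIT solutions for $G$.

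The main technical obstacle is precisely the design of these bend gadgets. Each local modification of $\ell_u$ must create the intended conflict with exactly the targeted neighbours of $u$ in the other two sectors, remain compatible with every non-neighbour of $u$ there, avoid spurious conflicts with other legs of the same sector $S_i$ (which would already be forbidden by the sector construction and must not be disturbed), and not break the increasing-chord property of any intended triple via long-range interactions between bent portions of different legs. Carrying this out locally on each leg, while keeping the subdivision inside the prescribed sectors and verifying the absence of unintended conflicts, is where the detailed combinatorial-geometric bookkeeping of the proof will concentrate; the reduction, correctness, and NP-membership arguments around it are routine.
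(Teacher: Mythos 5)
Your skeleton is exactly the paper's: membership in \NP{} via the subdivision $\Gamma_s$ of Section~\ref{sec:split}, then a reduction from PIIT in which the star's legs are grouped into three angular bundles around the center (intra-bundle pairs conflict automatically because their initial segments meet at an acute angle; inter-bundle pairs are separated by an angle $\ge 90\dg$), the edges of $G$ are encoded by small local perturbations (the paper's ``dents''), and a counting argument forces a $q$-GRR partition to consist of $q$ components, each containing one full leg per bundle. The counting and \NP-membership parts of your write-up are fine.

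However, the step you defer --- the design of the bend gadgets --- is not a routine detail but the actual content of the proof, and as written your argument has a genuine gap there. You need, for each edge $u_iv_j\in E$, a perturbation of the $o$--$u_i^\Gamma$ leg whose normals cross the $o$--$v_j^\Gamma$ path but miss \emph{every} other leg in both other bundles, miss all other dents, and do not destroy the increasing-chord property of any path $x^\Gamma$--$o$--$y^\Gamma$ with $xy\notin E$; since a single leg may carry up to $q$ such dents, these constraints interact globally. The paper resolves this with a specific geometry: the leaf endpoints of each bundle are placed on a downward-opening parabola so that the line through $v_{j-1}^\Gamma v_{j+1}^\Gamma$ isolates $v_j^\Gamma$ from all other targets; the dent consists of two segments, one orthogonal to that line and one of slope $-5$, chosen so that only the first segment's normals reach the $j$-th leg of the adjacent bundle and nothing reaches the third bundle; and all coordinates are kept rational with numerators and denominators in $O(n^2)$ so the reduction is genuinely polynomial. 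Without an explicit construction of this kind and a verification that no unintended conflicts arise (and that conflict-free triples really are increasing-chord), the claimed equivalence between $q$-GRR partitions and independent triples is not established, so the reduction is incomplete as it stands.
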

\begin{proof}
  The proof that the problem is in \NP\xspace is analogous to the
  corresponding proof of Theorem~\ref{thm:cycles:nphard}.

  To prove \NP-hardness, we present a polynomial-time reduction from
  PIIT. Consider the tripartite graph~$G=(V,E)$ with
  tripartition~$V=V_1 \cup V_2 \cup V_3$ from an
  instance~$\Pi = (G,V_1,V_2,V_3,q)$ of PIIT,
  where~$|V_1| = |V_2| = |V_3| = q$. We may assume~$q \geq 3$. We show
  how to construct a plane straight-line drawing~$\Gamma$ of a
  subdivision of a star in polynomial time, such that~$\Gamma$ can be
  partitioned into~$q$ GRRs if and only if $\Pi$ is a yes-instance of
  PIIT. Figure~\ref{fig:grr:crossing-nphard:example} shows an example
  of such a construction for~$q=3$.

  \begin{figure}[tb]
    \centering
    \includegraphics[page=6]{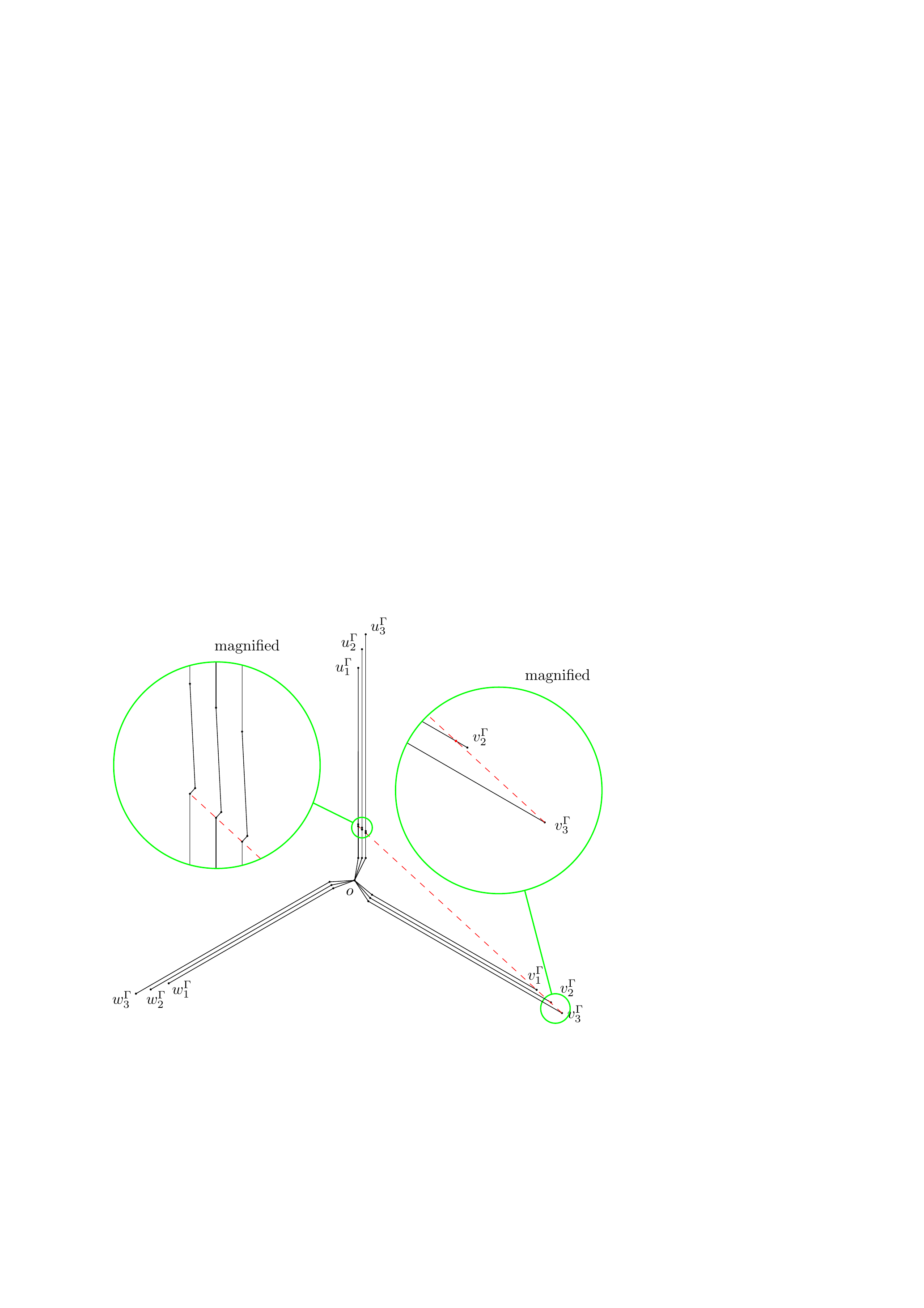}
    \caption{Reduction from a PIIT instance with~$q=3$ for the proof
      of Theorem~\ref{thm:grr:crossing-nphard}.}
    \label{fig:grr:crossing-nphard:example}
  \end{figure}

  We use the following basic ideas to construct the drawing~$\Gamma$.
  Let~$o$ be the center of~$\Gamma$.  Each vertex~$v$ of~$G$
  corresponds to a leaf vertex~$v^\Gamma$ of~$\Gamma$. The leaves
  of~$\Gamma$ are partitioned into three sets corresponding
  to~$V_1, V_2, V_3$. Consider a pair of vertices $u \in V_i$,
  $v \in V_j$. If~$i=j$, the angle that the~$u^\Gamma$-$v^\Gamma$ path
  has at point~$o$ in our construction is at
  most~$12\dg$. Therefore,~$u$ and~$v$ can not be in the same
  GRR. For~$i \neq j$, however, the angle that
  the~$u^\Gamma$-$v^\Gamma$ path has at point~$o$ is between~$106\dg$
  and~$134\dg$. We construct the~$o$-$u^\Gamma$ and~$o$-$v^\Gamma$
  paths in such a way that the~$u^\Gamma$-$v^\Gamma$ path is \ic if
  and only if edge~$u v$ is not in~$G$.

  The path from~$o$ to~$v^\Gamma$ takes a left turn of at most~$12\dg$
  and then continues as a straight line, except for at most~$q$
  \emph{dents}; see the left magnified part of
  Fig.~\ref{fig:grr:crossing-nphard:example}. Each dent is used to
  realize exactly one edge from~$G$. For a pair of vertices
  $u \in V_i$, $v \in V_j$, $j \equiv i+1~(\textrm{mod } 3)$ with
  edge~$uv$ in~$G$, the $o$-$u^\Gamma$ path has a dent with a normal
  crossing the $o$-$v^\Gamma$ path. Furthermore, no normal to this
  dent crosses the~$o$-$w^\Gamma$ path for any vertex
  $w \in V_j \cup V_k \setminus \{ v \}$,
  for~$k \equiv i+2~(\textrm{mod } 3)$. Consider the example in
  Fig.~\ref{fig:grr:crossing-nphard:example}.  Assume that there is an
  edge~$u_3 v_2$ in~$G$. Then, the~$o$-$u_3^\Gamma$ path has a dent
  whose normal (dashed red) crosses the $o$-$v_2^\Gamma$ path, but not
  the paths from~$o$ to~$v_1^\Gamma$, $v_3^\Gamma$, $w_1^\Gamma$,
  $w_2^\Gamma$ and~$w_3^\Gamma$.

  \begin{figure}[h!tb]
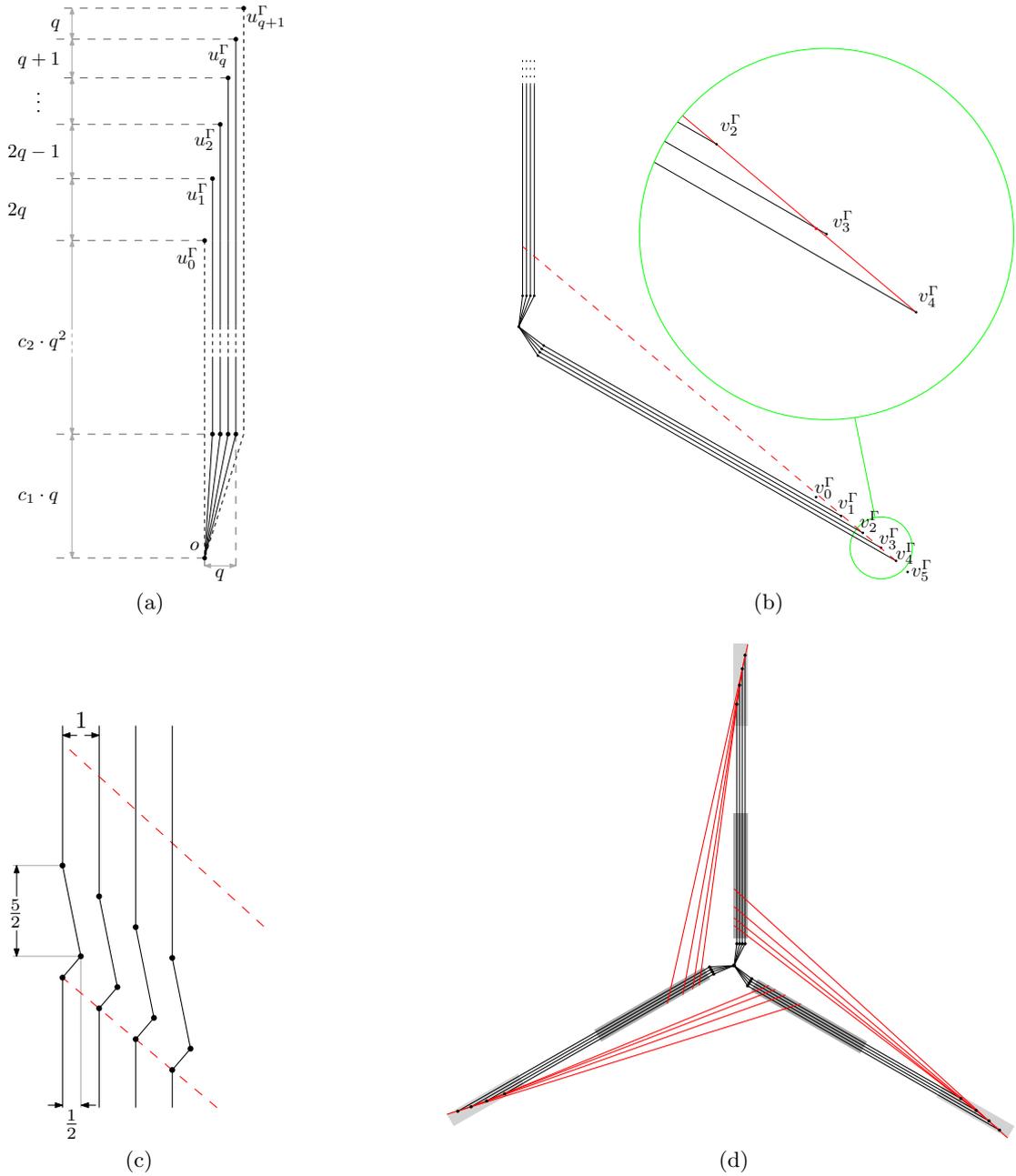

    \centering \subfloat[]{\includegraphics[page=2,
      scale=0.8]{fig/crossing-grr-construction-rational.pdf}
      \label{fig:grr:crossing-nphard:constr:1}}\hfill
    \subfloat[]{\includegraphics[page=3,
      scale=0.8]{fig/crossing-grr-construction-rational.pdf}
      \label{fig:grr:crossing-nphard:constr:2}}\\
    \subfloat[]{\includegraphics[page=4]{fig/crossing-grr-construction-rational.pdf}
      \label{fig:grr:crossing-nphard:constr:3}}\hfill
    \subfloat[]{\includegraphics[page=5]{fig/crossing-grr-construction-rational.pdf}
      \label{fig:grr:crossing-nphard:constr:4}}
    \caption{Constructing~$\Gamma$ from~$\Pi$ for the proof of
      Theorem~\ref{thm:grr:crossing-nphard}.}
  \end{figure}

  We now describe the procedure to construct~$\Gamma$ from~$\Pi$ in
  detail. We will make sure that all vertices of~$\Gamma$ have
  rational coordinates with numerators and denominators in~$O(n^2)$.
  Let~$V_1 = \{ u_1, \dots, u_q\}$, $V_2 = \{ v_1, \dots, v_q\}$
  and~$V_3 = \{ w_1, \dots, w_q\}$. For the construction, we introduce
  dummy points~$u_0^\Gamma$, $u_{q+1}^\Gamma$, $v_0^\Gamma$,
  $v_{q+1}^\Gamma$, $w_0^\Gamma$, $w_{q+1}^\Gamma$, which do not lie
  on~$\Gamma$. For all~$i=0, \dots, q+1$, it will be
  $|o u_i^\Gamma| = |o v_i^\Gamma| = |o w_i^\Gamma|$. 

  We first show how to choose coordinates for
  points~$o, u_0^\Gamma, \dots, u_{q+1}^\Gamma$; see
  Fig.~\ref{fig:grr:crossing-nphard:constr:1}.  We approximate
  $120\dg$ rotation using the angle~$\alpha \approx 120.51\dg$
  with~$\cos \alpha = - \frac{33}{65}$
  and~$\sin \alpha = \frac{56}{65}$. The points $v_i^\Gamma$ are
  acquired from~$u_i^\Gamma$ by a clockwise rotation by~$\alpha$
  at~$o$, and the points $w_i^\Gamma$ are acquired from~$u_i^\Gamma$
  by a counterclockwise rotation by~$\alpha$ at~$o$. Then,
  $\angle u_i^\Gamma o v_i^\Gamma = \angle u_i^\Gamma o w_i^\Gamma =
  \alpha$
  and~$\angle v_i^\Gamma o w_i^\Gamma = 360\dg - 2 \alpha \approx
  118.98\dg$.

  Let point~$o$ have coordinates~$(0,0)$. For~$i=1, \dots, q$, let the
  first segment of the~$o$-$u_i^\Gamma$ path have its other endpoint
  in~$(i, c_1 q)$ for a constant~$c_1$. For~$i=0, \dots, i+1$,
  point~$u_i^\Gamma$ has $x$-coordinate~$i$. Let~$y_i$ denote
  the~$y$-coordinate of~$u_i^\Gamma$. We set~$y_0 = c_1 q + c_2 q^2$
  for a constant~$c_2$. For~$i = 1, \dots, q$, we
  set~$y_{i} = y_{i-1} + 2q + 1 - i$; see
  Fig.~\ref{fig:grr:crossing-nphard:constr:1}. Thus,
  for~$i=0, \dots, q+1$, points~$u_i^\Gamma$ lie on a parabola that
  opens down. Note that all vertices of~$\Gamma$ constructed so far
  are integers in~$O(n^2)$. We set~$c_1 = 5$ and~$c_2 = 40$.

  Next, we show how to construct the dents on the~$o$-$u_i^\Gamma$
  paths. For edge~$u_i v_j$ in~$G$, $i,j=1, \dots, q$, consider the
  straight line through~$v_{j-1}^\Gamma v_{j+1}^\Gamma$; see the
  dashed red line in Fig.~\ref{fig:grr:crossing-nphard:constr:2}
  for~$j=3$. Consider the intersection of this line and the vertical
  line through~$u_i^\Gamma$. The coordinates of that intersection are
  rational numbers with numerators and denominators in~$O(n^2)$. It is
  easy to show that this intersection has $y$-coordinates
  between~$\frac{c_2}{2} q = 20 q$
  and~$\frac{6}{5}(c_1 + c_2 q + \frac{3}{2}(q+1)) < 8 + 50 q$.

  At the intersection, we place a dent consisting of two segments; see
  Fig.~\ref{fig:grr:crossing-nphard:constr:3}. The first segment of
  the dent has positive slope and is orthogonal
  to~$v_{j-1}^\Gamma v_{j+1}^\Gamma$. Its projection on the~$x$ axis
  has length~$\frac{1}{2}$. The second segment has the negative slope
  of~$-5$. It is easy to verify that the line
  through~$v_{j}^\Gamma v_{j+2}^\Gamma$ (the upper red dashed line in
  Fig.~\ref{fig:grr:crossing-nphard:constr:3}) has distance at
  least~$\frac{c_2}{8} = 5$ from the lowest point of the
  dent. Therefore, the dent fits between the two dashed red
  lines. Note that all three vertices of the dent have coordinates
  that are rational numbers with numerators and denominators
  in~$O(n^2)$.

  By the choice of the slopes, no normal to either one of the dent
  segments crosses~$o w_k^\Gamma$ for~$k=0, \dots, q+1$. Furthermore,
  no normal on the second segment crosses $o v_k^\Gamma$
  for~$k=0, \dots, q+1$, and a normal to the first segment only
  crosses~$o v_k^\Gamma$ for~$k=j$. In this way, the dent ensures
  that~$u_i^\Gamma$ and~$v_j^\Gamma$ can not be in the same GRR, and
  it does not prohibit any other vertex pair~($u_k^\Gamma$
  and~$v_\ell^\Gamma$, $v_k^\Gamma$ and~$w_\ell^\Gamma$, $w_k^\Gamma$
  and~$v_\ell^\Gamma$, $k,\ell=1,\dots,q$) from being in the same
  GRR. Finally, for each leaf vertex~$u_i^\Gamma$, we add the missing
  segments on the vertical line through~$u_i^\Gamma$ to connect~$o$
  and~$u_i^\Gamma$ by a path. Analogously, we construct the
  $o$-$v_i^\Gamma$ and the $o$-$w_i^\Gamma$ paths.

  Note that by our construction, the dent normals do not cross other
  dents on the paths from~$o$ to the leaves from another partition;
  see Fig.~\ref{fig:grr:crossing-nphard:constr:4}, where the dents lie
  in the dark gray rectangles, and the crossings of dent normals and
  paths from~$o$ to the leaves from another partition lie in the light
  gray rectangles. It follows that for~$i,j=1,\dots,q$, the
  $o$-$u_i^\Gamma$ and the~$o$-$v_j^\Gamma$ path can be merged into
  one GRR, if no dent corresponding to edge~$u_i v_j$ in~$G$ exists on
  the $o$-$u_i^\Gamma$ path in~$\Gamma$.

  From the construction of~$\Gamma$, it follows that a pair of
  leaves~$x^\Gamma$ and~$y^\Gamma$ can be in the same GRR if and only
  if the corresponding vertices~$x,y$ are in different partitions
  of~$V$ and edge~$x y$ is not in~$G$. Therefore, triples of
  leaves~$x^\Gamma,y^\Gamma,z^\Gamma$ for which
  $x^\Gamma,y^\Gamma,z^\Gamma$ can be in the same GRR, are in one to
  one correspondence to independent triples
  from~$V_1 \times V_2 \times V_3$ in~$G$. Therefore, $\Gamma$ can be
  partitioned into~$q$ GRRs if and only if~$\Pi$ is a yes-instance of
  PIIT. Note that~$\Gamma$ can be constructed in polynomial time and
  that all coordinates of vertices in~$\Gamma$ are rational numbers
  with numerators and denominators in~$O(n^2)$.
\end{proof}

\subsection{Polynomial-time algorithms for restricted types of contacts}
\label{sec:trees:restricted-contacts}

We now make a restriction by only allowing non-crossing contacts.

First, assume~$T$ is split only at its vertices. As shown in
Section~\ref{sec:split}, we can drop this restriction and adapt our
algorithms to compute minimum or approximately minimum GRR
decompositions of plane straight-line tree drawings which allow
splitting tree edges at interior points. Note that the construction in
the proof of Lemma~\ref{lem:split} preserves the non-crossing property
of GRR contacts.

We start in Section~\ref{sec:trees:multicut} and use the well-known
problem \minmulticut to compute a 2-approximation for minimum \gtds
for the scenario in which GRRs are only allowed to have proper
contacts. A similar approach will be used in Section~\ref{sec:triang}
to compute minimum GRR decompositions of triangulated polygons.
After that, in Section~\ref{sec:dynprog}, we present an exact, but
more complex approach for computing \gtds, which also allows
non-crossing contacts.

\subsubsection{2-Approximation using Multicut}
\label{sec:trees:multicut}

\begin{figure}[tb]
  \hfill 
\subfloat[]{
  \includegraphics[page=1]{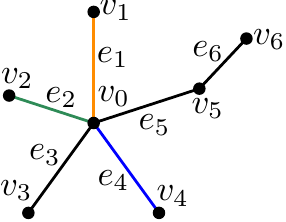}\label{fig:trees:minmult-1}}
  \hfill 
\subfloat[]{
  \includegraphics[page=2]{fig/minmult-trees.pdf}\label{fig:trees:minmult-2}}
\hfill\null
\caption{\protect\subref{fig:trees:minmult-1}~Tree drawing decomposed
  in GRRs. Edge pairs~$\{e_1, e_2\}$, \dots, $\{e_4, e_5\}$,
  $\{e_5, e_1\}$ as well as~$\{e_1, e_6\}$, $\{e_4, e_6\}$ are
  conflicting.  \protect\subref{fig:trees:minmult-2}~\minmulticut
  instance constructed according to the proof of
  Proposition~\ref{prop:trees:minmult}. No edge orientation respecting
  all paths between the terminals exists. Dashed edges form a
  solution.}
\end{figure}

We show how to partition the edges of~$T$ into a minimum number of \ic
components with proper contacts using \minmulticut on trees. Given an
edge-weighted graph~$G=(V,E)$ and a set of terminal
pairs~$\{(s_1,t_1)$, \dots,$(s_k,t_k)\}$, an edge set~$S \subseteq E$ is
a \emph{multicut} if removing~$S$ from~$G$ disconnects each
pair~$s_i,t_i$, $i=1, \dots,k$. A multicut is minimum if the total
weight of its edges is minimum.

For the complexity of \minmulticut on special graph types, see the
survey by Costa et al.~\cite{clr-mmmims-05}. Computing \minmulticut is
$\NP$-hard even for unweighted binary trees~\cite{gfr-mugd-2003}, but
has a polynomial-time
2-approximation for trees~\cite{gvy-pdaaifmt-1997}.

Consider a plane straight-line drawing of a tree~$T=(V,E)$. We
construct a tree~$T_M$ by subdividing every edge of~$T$ once as
follows. Tree~$T_M$ has a vertex~$n_v$ for each vertex~$v \in V$ and a
vertex~$n_e$ for each edge~$e \in E$. For each~$e=uv \in E$,
edges~$n_u n_e$ and~$n_e n_v$ are in~$T_M$.  The set $X$ of terminal
pairs contains a pair $(n_e,n_f)$ for each pair of conflicting edges
$e, f$ of~$T$. Let all edges of~$T_M$ have weight~1.

\begin{lemma}
  Let~$E'$ be a \minmulticut of~$T_M$ with respect to the terminal
  pairs~$X$ and let~$C_1^M, \dots, C_k^M$ denote the connected
  components of~$T_M - E'$. Then,
  components~$C_i = \{ e \in E \mid n_e \in C_i^M \}$ form a minimum
  GRR decomposition of~$T$.
\label{prop:trees:minmult}
\end{lemma}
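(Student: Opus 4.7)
The plan is to set up a size-preserving correspondence between proper-contact GRR partitions of~$T$ into~$k$ components and multicuts of~$T_M$ of size~$k-1$, so that minima on one side yield minima on the other.

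First I would show that any multicut~$E'$ of~$T_M$ yields a valid proper-contact GRR partition of~$T$. Each~$C_i$ is connected and acyclic in~$T$: any path in~$T_M - E'$ between two edge-nodes~$n_e, n_f \in C_i^M$ alternates between vertex-nodes and edge-nodes of~$T$ and projects to a sequence of adjacent edges of~$T$ lying in~$C_i$. By construction of the terminal set~$X$, no two conflicting edges lie in the same~$C_i$, so by Lemma~\ref{lem:tree-grr} each~$C_i$ is an \ic tree. For proper contacts at a shared vertex~$v$, observe that~$n_v$ belongs to exactly one component~$C_{i_0}^M$; for any other~$C_i \neq C_{i_0}$ with an edge~$e$ incident to~$v$, the arc~$n_v n_e$ must lie in~$E'$, and since in the tree~$T$ every two edges sharing~$v$ meet only at~$v$, $n_e$ cannot reach a second~$n_{e'}$ incident to~$v$ without traversing~$n_v$. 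Hence each such~$C_i$ has exactly one edge at~$v$, so~$v$ is a leaf in~$C_i$.

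The second step is to show that, for a minimum multicut~$E'$, the number~$k$ of nonempty partition components equals~$|E'|+1$. Since~$T_M$ is a tree, $T_M - E'$ has exactly~$|E'|+1$ components; the two counts differ only if some component is an isolated vertex-node~$\{n_v\}$ contributing no edge to any~$C_i$. This cannot happen at an optimum: uncutting any single arc~$n_v n_e$ incident to such a vertex strictly reduces~$|E'|$ while not merging any terminal pair, because the only node newly reachable from~$n_e$'s component is~$n_v$, which is never a terminal. I expect this housekeeping step to be the main subtlety in the proof.

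Finally, to conclude optimality, I would prove the reverse direction: any proper-contact GRR partition of~$T$ into~$k$ components yields a multicut of~$T_M$ of size~$k-1$. At each vertex~$v$, the proper-contact property forces at most one component to have degree~$\geq 2$ at~$v$; call it~$C_{i_0(v)}$ (or pick arbitrarily among the components meeting~$v$ if none exists). Define~$E'$ by placing~$n_v n_e \in E'$ exactly when~$e$ is incident to~$v$ and~$e \notin C_{i_0(v)}$. A handshake count using $\sum_i (|E(C_i)|+1) = (n-1)+k$ gives~$|E'| = k-1$, and~$E'$ separates every conflicting pair because distinct \ic components contain no conflicting pair by Lemma~\ref{lem:tree-grr}. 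Combining all three steps, a minimum multicut of~$T_M$ has size~$\opt - 1$, where~$\opt$ is the minimum size of a proper-contact GRR partition of~$T$, and the partition produced from any minimum multicut attains this minimum.
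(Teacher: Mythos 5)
Your proof is correct and follows essentially the same approach as the paper: a size-preserving correspondence between multicuts of~$T_M$ of size~$k-1$ and proper-contact GRR partitions of~$T$ of size~$k$, using Lemma~\ref{lem:tree-grr} in both directions. You actually fill in details the paper glosses over (the proper-contact verification for components arising from a multicut, the exclusion of isolated vertex-node components at an optimum, and the explicit count $|E'|=k-1$), so no gaps here.
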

\begin{proof}
  Consider a multicut~$E'$ of~$T_M$, $|E'| = k-1$. Consider a
  component~$C_i^M$. Then, the edges in~$C_i$ are conflict-free and
  form a connected subtree~$T_i$ of~$T$. Thus,~$T_i$ is a GRR by
  Lemma~\ref{lem:tree-grr}.

  Next, consider a GRR decomposition of~$T$ into~$k$ subtrees~$T_i =
  (V_i,E_i)$ with proper contacts. We create an edge set $S$ as follows. Assume~$T_i$, $T_j$ touch at
  vertex~$v \in V$. Let edge~$e = uv$ be in~$T_i$, and let~$v$ be a
  leaf in~$T_i$. Then we add edge~$n_e n_v$ of~$T_M$ to set~$S$; see
  Fig.~\ref{fig:trees:minmult-1} and~\ref{fig:trees:minmult-2}. It
  is~$|S|=k-1$. After removing~$S$ from~$T_M$, no connected component
  contains vertices~$n_{e_1}, n_{e_2}$ for a pair of conflicting
  edges~$e_1$, $e_2$. Thus,~$S$ is a multicut. 

  We have shown that GRR decompositions of~$T$ of size~$k$ are in
  one-to-one correspondence with the multicuts of~$T_M$ of size $k-1$.
  Therefore, minimum multicuts correspond to minimum GRR
  decompositions, and it follows that~$C_i$ form a minimum GRR
  decomposition of~$T$.
\end{proof}
Note that \minmulticut can be solved in polynomial time in directed
trees~\cite{clr-gamimrt-03}, i.e., trees whose edges can be directed
such that for each terminal pair~$(s_i,t_i)$, the~$s_i$-$t_i$ path is
directed. We note that this result cannot be applied in our context,
since we can get \minmulticut instances for which no such orientation
is possible, see Fig.~\ref{fig:trees:minmult-2}. However, using the
approximation algorithm from~\cite{gvy-pdaaifmt-1997}, we obtain the
following result.

\begin{corollary}
  Given a plane straight-line drawing of a tree~$T=(V,E)$, a partition
  of~$E$ into $2\cdot \textnormal{OPT} -1$ \ic subtrees of~$T$ having
  only proper contacts can be computed in time polynomial in~$n$,
  where $\textnormal{OPT}$ is the minimum size of such a partition.
\end{corollary}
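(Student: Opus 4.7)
The plan is to simply combine the one-to-one correspondence established in Lemma~\ref{prop:trees:minmult} with the known 2-approximation for \minmulticut on trees by Garg, Vazirani and Yannakakis~\cite{gvy-pdaaifmt-1997}.

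First I would construct the instance $(T_M, X)$ from~$T$ as described before Lemma~\ref{prop:trees:minmult}: subdivide each edge once and, for every pair of conflicting edges $e,f \in E$, introduce the terminal pair $(n_e, n_f)$. The conflicting pairs can be determined in $O(n^2)$ time by checking, for every ordered pair of edges, whether a normal at the endpoints of one crosses the other. Since $T_M$ has $O(n)$ vertices and $X$ contains $O(n^2)$ terminal pairs, the whole instance has polynomial size.

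Next I would run the GVY algorithm on $(T_M, X)$ with unit edge weights to obtain a multicut $E' \subseteq E(T_M)$ with $|E'| \le 2 \cdot \opt_{\textnormal{MC}}$, where $\opt_{\textnormal{MC}}$ denotes the minimum multicut size; this algorithm runs in polynomial time. The components $C_1^M, \dots, C_r^M$ of $T_M - E'$ then induce components $C_i = \{e \in E \mid n_e \in C_i^M\}$ of $T$. By the forward direction of Lemma~\ref{prop:trees:minmult} (which only uses that $E'$ is a multicut, not that it is minimum), each $C_i$ is a connected conflict-free subtree of~$T$ and hence an \ic component by Lemma~\ref{lem:tree-grr}, and any two such components meet only in the leaf-contact situation produced by the subdivision, so all contacts are proper.

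Finally, I would do the arithmetic linking the two optima. By the correspondence in the proof of Lemma~\ref{prop:trees:minmult}, GRR decompositions of~$T$ into $k$ components with proper contacts correspond bijectively to multicuts of $T_M$ of size $k-1$, so $\opt_{\textnormal{MC}} = \opt - 1$. The decomposition computed above has exactly $|E'| + 1$ components, hence its size is at most
\[
2 \cdot \opt_{\textnormal{MC}} + 1 \;=\; 2(\opt - 1) + 1 \;=\; 2\opt - 1,
\]
as claimed. There is no serious obstacle here beyond making sure the off-by-one bookkeeping between multicut size and number of components matches up; the essential work has already been done in Lemma~\ref{prop:trees:minmult} and in the cited \minmulticut approximation.
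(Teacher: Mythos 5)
Your proof is correct and follows exactly the route the paper intends: the paper itself derives this corollary directly from Lemma~\ref{prop:trees:minmult} together with the Garg--Vazirani--Yannakakis 2-approximation for \minmulticut on trees, without even spelling out the off-by-one bookkeeping that you make explicit. Your additional observations (that only the forward direction of the lemma is needed for the computed multicut, and that the resulting contacts are proper) are accurate and merely fill in details the paper leaves implicit.
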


\subsubsection{Optimal solution}
\label{sec:dynprog}

\newcommand{\ThmNonCrossingText}{Given a plane straight-line
  drawing of a tree~$T=(V,E)$, a partition of~$E$ into a minimum number
  of \ic subtrees of~$T$ (minimum \gtd) having only \emph{non-crossing}
  contacts can be computed in time $O(n^6)$.}

\newcommand{\ThmProperText}{Given a plane straight-line drawing of a
  tree~$T=(V,E)$, a partition of~$E$ into a minimum number of \ic
  subtrees of~$T$ (minimum \gtd) having only \emph{proper} contacts can be
  computed in time $O(n^6)$.}

In the following we show how to find a minimum GRR partition with only
non-crossing contacts in polynomial time. As is the case with minimum
partitions of simple hole-free polygons into convex~\cite{cd-ocd-85}
or star-shaped~\cite{k-dpsc-85} components, our algorithm is based on
dynamic programming. We describe the dynamic program in detail and use
it to find minimum \gtds for the setting as in
Section~\ref{sec:trees:multicut}, as well as for the setting in which
non-proper, but non-crossing contacts of GRRs are allowed. First, we
shall prove the following theorem.

\begin{theorem}
\label{thm:trees:partition-edges-non-crossing}
\ThmNonCrossingText
\end{theorem}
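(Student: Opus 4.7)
My plan is to root $T$ at an arbitrary leaf and design a bottom-up dynamic program. For each non-root vertex $v$ with parent edge $e_v$, let $c_1,\dots,c_k$ denote its children in the clockwise cyclic order around $v$ starting immediately after $e_v$. By Definition~\ref{def:contacts}, a non-crossing decomposition partitions the edges incident to $v$ into cyclic intervals, so the block at $v$ containing $e_v$ is described by two indices $i,j\ge 0$ with $i+j\le k$ and consists of $\{e_v\}\cup\{vc_1,\dots,vc_i\}\cup\{vc_{k-j+1},\dots,vc_k\}$. Since every GRR $G$ is a connected subtree with $|V(G)|=|E(G)|+1$ and every vertex $w$ of $T$ belongs to exactly $b(w)$ GRRs, where $b(w)$ is the number of blocks at $w$, the number of GRRs in any decomposition equals $k=1-n+\sum_w b(w)$, so minimizing $k$ reduces to minimizing $\sum_w b(w)$.

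The main table $F(v,i,j)$ stores the minimum of $\sum_{w\in\textnormal{subtree}(v)} b(w)$ over valid partial decompositions of the subtree of $v$ (including $e_v$) in which the block at $v$ containing $e_v$ is described by $(i,j)$. The recursion has two ingredients. For each in-block child $c_x$, whose edge $vc_x$ must merge into the same GRR as $e_v$, I minimize over the choices $(i_x,j_x)$ of the block at $c_x$ containing $e_{c_x}$ by looking up $F(c_x,i_x,j_x)$. For the out-of-block children $c_{i+1},\dots,c_{k-j}$ I use a nested interval DP $H_v(a,b)$ that partitions them into contiguous sub-ranges; each sub-range $[a',b']$ forms a new block at $v$ (hence a new GRR rooted at $v$), whose cost of extending into its children's subtrees is $1+\sum_{x\in[a',b']}\min_{i_x,j_x} F(c_x,i_x,j_x)$. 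The root $r$ is processed analogously by running the interval DP on the full cyclic order of edges incident to $r$, and the final answer is read off as $1-n+\sum_w b^{\star}(w)$.

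The hard part will be enforcing the conflict-freeness condition of Lemma~\ref{lem:tree-grr}, since a single GRR can span many vertices and conflict-freeness couples DP choices made at different depths. I would first precompute, in $O(n^2)$ time, the conflict relation on all $O(n^2)$ pairs of edges of $T$, and then guard every transition incrementally: whenever two edges are merged into a common GRR---either by an in-block extension from $v$ into a child $c_x$, or by grouping consecutive out-of-block children of $v$ into the same sub-range---I look up the precomputed table and discard any configuration that introduces a conflicting pair. A careful analysis of the number of DP states (over $v$ and the pair $(i,j)$, together with the nested interval DP states $H_v(a,b)$) and the per-transition work (driven by the extension choices and the incremental conflict checks) then yields the claimed $O(n^6)$ total running time.

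Correctness follows from three ingredients: the cyclic-interval characterization of non-crossing contacts at each vertex of $T$, the identity $k=1-n+\sum_w b(w)$ relating the number of GRRs to the sum of block counts, and Lemma~\ref{lem:tree-grr}, which guarantees that a connected conflict-free subtree is exactly an \ic tree and hence a GRR. Since the DP enumerates every valid cyclic-interval partition at every vertex subject to the conflict-freeness check, its optimum is a minimum \gtd with only non-crossing contacts.
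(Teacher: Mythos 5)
Your overall architecture (root at a leaf, cyclic-interval blocks at each vertex, an inner interval DP over the out-of-block children, and the counting identity $k = 1 - n + \sum_w b(w)$) matches the skeleton of the paper's dynamic program, and the identity itself is a clean alternative to the paper's explicit bookkeeping. However, there is a genuine gap in how you enforce that each assembled component is actually increasing-chord, and it is precisely the point where the paper needs its key structural lemma.

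Your DP state at $v$ records only \emph{which} incident edges lie in the block containing $e_v$ (the pair $(i,j)$), not \emph{what the root component looks like} deeper in the subtrees. When you merge $e_v$ with the root components coming from children $c_1,\dots,c_i$ and $c_{k-j+1},\dots,c_k$, conflict-freeness must hold between every pair of edges of the resulting GRR --- including an edge buried deep in the subtree of $c_1$ and an edge buried deep in the subtree of $c_k$, and later also edges above $v$. The state $(i_x,j_x)$ at a child summarizes exponentially many possible root components with different edge sets, and whether a merge higher up is feasible depends on which one was realized; moreover, the cheapest sub-solution at a child may have a root component that is incompatible with the merge while a slightly more expensive one is compatible. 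Checking the precomputed conflict table only ``whenever two edges are merged'' therefore either tests too few pairs (just $e_v$ against $vc_x$) or requires information the state does not carry. The paper resolves exactly this with Lemma~\ref{lem:trees:combine:paths}: the union of two \ic trees sharing a root is \ic \ifao the union of their four extreme (leftmost/rightmost) root-to-leaf paths is \ic. Consequently the DP state must be indexed by the pair of endpoints $(t_i,t_j)$ of these extreme paths of the root component ($O(n^2)$ representatives per vertex), and merge feasibility reduces to an $O(1)$ test on four precomputed paths. Without this lemma, or an equivalent finite characterization of which root components need to be distinguished, your DP is not sound, and the claimed $O(n^6)$ bound has no basis.
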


At the end of Section~\ref{sec:dynprog}, we modify our dynamic program
slightly to prove Theorem~\ref{thm:trees:partition-edges}, which shows
the same result for the setting in which only partitions with proper
contacts are considered.

\begin{theorem}  
  \label{thm:trees:partition-edges}
\ThmProperText
\end{theorem}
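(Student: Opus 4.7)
The plan is to re-use the dynamic program developed for Theorem~\ref{thm:trees:partition-edges-non-crossing} almost verbatim, with a single, purely local modification at each vertex. First I would recall its structure: after rooting~$T$, the DP processes subtrees bottom-up; at each internal vertex~$v$ it enumerates partitions of the edges incident to~$v$ into groups of cyclically consecutive edges (this consecutiveness captures exactly the non-crossing condition), and then combines the optimal values of the subproblems induced on the child subtrees within each group. The overall running time was shown to be~$O(n^6)$.

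Next I would characterize proper contacts locally at a single vertex. By Definition~\ref{def:contacts}, a contact at~$v$ between two GRRs is proper iff~$v$ is a leaf in at least one of them. Hence if several GRRs meet at~$v$, at most one of them may contain more than one edge incident to~$v$; every other GRR at~$v$ must contain exactly one such edge. Equivalently, in the partition of the edges around~$v$ into GRR-pieces, at most one part has size at least~$2$. Since $\Pi_p \subseteq \Pi_{nc}$, a decomposition lies in $\Pi_p$ iff it lies in $\Pi_{nc}$ and additionally satisfies this local condition at every vertex.

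The modification is therefore to restrict the enumeration at each vertex in the DP to those cyclically consecutive partitions in which at most one part has size greater than one. This filter is checkable in constant time per candidate partition, so the per-vertex work grows by only a constant factor; the overall running time thus remains $O(n^6)$.

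The one step that requires justification is that optimal substructure is preserved under this filtering. This is immediate, however: the restriction is purely local to~$v$, and once a valid partition of the incident edges is fixed, the induced subproblems on the individual parts are fully independent, each being processed by the same DP without any global constraint linking it to the other parts at~$v$. Hence an inductive argument on subtree size, identical to the one used for Theorem~\ref{thm:trees:partition-edges-non-crossing}, shows that the modified DP computes an optimal decomposition from~$\Pi_p$, proving Theorem~\ref{thm:trees:partition-edges}.
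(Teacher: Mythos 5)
Your overall strategy is the paper's: characterize proper contacts by the purely local condition that at each shared vertex at most one of the incident GRRs may contain more than one incident edge (equivalently, all but at most one must have that vertex as a leaf), and then fold this local filter into the non-crossing dynamic program without changing its asymptotic cost. That characterization and the observation that it preserves optimal substructure are both correct. However, there is one inaccuracy that, taken literally, would break the algorithm: the non-crossing DP does \emph{not} enumerate partitions of the edges around a vertex~$u$ into \emph{cyclically consecutive} groups. A root component may contain two non-consecutive children $u_i,u_\ell$ of~$u$, with entire subtrees $T_{u_{j}},\dots,T_{u_{k}}$ ($i<j\le k<\ell$) nested between its branches and partitioned separately; this is exactly why the recurrences for $\sigma_2,\sigma_3,\sigma_4$ carry the terms $\sigma_M[u,j+1,k-1]$. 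Such configurations remain legal (and can be necessary for optimality) under proper contacts, because a component that meets $u$ in a single edge never crosses, and never violates properness with, any other component. So restricting to ``interval'' partitions with at most one big part would discard feasible minimum decompositions. The correct implementation of your filter is the one the paper uses: keep the root-component recurrences intact, redefine $\sigma_M[u,i,j]$ to be the minimum \gtd of $T_{u_i}\cup\dots\cup T_{u_j}$ in which no two of the edges $uu_i,\dots,uu_j$ lie in the same GRR (so that everything outside the one distinguished component has $u$ as a leaf), and replace the two recurrences that compute $\sigma_M$ and $\tau[u,u,u]$ accordingly. With that repair your argument goes through and the $O(n^6)$ bound is unaffected.
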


%Assume~$T$ is rooted at vertex~$r$ with degree~1.
%
Let~$T$ be rooted.  For each vertex~$u$ with parent~$\parent{u}$,
let~$T_u$ be the subtree of~$u$ together with edge~$\parent{u} u$. We
shall use the following definition.

\begin{definition}[root component]
  Given a GRR partition of the edges of a rooted tree~$T'$, we call
  all GRRs containing the root of~$T'$ the \emph{root components}. If
  the root of~$T'$ has degree~1, every GRR partition of~$T'$ has one
  unique root component.
  \label{def:root-component}
\end{definition}
A minimum partition is constructed from the solutions of subinstances
as follows.
Let~$u_1, \dots, u_d$ be the children of~$u$. For subtrees~$T_{u_1}$,
\dots, $T_{u_d}$ whose only common vertex is~$u$, a minimum partition
$P'$ of~$T' = \bigcup_i T_{u_i}$ induces
partitions~$P_{i}$ of~$T_{u_i}$. Furthermore,
$P'$ is created by choosing~$P_{i}$ as
partitions of~$T_{u_i}$ and possibly merging some of the root
components of~$T_{u_i}$, $i=1, \dots, d$.
Note that~$P_{i}$ is not necessarily a minimum partition of~$T_{u_i}$,
if~$P_{i}$ allows us to merge more root components than a minimum
partition of~$T_{u_i}$ would allow. Therefore, for every~$u$ we shall
store minimum partitions of~$T_u$ for various possibilities of the
root component of~$T_u$. For the sake of uniformity, we choose a
vertex with degree~1 as the root of~$T$.

Given a tree root, the number of different subtrees it could be
contained in may be exponential, e.g., it is~$\Theta(2^n)$ in a star.
The key observation for our algorithm is that we do not need to store
a partition for each possible root component. We require the following
notation.

\begin{definition}[Path clockwise between]
  Consider directed non-crossing paths~$\rho_1$, $\rho_2$, $\rho_3$
  with common origin~$r$, endpoints $t_1$, $t_2$, $t_3$ and, possibly,
  common prefixes. Let~$V_i$ be vertices of~$\rho_i$, $i =1,2,3$, and
  let~$T$ be the tree formed by the union of~$\rho_1,\rho_2$
  and~$\rho_3$. We say that~$\rho_2$ is \emph{clockwise
    between}~$\rho_1$ and~$\rho_3$, if the clockwise traversal of the
  outer face of~$T$ visits~$t_1$, $t_2$, $t_3$ in this order; see
  Fig.~\ref{fig:path-clockwise-between}.
\label{def:path-clockwise-between}
\end{definition}

Note that in Definition~\ref{def:path-clockwise-between} the three paths may
(partially) coincide.
Lemma~\ref{lem:trees:combine:paths} shows that to decide whether a
union of two subtrees is \ic, it is sufficient to consider only the
two pairs of ``outermost'' root-leaf paths of each subtree. This
result is crucial for limiting the number of representative
decompositions that need to be considered during our dynamic
programming approach.
The statement of the lemma is illustrated in
Fig.~\ref{fig:trees:combine:paths}.

\begin{figure}[tb]
  \hfill
  \subfloat{\includegraphics[scale=0.9,page=3]{fig/lemma-trees-dynprog.pdf} \label{fig:path-clockwise-between}}
  \hfill
  \subfloat[]{\includegraphics[scale=0.8,page=1]{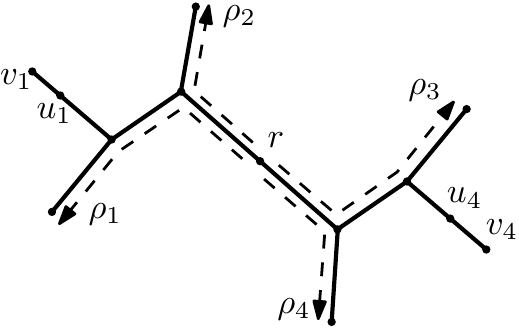}\label{fig:trees:combine:paths}}
  \hfill\null
  \caption{\protect\subref{fig:path-clockwise-between}~Path~$\rho_2$
    is clockwise between paths~$\rho_1$ and~$\rho_3$.
    \protect\subref{fig:trees:combine:paths}~Statement of
    Lemma~\ref{lem:trees:combine:paths}.}
\end{figure}

\newcommand{\lemTreesCombinePathsText}{%
  Let~$T_1$, $T_2$ be \ic trees sharing a single
  vertex~$r$. Let all tree edges be directed away from~$r$. Let
  paths~$\rho_1$, $\rho_2$ in~$T_1$ and~$\rho_3$, $\rho_4$ in~$T_2$ be
  paths from~$r$ to a leaf, such that:
  \begin{compactenum}
  \item[-] every directed path from~$r$ in~$T_1$ is clockwise
    between~$\rho_1$ and~$\rho_2$;
  \item[-] every directed path from~$r$ in~$T_2$ is clockwise
    between~$\rho_3$ and~$\rho_4$;
  \item[-] for~$i=1,\dots,4$, path~$\rho_i$ is clockwise
    between~$\rho_{i-1}$ and~$\rho_{i+1}$ (indices modulo 4).
  \end{compactenum}

  Then, $\rho_1 \cup \rho_2 \cup \rho_3 \cup \rho_4$ is \ic if and
  only if $T_1 \cup T_2$ is \ic.  }
\begin{lemma}
  \label{lem:trees:combine:paths}
\lemTreesCombinePathsText
\end{lemma}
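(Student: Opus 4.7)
The easy direction is immediate: $\rho_1 \cup \rho_2 \cup \rho_3 \cup \rho_4$ is a connected subtree of $T_1 \cup T_2$ (all four root-to-leaf paths share the common vertex $r$ and follow edges of $T_1 \cup T_2$), so any pair of conflicting edges inside $\rho_1 \cup \cdots \cup \rho_4$ is also a pair of conflicting edges inside $T_1 \cup T_2$. By Lemma~\ref{lem:tree-grr}, \ic-ness therefore transfers from $T_1 \cup T_2$ to the subtree.

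For the hard direction I would argue by contraposition: assume $T_1 \cup T_2$ is not \ic and produce conflicting edges inside $\rho_1 \cup \cdots \cup \rho_4$. By Lemma~\ref{lem:tree-grr} there is a conflicting pair $e,f$ in $T_1 \cup T_2$, and since $T_1, T_2$ are each \ic, I may assume $e \in T_1$ and $f \in T_2$. Orient $e = uv$ away from $r$ and let $p$ be an interior point of $e$ whose normal line crosses $f$ at a point $q$. The target is a conflicting pair $(e', f')$ with $e' \in \rho_1 \cup \rho_2$ and $f' \in \rho_3 \cup \rho_4$.

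The central geometric tool is the halfplane characterization used in the proof of Lemma~\ref{lem:split}: in the \ic tree $T_1$, every vertex $w \notin \{u,v\}$ lies in $\hp{u}{v}$ if $w$ descends from $v$ and in $\hp{v}{u}$ otherwise; in particular the shared root $r$ lies in $\hp{v}{u}$. Consequently, no edge of $T_1$ other than $e$ itself meets the open strip $S_e$ between the two perpendiculars to $e$ at $u$ and $v$, whereas $f \in T_2$ does (this is exactly the content of the conflict of $e$ with $f$). The plan then splits into two symmetric pushes to the boundary: first, find an edge $f' \in \rho_3 \cup \rho_4$ whose interior also meets $S_e$, giving a conflict of $e$ with $f'$; second, apply the analogous argument to the strip $S_{f'}$ with roles swapped to find an edge $e' \in \rho_1 \cup \rho_2$ whose interior meets $S_{f'}$. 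The pair $(e', f')$ then furnishes a conflict within $\rho_1 \cup \cdots \cup \rho_4$, contradicting its \ic-ness.

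The main obstacle is the first push: showing that if some edge of $T_2$ has an interior point in $S_e$ then so does some edge of $\rho_3 \cup \rho_4$. Since $r$ lies in $\hp{v}{u}$, it is outside the closed strip $\overline{S_e}$, and any path in $T_2$ from $r$ reaching $S_e$ must cross the perpendicular to $e$ at $u$ via some edge $g$. If $g$ already sits on $\rho_3 \cup \rho_4$ we are done; otherwise $g$ sits on an interior root-to-leaf path of $T_2$ that is clockwise between $\rho_3$ and $\rho_4$, and I must argue that at least one of $\rho_3, \rho_4$ accompanies this crossing. I expect this to combine the clockwise-between hypothesis (which controls the outer boundary of $T_2$'s embedding at and near $r$) with the halfplane property applied inside $T_2$ itself at the branching vertex where the interior path separates from the boundary paths, ruling out the ``wrap-around'' configuration in which an interior path escapes across the perpendicular at $u$ while the boundary paths stay on $r$'s side; a clean formulation likely requires an extremality argument over the vertices of $T_2$ that sit in $\overline{S_e}$ together with a case split on the orientation of $f$ relative to $e$.
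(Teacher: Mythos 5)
Your easy direction is fine, and your high-level plan (contrapositive: push a conflict between arbitrary edges of $T_1$ and $T_2$ onto the four boundary paths) is the same idea that drives the paper's proof. However, there is a genuine gap exactly where the content of the lemma lies: the ``first push'' is never carried out --- you describe the wrap-around obstacle and then write that you \emph{expect} the clockwise-between hypothesis plus the halfplane property inside $T_2$ to resolve it. Those ingredients cannot suffice. The claim ``if some edge of $T_2$ meets the strip $S_e$, then some edge of $\rho_3\cup\rho_4$ does'' is false as a statement about $T_2$ and an external strip alone: let $T_2$ consist of three edges leaving $r$, one pointing up into a horizontal strip above $r$ and two pointing down-left and down-right at pairwise angles exceeding $90\dg$. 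This $T_2$ is \ic, and with $\rho_3,\rho_4$ chosen as the two downward edges every directed path of $T_2$ is clockwise between them, yet only the interior (upward) path enters the strip. So any correct proof of the push must use the hypotheses that couple the two trees, in particular that the cross-tree unions such as $\rho_2\cup\rho_3$ are \ic and that the four paths interleave cyclically. This is precisely what the paper's proof exploits: after normalizing the conflicting pair $u_1v_1\in T_1$, $u_4v_4\in T_2$ with normal $\ell$ at $u_1$, it takes the \emph{last} edges $u_2v_2$ and $u_3v_3$ of the two inner boundary paths and chains divergence conditions for $\ray{u_1}{v_1}$ vs.\ $\ray{u_2}{v_2}$ (from $T_1$ being \ic), $\ray{u_2}{v_2}$ vs.\ $\ray{u_3}{v_3}$ (from $\rho_2\cup\rho_3$ being \ic --- a cross-tree condition your sketch never invokes), and $\ray{u_3}{v_3}$ vs.\ $\ray{u_4}{v_4}$ (from $T_2$ being \ic) to pin $u_2,u_3$ below $\ell$ and derive a contradiction with the position of $u_4v_4$.

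A secondary problem is the ``second push.'' The conflict relation is not symmetric: after the first push you would know that $f'\in\rho_3\cup\rho_4$ meets $S_e$, i.e.\ $e$ conflicts with $f'$, but your swapped application needs some edge of $T_1$ to meet $S_{f'}$, which is not implied. To make a two-stage reduction work you would have to prove the push for a conflicting \emph{pair} in either direction (equivalently, apply the full lemma once with $T_2$ replaced by $\rho_3\cup\rho_4$ and once with the roles of the trees exchanged), and each application again requires the cross-tree hypotheses discussed above. As it stands, the central geometric step is missing and the route sketched for it would not close the gap.
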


\begin{proof}
  Consider trees~$T_1$, $T_2$ and paths~$\rho_1,\dots,\rho_4$
  satisfying the condition of the lemma; see
  Fig.~\ref{fig:trees:combine:paths} for a sketch. Note that~$\rho_1$
  and~$\rho_2$ may have common prefixes, and so may~$\rho_3$
  and~$\rho_4$.
% Let all tree edges be oriented away from~$r$. 
  Assume the four paths $\rho_1,\dots,\rho_4$ are drawn with
  increasing chords, but the union~$T'$ of the trees~$T_1$ and~$T_2$
  is not. Then, there exist edges~$u_1 v_1$ in~$T_1$ and~$u_4 v_4$
  in~$T_2$, such that the normal~$\ell$ to~$u_1 v_1$ at~$u_1$ crosses
  edge~$u_4 v_4$.
  \begin{figure}[htb]
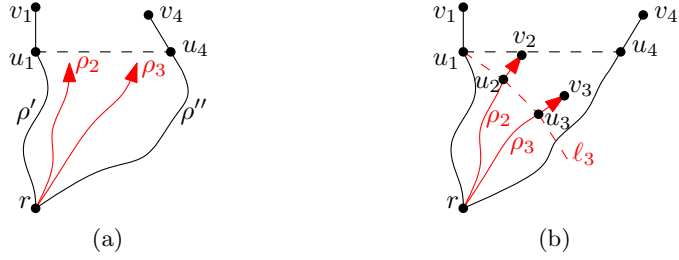

    \hfill
    \subfloat[]{\includegraphics[page=2]{fig/lemma-combine-trees.pdf}\label{fig:trees:combine:paths:proof:1}}
    \hfill
    \subfloat[]{\includegraphics[page=3]{fig/lemma-combine-trees.pdf}\label{fig:trees:combine:paths:proof:2}}\hfill\null
    \caption{Constructions in Lemma~\ref{lem:trees:combine:paths}.}\label{fig:trees:combine:paths:proof}
  \end{figure}
  \begin{claim}
    \Wlg, we may assume the following; see
    Fig.~\ref{fig:trees:combine:paths:proof}.
    \begin{inparaenum}[(i)]
    \item Edge~$u_1 v_1$ points vertically upwards,
    \item edge~$u_4 v_4$ is the first edge on the $r$-$v_4$
      path~$\rho''$ crossed by~$\ell$ and points upwards,
    \item vertex~$u_4$ is on~$\ell$ and to the right of~$u_1$.
    \end{inparaenum}
  \end{claim} 

  We ensure~(i) by rotation. Then, point~$r$ is below~$\ell$ (or on
  it), since the $r$-$v_1$ path~$\rho'$ is \ic. For~(ii), we
  choose~$u_4 v_4$ as the first edge with this property. If it points
  downward, there is an edge on the~$r$-$u_4$ path crossed
  by~$\ell$. For~(iii), if~$\ell$ crosses~$u_4 v_4$ in an interior
  point~$p$, we subdivide the edge at~$p$ and replace~$u_4 v_4$ by~$p
  v_4$. If~$u_4$ is left of~$u_1$, we mirror the drawing
  horizontally. This proves the claim.

  First, assume that~$v_1$, $v_4$ are not on paths $\rho_1, \dots,
  \rho_4$. Recall that two of the paths~$\rho_1, \dots, \rho_4$ (\wlg,
  $\rho_2$ and~$\rho_3$) are between~$\rho'$ and~$\rho''$. Let $u_2
  v_2$ and~$u_3 v_3$ be the last two edges on~$\rho_2$ and~$\rho_3$,
  respectively. Note that $\ray{u_1}{v_1}$ and~$\ray{u_2}{v_2}$ must
  diverge, and so must~$\ray{u_2}{v_2}$ and~$\ray{u_3}{v_3}$. If $u_4
  v_4$ points upwards and to the left as in
  Fig.~\ref{fig:trees:combine:paths:proof:1}, then~$\ray{u_3}{v_3}$
  and~$\ray{u_4}{v_4}$ must converge; a contradiction. Thus, $u_2
  v_2$, $u_3 v_3$ and~$u_4 v_4$ point upwards and to the right; see
  Fig.~\ref{fig:trees:combine:paths:proof:2}.
  Since~$T_1$ as well as the union of~$\rho_1$ and~$\rho_2$ is \ic,
  the angles~$\angle v_1 u_1 u_2$, $\angle u_1 u_2 v_2$, $\angle v_2
  u_2 u_3$ and~$\angle u_2 u_3 v_3$ are between~$90\dg$
  and~$180\dg$. Therefore, vertices~$u_2$ and~$u_3$ must lie
  below~$\ell$. Let~$\ell_3$ be the normal to~$u_3 v_3$
  at~$u_3$. Since~$T_2$ is drawn with increasing chords, $u_4 v_4$
  must lie below~$\ell_3$, a contradiction.

  The proof works similarly if~$u_1 v_1$ is on~$\rho_2$ (by
  identifying~$u_1 v_1$ and~$u_2 v_2$), and the remaining cases are
  symmetric.
\end{proof}

We now describe our dynamic programs for proper and non-crossing contacts in detail. We first give an overview of the general approach, then describe the non-crossing case and afterwards modify it for proper contacts.
For a root component~$R$ of~$T_u$, let the \emph{leftmost path} (or,
respectively, the \emph{rightmost path}) be the simple path in~$R$
starting at~$\parent{u}$ which always chooses the next
counterclockwise (clockwise) edge.

The basic idea of the dynamic program is as follows. For a given
subtree~$T_u$, we store the sizes of the minimum \gtds
of~$T_u$ for different possibilities of the root component. We combine
these solutions to compute minimum \gtds of bigger
subtrees. For this step, we must be able to test which root components
can be merged into one GRR. Instead of storing the partition sizes for
\emph{all} possible root components, we only store the minimum
partition size for each combination of the leftmost and rightmost path
of the root component. Thus, for each~$T_u$, we only store~$O(n^2)$
partition sizes.  Note that this is sufficient, since by
Lemma~\ref{lem:trees:combine:paths} the question whether two root
components can be merged depends only on their leftmost and rightmost
paths.

If~$u$ is the root of a subtree~$T'$ and has degree~2 or greater
in~$T'$, there might be several root components in a partition
of~$T'$, i.e., GRRs containing~$u$. Let~$R$ be some fixed root
component of the considered \gtd. If~$u$ has degree~2 or greater
in~$R$, then we need a reference direction to define the leftmost and
rightmost paths of~$R$. Let~$\rho_l$ be the leftmost path of the
rooted tree~$R+\parent{u}u$. Note that~$\rho_l$ contains the
edge~$\parent{u}u$. Then, the leftmost path of~$R$
is~$\rho_l - \parent{u}u$. The rightmost path of~$R$ is defined
analogously.

Recall that $T_u$ is the subtree of~$u$ together with
edge~$\parent{u} u$. For each pair of vertices~$t_i, t_j$ in~$T_u$,
cell~$\tau[u, t_i, t_j]$ of a table~$\tau$ stores the size of a
minimum GRR decomposition of~$T_u$, in which the root component has
the $\parent{u}$-$t_i$ path and the $\parent{u}$-$t_j$ path as its
leftmost and rightmost path, respectively. Cell~$\tau[u]$ stores the
size of a minimum GRR decomposition of~$T_u$. It
is~$\tau[u] = \min_{t_i,t_j} \tau[u, t_i, t_j]$. For simplicity, we
set~$\min \emptyset = \infty$.

Clearly, for each leaf~$u$, $\tau[u,u,u]=1$,
and~$\tau[u,t_i,t_j] = \infty$ for all other values
of~$t_i,t_j$. Let~$v$ be the only neighbor of the root~$r$ of the
tree~$T$. Then,~$\tau[v]$ is the size of a minimum GRR decomposition
of~$T$. We show how to compute~$\tau$ bottom-up.

For ease of presentation, we use the following notation. Vertex~$u$
is not a leaf and has children~$u_1, \dots, u_d$. Let~$\parent{u}$,
$u_1$, \dots, $u_d$ have this clockwise order around~$u$.
Let~$t_i \neq u$ be a vertex in~$T_{u_i}$. We
define~$t_j, t_k, t_\ell$ analogously for
$1 \leq i \leq j \leq k \leq \ell \leq d$. Let~$\rho_i$ be the
$u$-$t_i$ path.

We consider two settings: allowing arbitrary non-crossing contacts and
allowing only proper contacts. The dynamic programs for the two cases
are very similar, and the program for arbitrary non-crossing contacts
is slightly more complex. To reduce duplication, we first present the
program for arbitrary non-crossing contacts, and later show how to
modify it for the case when only proper contacts are allowed.

\subsubsection{Non-crossing contacts}

Recall that vertex~$u$ can live in a root component~$R$ together with
non-consecutive children~$u_i$, $u_\ell$, $i < \ell$.  If arbitrary
non-crossing contacts are allowed, some nodes from
$u_{i+1}, \dots, u_{\ell-1}$ that are not in~$R$ can also be in one
GRR. Therefore, after choosing the root component~$R$ of~$T_u$, we
must be able to recursively compute the minimum size of a partition of
the union of~$T_{u_j}$, $u_j \notin R$. We introduce additional tables
for this purpose.

In addition to the table~$\tau$ storing the
values~$\tau[u, t_i, t_j]$, we use tables~$\sigma_\Delta$
for~$\Delta=1, \dots, 4$, as well as tables~$\sigma$ and~$\sigma_M$.
These additional tables will be used to formulate the recurrences
for~$\tau$. For fixed~$u$, $i$, $j$, the corresponding values of
$\sigma_\Delta$, $\sigma$ and~$\sigma_M$ denote the sizes of minimum
\gtds of~$T_{u_i} \cup T_{u_{i+1}} \cup \dots \cup T_{u_j}$
with certain properties. Table~$\sigma_\Delta$ considers different
possibilities of the leftmost and rightmost paths of the root
components as well as the degree~$\Delta$ of~$u$ in the root
component.
Recall that in an \ic tree drawing, every vertex has degree at most~4.
Formally, the value~$\sigma_\Delta[u,t_i,t_j]$ denotes the minimum
number of GRRs in a \gtd of the
tree~$T_{u_i} \cup T_{u_{i+1}} \cup \dots \cup T_{u_j}$, in which
there exists a GRR~$R$ with the rightmost path~$u$-$t_i$ and leftmost
path~$u$-$t_j$ and in which~$u$ has degree~$\Delta$ in~$R$.

For some recurrences, we need to aggregate the various possibilities
stored in~$\sigma_\Delta$. For this purpose, we use tables~$\sigma$
and~$\sigma_M$ as follows. The value~$\sigma$ is the minimum
of~$\sigma_\Delta$ over all values of~$\Delta$. We
define~$\sigma[u,t_i,t_j]$ as
$\sigma[u,t_i,t_j] = \min_{\Delta = 1, \dots, 4}
\sigma_\Delta[u,t_i,t_j]$.

The value~$\sigma_M$ stores the minimum over all combinations of the
leftmost and rightmost paths. Thus, it stores the size of the minimum
partition of~$T_{u_i} \cup \dots \cup T_{u_j}$, regardless of the root
component.
Formally,~$\sigma_M[u,i,j]$ denotes the minimum number of GRRs in a
\gtd of~$T_{u_i} \cup \dots \cup T_{u_j}$.
Note that the arguments of~$\sigma_M[u, \cdot, \cdot]$ are
indices~$i$,~$j$ of a pair of children of~$u$, and the arguments
of~$\sigma_\Delta[u, \cdot, \cdot]$ and~$\sigma[u, \cdot, \cdot]$ are
a pair of vertices in~$T_{u_i} \cup \dots \cup
T_{u_j}$.

In the following recurrences, for a fixed pair of vertices~$t_i$
and~$t_\ell$, all possibilities for~$t_j$ and~$t_k$ are considered,
such that both paths~$\rho_j$ and~$\rho_k$ are clockwise
between~$\rho_i$ and~$\rho_\ell$. We test whether root
components~$R_1$ with the leftmost and rightmost paths~$\rho_i$
and~$\rho_j$ and~$R_2$ with the leftmost and rightmost paths~$\rho_k$
and~$\rho_\ell$ can be merged to a single GRR. We show that this
covers all representative possibilities for a root component of a \gtd
of~$T_{u_i} \cup \dots \cup T_{u_\ell}$ to have the leftmost and
rightmost paths~$\rho_i$ and~$\rho_\ell$, respectively.

\newcounter{rec}
\newcommand{\nextrec}[1]{\refstepcounter{rec}(\arabic{rec})\label{#1}}

\begin{lemma}
\label{lem:noncrossing:rec}
  We have the recurrences
  \begin{compactenum}
  \item[\nextrec{noncrossing:rec:sigma1}]  
    $\sigma_1[u, t_i, t_j] = \sigma[u, t_i, t_j] = \tau[u_i, t_i, t_j]$
    for all~$t_i, t_j \neq u$ in~$T_{u_i}$, $i=1, \dots, d$;
  \item[\nextrec{noncrossing:rec:sigma11}]
    $\sigma_M[u, i, i] = \tau[u_i]$ for all~$i=1, \dots, d$;
  \item[\nextrec{noncrossing:rec:sigma2}]
    $\sigma_2[u,t_i,t_\ell] = \min_{t_j, t_k} \{ \sigma_1[u,t_i,t_j] + \sigma_M[u,
    j+1, k-1] + \sigma_1[u, t_k, t_\ell] - 1 \}$;
  \item[\nextrec{noncrossing:rec:sigma3}]
    $\sigma_3[u,t_i,t_\ell] = \min \{ \min_{t_j,t_k} \{
    \sigma_2[u,t_i,t_j] + \sigma_M[u, j+1, k-1] + \sigma_1[u, t_k,
    t_\ell] - 1 \},$

    \hspace{29.9mm}$\min_{t_j,t_k} \{ \sigma_1[u,t_i,t_j] +
    \sigma_M[u, j+1, k-1] + \sigma_2[u, t_k, t_\ell] - 1 \} \}$;

  \item[\nextrec{noncrossing:rec:sigma4}]
    $\sigma_4[u,t_i,t_\ell] = \min_{t_j, t_k} \{ \sigma_1[u, t_i, t_i] +
    \sigma_M[u, i+1, j-1] + \sigma_1[u, t_j, t_j]$

    \hspace{18.9mm}$+\sigma_M[u, j+1, k-1] + \sigma_1[u, t_k, t_k]
    +\sigma_M[u, k+1, \ell-1] + \sigma_1[u, t_\ell, t_\ell] \}-3$.
  \end{compactenum}
  The minimizations in lines~\eqref{noncrossing:rec:sigma3}
  and~\eqref{noncrossing:rec:sigma4} only consider
  vertices~$t_j, t_k$, such that
  $\rho_i \cup \rho_j \cup \rho_k \cup \rho_\ell$ is \ic.
\end{lemma}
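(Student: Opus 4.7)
The plan is to establish each recurrence by a pair of matching inequalities: LHS~$\ge$~RHS by extracting subpartitions from an optimal witness of the LHS, and LHS~$\le$~RHS by assembling witnesses of the RHS into a valid partition of the combined instance. The central tool throughout will be Lemma~\ref{lem:trees:combine:paths}, which certifies that two \ic subtrees meeting at a common root can be fused into a single \ic tree exactly when their four extremal root-to-leaf (leftmost/rightmost) paths together form an \ic subtree; this is precisely what makes minimising only over pairs of extremal endpoints per subtree sufficient, and avoids the exponential blow-up that an enumeration over all possible root components would incur (as noted for the star example).

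The base cases are direct. Recurrence~\ref{noncrossing:rec:sigma1} holds because a root component of degree~$1$ at~$u$ lies entirely in a single child subtree~$T_{u_i}$, so its partition cost equals~$\tau[u_i,t_i,t_j]$, and~$\sigma$ agrees with~$\sigma_1$ here because no other value of~$\Delta$ produces a finite entry with both extremal paths inside the same child subtree. Recurrence~\ref{noncrossing:rec:sigma11} is immediate from the definitions since~$\sigma_M[u,i,i]$ is the minimum GTD size of~$T_{u_i}$, namely~$\tau[u_i]$.

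For~\ref{noncrossing:rec:sigma2}, take an optimal GTD~$P$ witnessing~$\sigma_2[u,t_i,t_\ell]$ with root component~$R$; because~$u$ has degree~$2$ in~$R$ and the leftmost/rightmost paths of~$R$ are~$\rho_i,\rho_\ell$, the two child edges of~$u$ in~$R$ are necessarily~$uu_i$ and~$uu_\ell$. Setting~$R_1=R\cap T_{u_i}$ and~$R_2=R\cap T_{u_\ell}$, I let~$t_j\in T_{u_i}$ and~$t_k\in T_{u_\ell}$ be the endpoints of the rightmost path of~$R_1$ and the leftmost path of~$R_2$. Restricting~$P$ to~$T_{u_i}$, to the middle subtrees~$T_{u_{i+1}}\cup\dots\cup T_{u_{\ell-1}}$, and to~$T_{u_\ell}$ produces GTDs of sizes at least~$\sigma_1[u,t_i,t_j]$,~$\sigma_M[u,i+1,\ell-1]$ and~$\sigma_1[u,t_k,t_\ell]$; since~$R_1$ and~$R_2$ are counted separately in the first and third restriction but coincide as the single GRR~$R$ in~$P$, summing and subtracting~$1$ yields LHS~$\ge$~RHS. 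Conversely, given witnesses for the three subproblems, I assemble them and merge~$R_1\cup R_2$ into one component; Lemma~\ref{lem:trees:combine:paths} certifies the merge is a GRR exactly when $\rho_i\cup\rho_j\cup\rho_k\cup\rho_\ell$ is \ic, which is precisely the side condition stated with the recurrences.

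Recurrences~\ref{noncrossing:rec:sigma3} and~\ref{noncrossing:rec:sigma4} follow by the same template, after invoking the classification of \ic trees cited at the start of the section: an \ic tree with a vertex of degree~$3$ (resp.~$4$) is a subdivision of the windmill (resp.\ of~$K_{1,4}$ centred at that vertex). A degree-$4$ root component at~$u$ therefore has each of its four branches equal to a single root-to-leaf path, which is precisely what~$\sigma_1[u,t_m,t_m]=\tau[u_m,t_m,t_m]$ encodes, and the~$-3$ in~\ref{noncrossing:rec:sigma4} compensates for merging the four degree-$1$ subcomponents into one GRR. For~\ref{noncrossing:rec:sigma3}, either the two leftmost branches of~$R$ group into a degree-$2$ subcomponent while the third branch stands alone on the right, or the leftmost branch stands alone while the two rightmost group; the two minima enumerate these two groupings, and as~$t_j,t_k$ range, every placement of the middle branch in the cyclic order around~$u$ is realised. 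The main obstacle will be confirming that these RHS enumerations really do exhaust every representative shape of the root component at~$u$, especially for~\ref{noncrossing:rec:sigma3}; but Lemma~\ref{lem:trees:combine:paths} again reduces this to a finite check on four extremal paths, so correctness boils down to a purely combinatorial case analysis on how the root component branches at~$u$.
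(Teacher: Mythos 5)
Your proposal is correct and follows essentially the same route as the paper's proof: a two-sided inequality for each recurrence, with the root component classified by the degree of~$u$ in it (degree~$2$ splitting into one piece per extreme child subtree, degree~$3$ into a degree-$2$ plus a degree-$1$ piece with the middle branch assigned to either side, degree~$4$ forced to be four root-to-leaf paths by the $K_{1,4}$ characterization), and Lemma~\ref{lem:trees:combine:paths} certifying the merges in the converse direction. The exhaustiveness worry you flag for the degree-$3$ case is resolved exactly by the two-way case split you describe, which is also how the paper argues it.
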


\begin{figure}[tb]
  \hfill
  \subfloat[]{\includegraphics[page=1, scale=0.80]{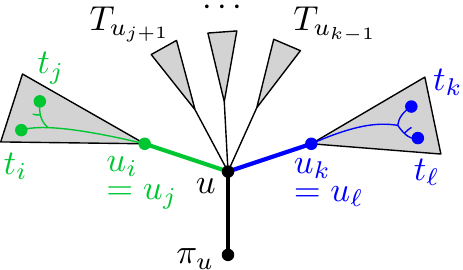} \label{fig:noncrossing:rec:1}}\hfill
  \subfloat[]{\includegraphics[page=2, scale=0.80]{fig/lemma-trees-dynprog-noncrossing} \label{fig:noncrossing:rec:2}}\hfill
  \subfloat[]{\includegraphics[page=3, scale=0.80]{fig/lemma-trees-dynprog-noncrossing} \label{fig:noncrossing:rec:3}}\hfill\null
  \caption{Recurrences in
    Lemma~\ref{lem:noncrossing:rec}. \protect\subref{fig:noncrossing:rec:1}~recurrence~\eqref{noncrossing:rec:sigma1};
    \protect\subref{fig:noncrossing:rec:2}~recurrence~\eqref{noncrossing:rec:sigma2}
    for the case~$m=j$;
    \protect\subref{fig:noncrossing:rec:2}~recurrence~\eqref{noncrossing:rec:sigma2}
    for the case~$m=k$.}
  \label{fig:noncrossing:rec}
\end{figure}

\begin{proof}
  Consider recurrence~\eqref{noncrossing:rec:sigma1} and a \gtd of
  $T_{u_i} \cup \dots \cup T_{u_j}$ of size~$x$ with root
  component~$R$, such that~$R$ has~$u$-$t_i$ and~$u$-$t_j$ as its
  leftmost and rightmost paths, respectively. Since~$u$ has degree~$1$
  in~$R$, it must be~$i=j$. Thus, this partition is a \gtd
  of~$T_{u_i}$ with~$R$ as the root component, so by definition
  of~$\tau$ we have $\tau[u, t_i, t_j] \leq x$. Thus, we have
  $\sigma_1[u, t_i, t_j] \geq \tau[u_i, t_i, t_j]$. Conversely,
  consider a \gtd of~$T_{u_i}$, such that its root component~$R$
  has~$u$-$t_i$ and~$u$-$t_j$ as its leftmost and rightmost
  paths. Thus, $t_i$ and~$t_j$ are both in~$T_{u_i}$, and vertex~$u$
  has degree~$1$ in~$R$. By the definition of~$\sigma_1$, this
  partition has size at least~$\sigma_1[u, t_i, t_j]$. Thus, we have
  $\sigma_1[u, t_i, t_j] \leq \tau[u_i, t_i, t_j]$. Finally, since
  for~$i=j$ we have $T_{u_i} \cup \dots \cup T_{u_j} = T_{u_i}$,
  vertex~$u$ can only have degree~$1$ in the root component of a \gtd,
  so we have $\sigma_1[u, t_i, t_j] = \sigma[u, t_i, t_j]$. Thus,
  recurrence~\eqref{noncrossing:rec:sigma1} holds.

  Recurrence~\eqref{noncrossing:rec:sigma11} holds trivially, since by
  the definitions of~$\sigma_M$ and~$\tau[ \cdot ]$, both
  $\sigma_M[u, i, i]$ and~$\tau[u_i]$ denote the size of the minimum
  GRR partition of~$T_{u_i}$.

  Consider recurrence~\eqref{noncrossing:rec:sigma2} and a \gtd~$P$ of
  $T_{u_i} \cup \dots \cup T_{u_\ell}$ of size~$x$ with root
  component~$R$. Again, let~$R$ have~$u$-$t_i$ and~$u$-$t_\ell$ as its
  leftmost and rightmost paths, respectively. Let~$u$ have degree~$2$
  in~$R$. Therefore,~$i \neq \ell$, and~$R$ only consists of two
  parts~$R_1, R_2$ (green and blue in
  Fig.~\ref{fig:noncrossing:rec:1}, respectively), such that~$R_1$ is
  contained in~$T_{u_i}$ and~$R_2$ is contained
  in~$T_{u_\ell}$. Partition~$P$ induces a \gtd~$P_1$ of~$T_{u_i}$ of
  size~$x_1$, a \gtd~$P_2$ of~$T_{u_\ell}$ of size~$x_2$ and a
  \gtd~$P_3$ of~$T_{u_{i+1}} \cup \dots \cup T_{u_{\ell-1}}$ of
  size~$x_3$. Since $R_1 \cup R_2 = R$, we have
  $x = x_1 + x_2 + x_3 -1$. Let~$u_j$ be a vertex in~$R_1$, such
  that~$u$-$u_j$ is the rightmost path of~$R_1$. Let~$u_k$ be the
  vertex in~$R_2$, such that~$u$-$u_k$ is the leftmost path
  of~$R_2$. The subtree
  $\rho_i \cup \rho_j \cup \rho_k \cup \rho_\ell$ is contained in~$R$
  and, therefore, is \ic. By the definition of~$\sigma_1$
  and~$\sigma_M$, we have $\sigma_1[u,t_i,t_j] \leq x_1$,
  $\sigma_1[u, t_k, t_\ell] \leq x_2$ and
  $\sigma_M[u, j+1, k-1] \leq x_3$. Thus, the right part of
  recurrence~\eqref{noncrossing:rec:sigma2} is at most~$x$, so the
  right side is upper bounded by the left side.

  Conversely, let the right side of
  recurrence~\eqref{noncrossing:rec:sigma2} be less than~$\infty$.
  Let~$j,k$, $t_j, t_k$ be chosen such that the minimum on the right
  side is realized. Then,
  $\rho_i \cup \rho_j \cup \rho_k \cup \rho_\ell$ is
  \ic. Let~$\sigma_1[u,t_i,t_j] = x_1$, and let~$P_1$ be a \gtd of
  size~$x_1$ realizing the minimum in the definition
  of~$\sigma_1[u,t_i,t_j]$. Let~$R_1$ be the root component
  of~$P_1$. Then, $R_1$ has leftmost and rightmost paths~$u$-$t_i$
  and~$u$-$t_j$ respectively.
  Analogously, let~$\sigma_1[u,t_k,t_\ell] = x_2$, and let~$P_2$ be a
  \gtd of size~$x_2$ realizing the minimum in the definition
  of~$\sigma_1[u,t_k,t_\ell]$. Let~$R_2$ be the root component
  of~$P_2$. Then, $R_2$ has leftmost and rightmost paths~$u$-$t_k$
  and~$u$-$t_\ell$ respectively. Finally, let~$P_3$ be a \gtd of
  size~$x_3$ realizing the minimum in the definition
  of~$\sigma_M[u, j+1, k-1]$.
  By Lemma~\ref{lem:trees:combine:paths}, $R_1 \cup R_2$ is
  \ic. Consider the \gtd~$P$ formed by taking the union
  of~$P_1$, $P_2$ and~$P_3$ and merging~$R_1$ and~$R_2$. Partition~$P$
  has size~$x_1 + x_2 + x_3 - 1$. Its root component~$R$ has leftmost
  and rightmost paths~$u$-$t_i$ and~$u$-$t_\ell$ respectively, and~$u$
  has degree~2 in~$R$. Thus, by the definition
  of~$\sigma_2[u,t_i,t_\ell]$, it
  is~$\sigma_2[u,t_i,t_\ell] \leq x_1 + x_2 + x_3 - 1$. Thus, the left
  side of recurrence~\eqref{noncrossing:rec:sigma2} is upper bounded
  by its right side. Therefore,
  recurrence~\eqref{noncrossing:rec:sigma2} holds.

  Next, consider recurrence~\eqref{noncrossing:rec:sigma3} and a GRR
  partition~$P$ of $T_{u_i} \cup \dots \cup T_{u_\ell}$ of size~$x$
  with root component~$R$. Once again, let~$R$ have~$u$-$t_i$
  and~$u$-$t_\ell$ as its leftmost and rightmost paths,
  respectively. Let~$u$ have degree~$3$ in~$R$. Therefore, it
  is~$i \neq \ell$. 
  In addition to~$u_i$ and~$u_\ell$, the GRR~$R$ must contain another
  child~$u_m$ of~$u$, such that~$i < m < \ell$. We can partition~$R$
  into two GRRs~$R_1$ and~$R_2$, such that~$u_i$ is in~$R_1$, $u_\ell$
  in~$R_2$ and~$u_m$ is either in~$R_1$ or in~$R_2$. First, assume
  $u_m$ is in~$R_1$; see Fig.~\ref{fig:noncrossing:rec:2}. The other
  case is symmetric; see Fig.~\ref{fig:noncrossing:rec:3}. We
  choose~$j=m$. Let~$t_j$ be a vertex in~$T_{u_j}$, such that
  $u$-$t_j$ is the rightmost path of~$R_1$. Let~$t_k$ be a vertex
  in~$T_{u_\ell}$, such that~$u$-$t_k$ is the leftmost path
  in~$R_2$. Note that in this case, $t_k$ and~$t_\ell$ are in the same
  subtree~$T_{u_k} = T_{u_\ell}$.
  We can split the partition~$P$ into GRR partitions~$P_1$
  of~$T_{u_i} \cup \dots \cup T_{u_j}$ of size~$x_1$, $P_2$
  of~$T_{u_\ell}$ of size~$x_2$ and~$P_3$
  of~$T_{u_{j+1}} \cup \dots \cup T_{u_{k-1}}$ of size~$x_3$.  It
  holds: $R = R_1 \cup R_2$, and apart from~$R$, no other GRR in~$P$
  is split, since the contacts are non-crossing. Thus, it
  is~$x = x_1 + x_2 + x_3 - 1$. By definition,
  $\sigma_2[u,t_i,t_j] \leq x_1$, $\sigma_1[u, t_k, t_\ell] \leq x_2$
  and~$\sigma_M[u, j+1, k-1] \leq x_3$. Therefore, the right side of
  recurrence~\eqref{noncrossing:rec:sigma3} is at most~$x$. The same
  holds for the symmetric case in which $u_m$ is in~$R_2$ by analogous
  arguments. Thus, the right side of
  recurrence~\eqref{noncrossing:rec:sigma3} is upper bounded by its
  left side.

  Conversely, let the right side of
  recurrence~\eqref{noncrossing:rec:sigma3} be less than~$\infty$.
  Let~$j,k$, $t_j, t_k$ be chosen such that the minimum on the right
  side is realized. First, assume it is realized by
  $\sigma_2[u,t_i,t_j] + \sigma_M[u, j+1, k-1] + \sigma_1[u, t_k,
  t_\ell] - 1$.  Then, $\rho_i \cup \rho_j \cup \rho_k \cup \rho_\ell$
  is \ic.
  Let~$\sigma_2[u,t_i,t_j] = x_1$, and let~$P_1$ be a GRR partition of
  size~$x_1$ realizing the minimum in the definition
  of~$\sigma_2[u,t_i,t_j]$. Let~$R_1$ be the root component
  of~$P_1$. Then, $R_1$ has leftmost and rightmost paths~$u$-$t_i$
  and~$u$-$t_j$ respectively. The degree of~$u$ in~$R_1$ is~$2$, and
  the vertices~$t_i$ and~$t_j$ must lie in different subtrees
  $T_{u_i}$ and~$T_{u_j}$, respectively.
  Analogously, let~$\sigma_1[u,t_k,t_\ell] = x_2$, and let~$P_2$ be a
  GRR partition of size~$x_2$ realizing the minimum in the definition
  of~$\sigma_1[u,t_k,t_\ell]$. Let~$R_2$ be the root component
  of~$P_2$. Then, $R_2$ has leftmost and rightmost paths~$u$-$t_k$
  and~$u$-$t_\ell$ respectively. Finally, let~$P_3$ be a GRR partition
  of size~$x_3$ realizing the minimum in the definition
  of~$\sigma_M[u, j+1, k-1]$.
  By Lemma~\ref{lem:trees:combine:paths}, $R_1 \cup R_2$ is
  \ic. Consider the GRR partition~$P$ formed by taking the union
  of~$P_1$, $P_2$ and~$P_3$ and merging~$R_1$ and~$R_2$. Partition~$P$
  has size~$x_1 + x_2 + x_3 - 1$. Its root component~$R$ has leftmost
  and rightmost paths~$u$-$t_i$ and~$u$-$t_\ell$, respectively, and~$u$
  has degree~3 in~$R$. Therefore, by the definition
  of~$\sigma_3[u,t_i,t_\ell]$, it
  is~$\sigma_3[u,t_i,t_\ell] \leq x_1 + x_2 + x_3 - 1$. Thus, the left
  side of recurrence~\eqref{noncrossing:rec:sigma2} is upper bounded
  by its right side.
  The same holds for the symmetric case in which the minimum on the
  right side is realized by
  $\sigma_1[u,t_i,t_j] + \sigma_M[u, j+1, k-1] + \sigma_2[u, t_k,
  t_\ell] - 1$.
  Therefore, recurrence~\eqref{noncrossing:rec:sigma3} holds.

  Finally, consider recurrence~\eqref{noncrossing:rec:sigma4} and a
  \gtd~$P$ of $T_{u_i} \cup \dots \cup T_{u_\ell}$ of size~$x$ with
  root component~$R$. Once again, let~$R$ have~$u$-$t_i$
  and~$u$-$t_\ell$ as its leftmost and rightmost paths,
  respectively. Let~$u$ have degree~$4$ in~$R$. Then,~$R$ is a
  subdivision of~$K_{1,4}$~\cite{acglp-sag-12}. Let~$t_j$ and~$t_k$ be
  the other two leaves of~$R$ lying in the subtrees $T_{u_j}$
  and~$T_{u_k}$ respectively, for $1 \leq i < j < k < \ell \leq d$.
  Then, we can split~$P$ into~$7$ \gtds~$P_1$, \dots, $P_7$ as
  follows. Partitions~$P_1$, $P_2$, $P_3$, $P_4$ are \gtds of
  subtrees~$T_{u_i}$, $T_{u_j}$, $T_{u_k}$ and~$T_{u_\ell}$,
  respectively, with the respective sizes~$x_1$, $x_2$, $x_3$, $x_4$
  and paths~$u$-$u_i$, $u$-$u_j$, $u$-$u_k$ and~$u$-$u_\ell$ as the
  respective root components. Partitions~$P_5$, $P_6$, $P_7$ are \gtds
  of $T_{u_{i+1}} \cup \dots \cup T_{u_{j-1}}$,
  $T_{u_{j+1}} \cup \dots \cup T_{u_{k-1}}$ and
  $T_{u_{k+1}} \cup \dots \cup T_{u_{\ell-1}}$, respectively, with
  respective sizes~$x_5$, $x_6$ and~$x_7$. The root component~$R$ is
  split into the four paths~$u$-$u_i$, $u$-$u_j$, $u$-$u_k$
  and~$u$-$u_\ell$, and no other GRR is split, since the contacts
  in~$P$ are non-crossing. Therefore, it
  is~$x = x_1 + \dots + x_7 - 3$. By the definition of~$\sigma_1$, it
  is $\sigma_1[u, t_i, t_i] \leq x_1$,
  $\sigma_1[u, t_j, t_j] \leq x_2$, $\sigma_1[u, t_k, t_k]\leq x_3$
  and~$\sigma_1[u, t_\ell, t_\ell] \leq x_4$. By the definition
  of~$\sigma_M$, $\sigma_M[u, i+1, j-1] \leq x_5$,
  $\sigma_M[u, j+1, k-1] \leq x_6$
  and~$\sigma_M[u, k+1, \ell-1] \leq x_7$. Thus, the right side of
  recurrence~\eqref{noncrossing:rec:sigma4} is at most~$x$, so the
  right side is upper bounded by the left side.

  Conversely, let the right side of
  recurrence~\eqref{noncrossing:rec:sigma4} be less than~$\infty$.
  Let~$j,k$, $t_j, t_k$ be chosen such that the minimum on the right
  side is realized.  Then,
  $\rho_i \cup \rho_j \cup \rho_k \cup \rho_\ell$ is
  \ic. Let~$\sigma_1[u,t_i,t_i] = x_1$, $\sigma_1[u,t_j,t_j] = x_2$,
  $\sigma_1[u,t_k,t_k] = x_3$ and~$\sigma_1[u,t_\ell,t_\ell] = x_4$. 
  Let~$P_1$, $P_2$, $P_3$ and~$P_4$ be \gtds realizing the
  minimum in the definitions of~$\sigma_1[u,t_i,t_i]$, $\sigma_1[u,t_j,t_j]$,
  $\sigma_1[u,t_k,t_k]$ and~$\sigma_1[u,t_\ell,t_\ell]$, respectively.
  Next, let $\sigma_M[u, i+1, j-1] = x_5$,
  $\sigma_M[u, j+1, k-1] = x_6$ and~$\sigma_M[u, k+1, \ell-1] = x_7$.
  Let~$P_5$, $P_6$ and~$P_7$ be \gtds realizing the minima in
  the definitions of $\sigma_M[u, i+1, j-1]$, $\sigma_M[u, j+1, k-1]$
  and~$\sigma_M[u, k+1, \ell-1]$, respectively.  The four paths
  $\rho_i$, $\rho_j$, $\rho_k$, $\rho_\ell$ can be merged into a
  single GRR~$R$ with leftmost path~$\rho_i$ and rightmost
  path~$\rho_\ell$. Consider partition~$P$ with root component~$R$
  formed by taking the union of~$P_1$, \dots, $P_7$ and merging the
  four paths~$\rho_i$, $\rho_j$, $\rho_k$, $\rho_\ell$. No more GRRs
  can be merged, since the contacts in~$P_1$, \dots, $P_7$ are
  non-crossing.
  The GRR~$R$ is the root component of~$P$. It has leftmost and
  rightmost paths~$u$-$t_i$ and~$u$-$t_\ell$ respectively, and~$u$ has
  degree~$4$ in~$R$. Thus, by the definition
  of~$\sigma_4[u,t_i,t_\ell]$, it
  is~$\sigma_4[u,t_i,t_\ell] \leq x_1 + \dots + x_7 - 3$. Thus, the
  left side of recurrence~\eqref{noncrossing:rec:sigma4} is upper
  bounded by its right side. Therefore,
  recurrence~\eqref{noncrossing:rec:sigma4} holds.
\end{proof}

\begin{lemma}
\label{lem:noncrossing:rec:sigmaM}
  We have the following recurrence.
  \begin{compactenum}
  \item[\nextrec{noncrossing:rec:sigma5}]
    $\sigma_M[u,i,\ell] = \min_{t_j, t_k} \{ \sigma_M[u,i,j-1] +
    \sigma[u, t_j, t_k] + \sigma_M[u, k+1, \ell] \}$,

    The minimization only considers $j,k$
    for~$i \leq j \leq k \leq \ell$ and vertices~$t_j, t_k$, such
    that~$t_j$ is in~$T_{u_j}$ and~$t_k$ is in~$T_{u_k}$.
  \end{compactenum}
\end{lemma}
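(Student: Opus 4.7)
My approach is a root-component decomposition at $u$. The crucial structural observation I would establish first is that in any GTD of $T_{u_i} \cup \dots \cup T_{u_\ell}$ with only non-crossing contacts, every root component (every GRR containing $u$) uses a \emph{contiguous} range $u_j, u_{j+1}, \dots, u_k$ of children of $u$: otherwise a root component $R$ containing $u_j$ and $u_k$ together with a different root component $R'$ containing an intermediate child $u_m$, $j<m<k$, would produce the cyclic pattern $R, R', R$ in the clockwise order of edges around $u$, which is a crossing contact. Note that at least one root component exists, since each edge $u u_{i'}$ for $i \le i' \le \ell$ lies in a GRR that necessarily contains $u$.

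To prove $\sigma_M[u,i,\ell] \ge $ (right-hand side), I would take an optimal GTD $P$ of $T_{u_i} \cup \dots \cup T_{u_\ell}$, pick any root component $R$ of $P$, and let $[j,k]$ be its contiguous child range, with the endpoints of its rightmost and leftmost paths denoted $t_j \in T_{u_j}$ and $t_k \in T_{u_k}$. I then partition the GRRs of $P$ into three subcollections $P_1, P_2, P_3$ according to whether their own contiguous child range lies in $[i,j-1]$, $[j,k]$, or $[k+1,\ell]$. These are GTDs of $T_{u_i} \cup \dots \cup T_{u_{j-1}}$, of $T_{u_j} \cup \dots \cup T_{u_k}$, and of $T_{u_{k+1}} \cup \dots \cup T_{u_\ell}$, respectively. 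Since $R \in P_2$ witnesses the definition of $\sigma[u,t_j,t_k]$, the three sizes satisfy $|P_1| \ge \sigma_M[u,i,j-1]$, $|P_2| \ge \sigma[u,t_j,t_k]$, and $|P_3| \ge \sigma_M[u,k+1,\ell]$, whose sum is at least the right-hand side.

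For the reverse inequality, I would fix an admissible choice of $j, k, t_j, t_k$ attaining the minimum on the right and take witnessing GTDs $P_1, P_2, P_3$ of sizes equal to the three summands. Their union is a GTD of $T_{u_i} \cup \dots \cup T_{u_\ell}$: no GRR has its edge set changed, so each remains \ic by Lemma~\ref{lem:tree-grr}, and because the three families of root components occupy pairwise disjoint consecutive clockwise intervals of children of $u$, all contacts at $u$ are non-crossing, while contacts elsewhere are inherited unchanged from the three sub-GTDs. The size of the combined GTD is exactly the sum of the three summands, proving the desired upper bound. The step that I expect to require the most care is the verification that the combined decomposition preserves the non-crossing property at $u$, but, as in the forward direction, this reduces directly to the contiguous-interval structure of root components established at the start, so no new geometric argument is needed.
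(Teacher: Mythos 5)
Your proof hinges on the opening structural claim that, under non-crossing contacts, every root component occupies a contiguous range of children of~$u$. That claim is false, and the cyclic-pattern argument you give for it does not work: the clockwise ordering of edges around a shared point is \emph{cyclic}, so if $R$ has edges $uu_j,uu_k$ at~$u$ and $R'$ has only the edge $uu_m$ with $j<m<k$, the edges of~$R$ are still cyclically consecutive in the ordering restricted to $R\cup R'$ (your pattern ``$R,R',R$'' is cyclically the same as ``$R,R,R'$''); a crossing contact requires the genuine interleaving $R,R',R,R'$. This nested configuration is not only permitted but is the whole point of the non-crossing setting: the paper states at the start of the non-crossing subsection that $u$ can lie in a root component together with non-consecutive children while the intermediate subtrees are covered by other GRRs, and recurrences~\eqref{noncrossing:rec:sigma2}--\eqref{noncrossing:rec:sigma4}, with their $\sigma_M[u,j+1,k-1]$ terms for the skipped middle subtrees, exist precisely to handle it.

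As a consequence your forward inequality breaks down. If you pick an \emph{arbitrary} root component $R$ with endpoints $t_j,t_k$, another root component $R'$ may have edges at children both to the left of $u_j$ and to the right of $u_k$ (again non-crossing, since $R'$'s edges remain a cyclic arc relative to $R$'s), so $R'$ fits into none of your three buckets and $P_1,P_2,P_3$ are not \gtds of the three claimed subtrees. The paper repairs exactly this by choosing $R$ among all root components so that $k-j$ is \emph{maximized}; only then can no GRR have vertices in both $T_{u_i}\cup\dots\cup T_{u_{j-1}}$ and $T_{u_{k+1}}\cup\dots\cup T_{u_\ell}$ (such a GRR would itself be a root component of strictly larger span), and the three-way split becomes well defined. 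Your reverse inequality is essentially correct and matches the paper's. To fix the proposal, drop the contiguity claim, replace ``any root component'' by this extremal one, and derive the validity of the split from extremality together with the non-crossing condition.
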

\begin{proof}
  First, consider a \gtd~$P$ of~$T_{u_i} \cup \dots \cup
  T_{u_\ell}$. Consider a GRR~$R$ in~$P$ containing~$u$ with leftmost
  and rightmost paths~$u$-$t_j$ and~$u$-$t_k$, respectively, for some
  vertices~$t_j$ in~$T_{u_j}$ and~$t_k$ in~$T_{u_k}$. Additionally,
  let~$R$ be chosen such that~$k-j$ is maximized. Then, by the choice
  of~$R$, no GRR in~$P$ has vertices both
  in~$T_{u_i} \cup \dots \cup T_{u_{j-1}}$ and
  in~$T_{u_{k+1}} \dots T_{u_\ell}$. Therefore, we can split
  partition~$P$ into \gtds~$P_1$
  of~$T_{u_i} \cup \dots \cup T_{u_{j-1}}$ of size~$x_1$, $P_2$ of
  $T_{u_j} \cup \dots \cup T_{u_{k}}$ of size~$x_2$ and~$P_3$
  of~$T_{u_{j+1}} \cup \dots \cup T_{u_\ell}$ size~$x_3$, such that no
  GRR of~$P$ is split. Thus,~$x = x_1 + x_2 + x_3$. By the definition
  of~$\sigma$ and~$\sigma_M$, we have $\sigma_M[u,i,j-1] \leq x_1$,
  $\sigma[u, t_j, t_k] \leq x_2$
  and~$\sigma_M[u, k+1, \ell] \leq x_3$.  Therefore, the right side of
  recurrence~\eqref{noncrossing:rec:sigma5} is at most~$x$, so the
  right side is upper bounded by the left side.

  Conversely, let the right side of
  recurrence~\eqref{noncrossing:rec:sigma5} be less than~$\infty$.
  Let~$j,k$, $t_j, t_k$ be chosen such that the minimum on the right
  side is realized. Let~$P_1$, $P_2$, $P_3$ be \gtds of
  size~$x_1$, $x_2$, $x_3$, respectively, realizing the minima in the
  definitions of $\sigma_M[u,i,j-1]$, $\sigma[u, t_j, t_k]$
  and~$\sigma_M[u, k+1, \ell]$, respectively. The union of the three
  partitions is a \gtd
  of~$T_{u_i} \cup \dots \cup T_{u_\ell}$. Thus, by the definition
  of~$\sigma_M[u,i,\ell]$, it
  is~$\sigma_M[u,i,\ell] \leq x_1 + x_2 + x_3$, so the left side of
  recurrence~\eqref{noncrossing:rec:sigma5} is upper bounded by its
  right side. Therefore, recurrence~\eqref{noncrossing:rec:sigma5}
  holds.
\end{proof}

\begin{lemma}
\label{lem:noncrossing:rec:tau}
  We have the following recurrences regarding~$\tau$.
  \begin{compactenum}
  \item[\nextrec{noncrossing:rec:sigma6}] $\tau[u,u,u] = 1 + \sigma_M[1,d]$;
  \item[\nextrec{noncrossing:rec:sigma7}]
    $\tau[u,t_i,t_j] = \sigma_M[u,1,i-1] + \sigma[u,t_i,t_j] +
    \sigma_M[u,j+1,d]$, if~$\parent{u} u + \rho_i \cup \rho_j$ is
    \ic, and~$\infty$ otherwise.

    In recurrence~\eqref{noncrossing:rec:sigma7}, vertex~$t_i \neq u$
    is in~$T_{u_i}$ and vertex~$t_j \neq u$ is in~$T_{u_j}$.
  \end{compactenum}
\end{lemma}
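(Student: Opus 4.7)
The plan is to prove both recurrences by separately establishing the two inequalities, mirroring the arguments used for Lemmas~\ref{lem:noncrossing:rec} and~\ref{lem:noncrossing:rec:sigmaM}.

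For recurrence~\eqref{noncrossing:rec:sigma6}, the case $t_i = t_j = u$ forces the leftmost and rightmost paths of the unique root component~$R$ (recall that $\parent{u}$ has degree~$1$ in~$T_u$) to both reduce to the single edge $\parent{u}u$, so $R$ consists of exactly this edge, which is trivially \ic. Any such \gtd therefore decomposes as $\{\parent{u}u\}$ together with a \gtd of $T_{u_1}\cup\cdots\cup T_{u_d}$, and taking the minimum on both sides yields $\tau[u,u,u] = 1 + \sigma_M[u,1,d]$.

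For recurrence~\eqref{noncrossing:rec:sigma7}, I first note that $R$ contains $\parent{u}u$, $\rho_i$, and $\rho_j$; since $R$ must be \ic, so must their union. If this fails, the cell is $\infty$ by convention, matching the right-hand side. Assuming the \ic condition holds, I establish ``$\leq$'' by taking a minimum \gtd~$P$ of $T_u$ whose root component has $\parent{u}$-$t_i$ and $\parent{u}$-$t_j$ as its extremal paths. The key observation is that the children of~$u$ lying in~$R$ form the consecutive range $u_i,\ldots,u_j$; this follows from the non-crossing assumption together with the fact that the extremal paths of $R$ fix its boundary children at $u$. Consequently every non-root GRR of $P$ lies entirely inside a single subtree~$T_{u_k}$, and none meets both a subtree with index~$<i$ and one with index~$>j$. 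I split $P$ into three disjoint parts: a \gtd of $T_{u_1}\cup\cdots\cup T_{u_{i-1}}$, a \gtd of $T_{u_i}\cup\cdots\cup T_{u_j}$ whose root component is $R\setminus\{\parent{u}u\}$, and a \gtd of $T_{u_{j+1}}\cup\cdots\cup T_{u_d}$, of sizes at least $\sigma_M[u,1,i-1]$, $\sigma[u,t_i,t_j]$, and $\sigma_M[u,j+1,d]$, respectively; adjoining $\parent{u}u$ does not change the component count of the middle partition.

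For the reverse direction ``$\geq$'' I take optimal witnesses realizing each of the three right-hand side terms and form their disjoint union. The middle witness has a root component~$R'$ of $T_{u_i}\cup\cdots\cup T_{u_j}$ attaining $\sigma[u,t_i,t_j]$, with its two extremal paths from~$u$ ending at $t_i$ and $t_j$. By Lemma~\ref{lem:trees:combine:paths} applied to $R'$ and the single-edge tree $\{\parent{u}u\}$, together with the hypothesis on $\parent{u}u + \rho_i \cup \rho_j$, the union $R' \cup \{\parent{u}u\}$ is \ic and serves as the root component of the combined \gtd of $T_u$ with extremal paths $\parent{u}$-$t_i$ and $\parent{u}$-$t_j$. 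No further GRR merging is possible, so the combined decomposition has size exactly the right-hand side. The main technical care is in the ``$\leq$'' direction, where one must justify that the outer subtrees are decomposed independently of~$R$; this is a direct consequence of the non-crossing hypothesis and the definition of leftmost/rightmost paths, parallel to the splitting arguments in the proof of Lemma~\ref{lem:noncrossing:rec}.
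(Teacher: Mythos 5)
Your overall strategy coincides with the paper's: for recurrence~\eqref{noncrossing:rec:sigma6} peel off the edge $\parent{u}u$ as its own root component, and for recurrence~\eqref{noncrossing:rec:sigma7} split a decomposition with the prescribed root component $R$ into the part on subtrees $T_{u_1},\dots,T_{u_{i-1}}$, the part on $\parent{u}u + T_{u_i}\cup\dots\cup T_{u_j}$, and the part on $T_{u_{j+1}},\dots,T_{u_d}$, then conversely glue optimal witnesses and invoke Lemma~\ref{lem:trees:combine:paths} on $R'$ and the single edge $\parent{u}u$. However, your stated ``key observation'' in the forward direction of recurrence~\eqref{noncrossing:rec:sigma7} is false on two counts. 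First, the children of $u$ lying in $R$ do \emph{not} in general form the full range $u_i,\dots,u_j$: in an \ic tree every vertex has degree at most $4$, and $R$ already contains $\parent{u}u$, $u u_i$ and $u u_j$, so it contains at most one further child $u_m$ with $i<m<j$. Second, it is not true that every non-root GRR lies inside a single subtree $T_{u_k}$: a non-root GRR may contain $u$ and span several consecutive subtrees strictly between $T_{u_i}$ and $T_{u_j}$ without crossing $R$ (this is exactly why the tables $\sigma_M[u,\cdot,\cdot]$ over index ranges exist and appear in the recurrences for $\sigma_\Delta$). Fortunately, neither claim is what your argument actually uses. The statement that carries the proof is only that no GRR other than $R$ meets two of the three index ranges $[1,i-1]$, $[i,j]$, $[j+1,d]$; this follows directly from the non-crossing hypothesis, since such a GRR would have its edges at $u$ interleaved with $\parent{u}u$, $u u_i$, $u u_j$, and this is precisely the justification the paper gives. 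With the ``key observation'' replaced by this weaker, correct statement, your proof matches the paper's.
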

\begin{proof}
  First, we prove recurrence~\eqref{noncrossing:rec:sigma6}. Let~$P$
  be a \gtd of~$T_u = \parent{u}u + T_{u_1} \cup \dots \cup T_{u_d}$,
  such that the edge~$\parent{u} u$ is the root component
  of~$P$. Then, the other GRRs of~$P$ induce a partition~$P_1$
  of~$T_{u_1} \cup \dots \cup T_{u_d}$. Let~$x_1$ be the size
  of~$P_1$. Then,~$P$ has size~$x_1 + 1$. Furthermore, by the
  definition of~$\sigma_M$, $\sigma_M[u,1,d] \leq x_1$. Thus, the
  right side of recurrence~\eqref{noncrossing:rec:sigma6} is at
  most~$x_1 + 1$, so the right side is upper bounded by the left side.

  Conversely, let the right side of
  recurrence~\eqref{noncrossing:rec:sigma6} be less
  than~$\infty$. Let~$P_1$ be a \gtd
  of~$T_{u_1} \cup \dots \cup T_{u_d}$ size~$x_1$. We add
  edge~$\parent{u}u$ as a new GRR to~$P_1$ and get a partition~$P$
  of~$T_u$ of size~$x_1+1$ having~$\parent{u} u$ as its root
  component. Thus, the left side of
  recurrence~\eqref{noncrossing:rec:sigma6} is at most~$x_1 + 1$, so
  the left side is upper bounded by the right side. Therefore,
  recurrence~\eqref{noncrossing:rec:sigma6} holds.

  We now prove recurrence~\eqref{noncrossing:rec:sigma7}. Let~$P$ be a
  \gtd of~$T_u$ of size~$x$ with root component~$R$, such that~$R$ has
  $\parent{u}$-$t_i$ and~$\parent{u}$-$t_j$ as its leftmost and
  rightmost paths, respectively. Then, no GRR of~$P$ has edges both
  in~$T_{u_1} \cup \dots \cup T_{u_{i-1}}$ and
  in~$T_{u_{j+1}} \cup \dots \cup T_{u_{d}}$, since otherwise such a
  GRR would cross~$R$. Thus,~$P$ can be split into \gtds~$P_1$
  of~$T_{u_1} \cup \dots \cup T_{u_{i-1}}$ of size~$x_1$, $P_2$
  of~$\parent{u}u + T_{u_i} \cup \dots \cup T_{u_{j}}$ of size~$x_2$
  and~$P_3$ of~$T_{u_{j+1}} \cup \dots \cup T_{u_{d}}$ of size~$x_3$,
  such that~$R$ is the root component of~$P_2$ and such that it
  is~$x = x_1 + x_2 + x_3$. By the definition of~$\sigma$
  and~$\sigma_M$, we have $\sigma_M[u,1,i-1] \leq x_1$,
  $\sigma[u, t_i, t_j] \leq x_2$
  and~$\sigma_M[u, j+1, \ell] \leq x_3$. Thus, the right side of
  recurrence~\eqref{noncrossing:rec:sigma7} is at most~$x$, so the
  right side is upper bounded by the left side.

  Finally, let the right side of
  recurrence~\eqref{noncrossing:rec:sigma7} be less than~$\infty$. Let
  $P_1$ be a \gtd of of~$T_{u_1} \cup \dots \cup T_{u_{i-1}}$ of
  size~$x_1$, let $P_2$ be a \gtd
  of~$T_{u_i} \cup \dots \cup T_{u_{j}}$ of size~$x_2$ and~$P_3$ a
  \gtd of~$T_{u_{j+1}} \cup \dots \cup T_{u_{d}}$ of size~$x_3$, such
  that~$R$ is the root component of~$P_2$ having leftmost and
  rightmost paths~$u$-$t_i$ and~$u$-$t_j$,
  respectively. If~$\parent{u} u + \rho_i \cup \rho_j$ is \ic, by
  Lemma~\ref{lem:trees:combine:paths}, the subtree
  $R_2 := \parent{u} u + R$ is also a GRR. By taking the union
  of~$P_1$, $P_2$ and~$P_3$ and merging~$R$ and~$\parent{u} u$
  into~$R_2$, we get a \gtd~$P$ of~$T_u$ of
  size~$x := x_1 + x_2 + x_3$ with the root component~$R_2$, such
  that~$R_2$ has the leftmost and rightmost paths~$\parent{u} t_i$
  and~$\parent{u} t_j$, respectively. By the definition of~$\tau$, it
  is~$\tau[u,t_i,t_j] \leq x$, so the left side of
  recurrence~\eqref{noncrossing:rec:sigma7} is is upper bounded by the
  right side. Therefore, recurrence~\eqref{noncrossing:rec:sigma7}
  holds.
\end{proof}

We can now use the above recurrences to fill the tables~$\tau$,
$\sigma$, $\sigma_\Delta$ and~$\sigma_M$ in polynomial time. This
proves Theorem~\ref{thm:trees:partition-edges-non-crossing}.

\begin{reptheorem}{Theorem}{\ref{thm:trees:partition-edges-non-crossing}}
\ThmNonCrossingText
\end{reptheorem}

\begin{proof}
  For each pair~$s,t \in V$, it can be tested in time~$O(n)$ whether
  the path~$s$-$t$ is \ic~\cite{acglp-sag-12}. We store the result for
  each pair~$s,t \in V$, which allows us to query in time~$O(1)$
  whether any $s$-$t$ path is \ic. This precomputation takes $O(n^3)$
  time.

  We process the vertices~$u \in V$ bottom-up and fill the
  tables~$\tau[u, \cdot, \cdot]$, $\sigma[u, \cdot, \cdot]$,
  $\sigma_\Delta[u, \cdot, \cdot]$ and~$\sigma_M[u, \cdot,
  \cdot]$. Consider a vertex~$u \in V$ and assume all these values
  have been computed for all successors of~$u$.

  Using recurrences~\eqref{noncrossing:rec:sigma1}
  and~\eqref{noncrossing:rec:sigma11}, we can compute all values
  of~$\sigma_1[u, t_i, t_j]$ and $\sigma_M[u,i,i]$ in~$O(n^2)$
  time. We shall compute the remaining values
  $\sigma_\Delta[u, t_i, t_\ell]$, $\sigma[u, t_i, t_\ell]$
  and~$\sigma_M[u, i, \ell]$ by an induction over $\ell-i$. For a
  fixed~$m \geq 0$, assume all these values have been computed
  for~$\ell-i \leq m$. We show how to compute them for~$\ell-i = m+1$.

  First, we compute the new values~$\sigma_\Delta[u, t_i, t_\ell]$
  from the already computed ones using
  recurrences~\eqref{noncrossing:rec:sigma2}, \dots,
  \eqref{noncrossing:rec:sigma5}. This can be done in~$O(n^4)$ time by
  testing all combinations of~$t_i$, $t_j$, $t_k$, $t_\ell$. 
  Next, we compute
  $\sigma[u, t_i, t_\ell] = \min_{\Delta=1, \dots, 4}
  \sigma_\Delta[u,t_i,t_\ell]$ in~$O(n^2)$ time.
  After that, the new values~$\sigma_M[u, i, \ell]$ can be computed
  using recurrence~\eqref{noncrossing:rec:sigma5}. This can be done
  in~$O(n^4)$ time by testing all combinations of~$i$, $\ell$, $t_j$,
  $t_k$.

  In this way, we compute all values $\sigma_\Delta[u, t_i, t_\ell]$,
  $\sigma[u, t_i, t_\ell]$ and~$\sigma_M[u, i, \ell]$, for
  all~$\ell - i \leq d$, in~$O(n^5)$ time. Then, we
  compute~$\tau[u, t_i, t_j]$ using
  recurrences~\eqref{noncrossing:rec:sigma6}
  and~\eqref{noncrossing:rec:sigma7}. This can be done in~$O(n^2)$
  time by testing all combinations of~$t_i$ and~$t_j$. After that, we
  compute~$\tau[u]$. It took us~$O(n^5)$ time to compute all the
  values for the vertex~$u$.

  Let~$r$ be the root of~$T$, and let~$v$ be the only child of~$r$. By
  the above procedure, we can compute~$\tau[v]$ in~$O(n^6)$
  time. Since~$T = T_v$, $\tau[v]$ is the minimum size of a \gtd
  of~$T$.
\end{proof}

For partitions allowing edge splits, we use the results from
Section~\ref{sec:split} to reduce the problem to the scenario without
edge splits.

\begin{corollary}
  \label{cor:2}
  An optimal partition of a plane straight-line tree drawing into GRRs
  with non-crossing contacts can be computed in~$O(n^{6})$ time, if
  no edge splits are allowed, and in~$O(n^{12})$ time, if edge
  splits are allowed.
\end{corollary}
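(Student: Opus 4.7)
The plan is to reduce the edge-splits case to the no-edge-splits case via the subdivision construction from Section~\ref{sec:split}, and then invoke Theorem~\ref{thm:trees:partition-edges-non-crossing} as a black box.

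For the first claim (no edge splits), the bound is immediate: Theorem~\ref{thm:trees:partition-edges-non-crossing} already states that a minimum GRR partition of a plane straight-line tree drawing with only non-crossing contacts can be computed in $O(n^{6})$ time, and this is exactly the setting of the first half of the corollary.

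For the second claim (edge splits allowed), I would proceed as follows. First, construct the subdivided drawing $\Gamma_s$ as in Section~\ref{sec:split}. Since $T$ has $n-1$ edges, for each ordered pair of edges we add at most two subdivision points, so $\Gamma_s$ is still a plane straight-line drawing of a tree, now on $N = O(n^{2})$ vertices. By Lemma~\ref{lem:split}, any GRR decomposition of $\Gamma$ that splits edges at interior points can be transformed, without increasing its size, into a GRR decomposition of $\Gamma_s$ in which no edge of $\Gamma_s$ is split. Conversely, any GRR decomposition of $\Gamma_s$ without edge splits is trivially a GRR decomposition of $\Gamma$ possibly with edge splits (at the added subdivision points), of the same size. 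Hence an optimal non-splitting partition of $\Gamma_s$ realizes the optimum over all partitions of $\Gamma$ with edge splits.

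To finish, I would verify that this reduction preserves the \emph{non-crossing} contact restriction. This is pointed out in the text preceding Section~\ref{sec:trees:multicut}: the reassignment in the proof of Lemma~\ref{lem:split} merely moves a subsegment $ux$ lying on an edge of $\Gamma_s$ from one component to an adjacent one; the point $x$ is an interior point of a (sub)edge, so at $x$ the two affected components still share only a proper (degree-$1$) contact, which is non-crossing, and no contact at any other vertex of $\Gamma_s$ is altered. Thus minimum partitions of $\Gamma_s$ into GRRs with non-crossing contacts are in one-to-one size-preserving correspondence with minimum partitions of $\Gamma$ into GRRs with non-crossing contacts and edge splits allowed. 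Applying Theorem~\ref{thm:trees:partition-edges-non-crossing} to $\Gamma_s$ then runs in $O(N^{6}) = O(n^{12})$ time, yielding the claimed bound.

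The main (and essentially only) subtlety is the verification that the transformation in Lemma~\ref{lem:split} does not introduce crossing contacts; everything else is just bookkeeping and a direct substitution $N = O(n^{2})$ into the running time of the main theorem, so I would expect the write-up to be a short paragraph rather than a genuinely new argument.
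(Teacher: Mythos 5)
Your proposal is correct and follows exactly the paper's route: the no-split case is a direct invocation of Theorem~\ref{thm:trees:partition-edges-non-crossing}, and the split case applies that theorem to the subdivision $\Gamma_s$ on $O(n^2)$ vertices via Lemma~\ref{lem:split}, whose non-crossing-preservation is noted in the paper just before Section~\ref{sec:trees:multicut}. Nothing is missing.
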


\subsubsection{Proper contacts}

For \gtds allowing only proper contacts of GRRs, we can modify the
above dynamic program. We redefine~$\sigma_M[u,i,j]$ to be the size of
a minimum \gtd of~$T_{u_i} \cup \dots \cup T_{u_j}$, in which no two
edges~$u u_i, \dots, u u_j$ are in the same GRR.  Furthermore, we
replace two recurrences as follows.

\newcounter{recnew}
\setcounter{recnew}{5}
\renewcommand{\therecnew}{\arabic{recnew}'}
\newcommand{\nextrecnew}[1]{\refstepcounter{recnew}(\arabic{recnew}')\label{#1}}
\begin{lemma}
  For \gtds with proper contacts, the following recurrences
  replace recurrences~\eqref{noncrossing:rec:sigma5}
  and~\eqref{noncrossing:rec:sigma6}.
  \begin{compactenum}
      \item[\nextrecnew{noncrossing:rec:sigma6new}]
    $\sigma_M[u,i,j] = \sum_{m = i}^j \sigma_1[u,m,m]$;
  \item[\nextrecnew{noncrossing:rec:sigma7new}]
    $\tau[u,u,u] = 1 + \min_{t_i, t_j} \{ \sigma_M[u,1,i-1] +
    \sigma[u, t_i, t_j] + \sigma_M[u, j+1, d] \}$.  

    The minimization in recurrence~\eqref{noncrossing:rec:sigma7new}
    only considers $i,j$ for~$1 \leq i \leq j \leq d$ and
    vertices~$t_i, t_j$, such that~$t_i$ is in~$T_{u_i}$ and~$t_j$ is
    in~$T_{u_j}$.
  \end{compactenum}
\end{lemma}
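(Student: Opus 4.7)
The plan is to verify the two new recurrences~\eqref{noncrossing:rec:sigma6new} and~\eqref{noncrossing:rec:sigma7new} by adapting the splitting/gluing case analysis already used in Lemmas~\ref{lem:noncrossing:rec}, \ref{lem:noncrossing:rec:sigmaM}, and~\ref{lem:noncrossing:rec:tau}, and to exploit the structural consequence of the proper-contact restriction: at most one GRR meeting at any given vertex~$u$ may have~$u$ as a non-leaf, since otherwise two such GRRs would touch at~$u$ with neither having~$u$ as a leaf, contradicting Definition~\ref{def:contacts}.

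For~\eqref{noncrossing:rec:sigma6new}, I would start from the redefined quantity~$\sigma_M[u,i,j]$, which counts a minimum GTD of~$T_{u_i} \cup \dots \cup T_{u_j}$ subject to the additional constraint that no two of the edges~$u u_i, \dots, u u_j$ lie in the same GRR. Then every GRR touching~$u$ contains exactly one of these edges and has~$u$ as a leaf, so each GRR is contained entirely inside a single subtree~$T_{u_m}$. The partition therefore decomposes as a disjoint union of independent proper-contact GTDs of the subtrees~$T_{u_i}, \dots, T_{u_j}$, and summing their optima yields the claimed right-hand side, where the summand for~$T_{u_m}$ is the minimum proper-contact GTD size of~$T_{u_m}$. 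Conversely, any union of independent optimal GTDs of the~$T_{u_m}$ is a valid~$\sigma_M$-partition in the new sense, since no GRR in it contains two edges~$u u_m, u u_{m'}$.

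For~\eqref{noncrossing:rec:sigma7new}, I would consider a minimum GTD of~$T_u$ whose root component has~$\parent{u}u$ as both its leftmost and rightmost path; this forces the root component to be exactly the single edge~$\parent{u}u$, accounting for the~$+1$ term. The remaining task is to partition~$T_{u_1} \cup \dots \cup T_{u_d}$ with only proper contacts. By the structural fact above, at most one GRR~$R$ among those touching~$u$ can have~$u$ as a non-leaf. If such an~$R$ exists, the contiguity argument described below shows that the child edges it contains at~$u$ form a clockwise contiguous range~$[i,j]$, and its leftmost and rightmost root-leaf paths are~$u$-$t_i$ and~$u$-$t_j$ for some~$t_i \in T_{u_i}$, $t_j \in T_{u_j}$. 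The contribution of~$R$ together with the rest of the partition on~$T_{u_i} \cup \dots \cup T_{u_j}$ is then exactly~$\sigma[u,t_i,t_j]$, while~$T_{u_1} \cup \dots \cup T_{u_{i-1}}$ and~$T_{u_{j+1}} \cup \dots \cup T_{u_d}$ contain no multi-edge GRR at~$u$ and contribute the new~$\sigma_M[u,1,i-1]$ and~$\sigma_M[u,j+1,d]$. The case where no such~$R$ exists is absorbed into the minimisation by taking $i=j$ and letting~$t_i, t_j$ range independently over~$T_{u_i}$: then $\sigma[u,t_i,t_j] = \sigma_1[u,t_i,t_j] = \tau[u_i,t_i,t_j]$, whose minimum over~$t_i, t_j$ equals~$\tau[u_i]$.

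The main obstacle is verifying the contiguity of the child edges of the multi-edge GRR~$R$ at~$u$. I would prove it by contradiction: if some edge~$u u_m$ with~$i < m < j$ belonged to a different GRR~$R'$, then in the clockwise order around~$u$ the edges of~$R$ at positions~$i$ and~$j$ and the single edge of~$R'$ at position~$m$ would interleave, so the contact between~$R$ and~$R'$ at~$u$ is non-consecutive and therefore crossing; but proper contacts are necessarily non-crossing, as remarked right after Definition~\ref{def:contacts}. With contiguity in hand, both directions of each new recurrence follow from the same splitting-and-gluing arguments used in the proofs of Lemmas~\ref{lem:noncrossing:rec}, \ref{lem:noncrossing:rec:sigmaM}, and~\ref{lem:noncrossing:rec:tau}.
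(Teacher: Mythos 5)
Your overall decomposition is the same as the paper's (which itself only sketches this lemma by pointing back to Lemmas~\ref{lem:noncrossing:rec}--\ref{lem:noncrossing:rec:tau}): split at~$u$ into a left range handled by the new~$\sigma_M$, a middle range containing the unique GRR~$R$ in which~$u$ is internal, and a right range handled by the new~$\sigma_M$; the structural observation that proper contacts allow at most one GRR at~$u$ with~$u$ as a non-leaf is exactly the right driver. However, the step you single out as the main obstacle --- contiguity of the child edges of~$R$ at~$u$ --- is false, and your argument for it is incorrect. If~$R$ contains~$u u_i$ and~$u u_j$ and a different GRR~$R'$ contains only~$u u_m$ with~$i<m<j$, then in the clockwise \emph{cyclic} ordering of the edges of~$R$ and~$R'$ at~$u$ the two edges of~$R$ are still consecutive (the arc from~$u u_j$ back around to~$u u_i$ avoiding~$u u_m$ contains no edge of~$R'$), so by Definition~\ref{def:contacts} this contact is non-crossing; moreover it is proper, since~$u$ is a leaf of~$R'$. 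Such configurations are explicitly permitted --- this is precisely what the terms~$\sigma_M[u,j+1,k-1]$ in recurrences~\eqref{noncrossing:rec:sigma2}--\eqref{noncrossing:rec:sigma4}, which are retained unchanged in the proper-contact program, are there for.

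Fortunately the false claim is not needed. What the split actually requires is only that~$R$ is the unique GRR containing~$u$ as an internal vertex, hence the unique GRR meeting more than one subtree~$T_{u_m}$; taking~$i$ and~$j$ to be the extreme child indices met by~$R$, no GRR straddles the boundary between~$\{1,\dots,i-1\}$, $\{i,\dots,j\}$ and~$\{j+1,\dots,d\}$, and the induced partition of the middle range is by definition counted by~$\sigma[u,t_i,t_j]$, whose recurrences already accommodate a non-contiguous root component with the skipped subtrees partitioned leaf-at-$u$. With the contiguity detour deleted, your treatment of~\eqref{noncrossing:rec:sigma6new}, of the degenerate case~$i=j$ absorbing partitions with no internal-$u$ GRR, and of the gluing direction (where one should also note that all cross-piece contacts occur at~$u$ and are proper because every GRR other than~$R$ has~$u$ as a leaf) matches the intended proof.
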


Recurrence~\eqref{noncrossing:rec:sigma6new} follows trivially from
the new definition of~$\sigma_M$. The proof of
recurrence~\eqref{noncrossing:rec:sigma7new} is very similar to the
proof of
Lemma~\ref{lem:noncrossing:rec:sigmaM}. Recurrences~\eqref{noncrossing:rec:sigma1},
\dots, \eqref{noncrossing:rec:sigma4}
and~\eqref{noncrossing:rec:sigma7} still hold and can be proved by
reusing the proofs of Lemma~\ref{lem:noncrossing:rec}
and~\ref{lem:noncrossing:rec:tau}. The runtime of the modified dynamic
program remains the same. This proves
Theorem~\ref{thm:trees:partition-edges}.

\begin{reptheorem}{Theorem}{\ref{thm:trees:partition-edges}}
\ThmProperText
\end{reptheorem}

Analogously as for non-crossing contacts, we use the results from
Section~\ref{sec:split} to extend the result to \gtds allowing edge
splits.

\begin{corollary}
  \label{cor:3}
  An optimal partition of a plane straight-line tree drawing into GRRs
  with proper contacts can be computed in~$O(n^{6})$ time, if no
  edge splits are allowed, and in~$O(n^{12})$ time, if edge splits
  are allowed.
\end{corollary}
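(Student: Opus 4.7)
The plan is to mirror the approach used for Corollary~\ref{cor:2}. The first claim, $O(n^6)$ for the no-split case, is immediate: it is exactly the conclusion of Theorem~\ref{thm:trees:partition-edges}. So the real work is the split case.

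For GRR decompositions allowing edge splits I would reduce to the no-split case by working on the subdivision $\Gamma_s$ defined in Section~\ref{sec:split}. By Lemma~\ref{lem:split}, any decomposition of $\Gamma$ with splits can be transformed into a decomposition of $\Gamma_s$ of at most the same size in which no edge of $\Gamma_s$ is split; conversely, every decomposition of $\Gamma_s$ is already a decomposition of $\Gamma$ with splits occurring only at the subdivision vertices, and of the same size. Since $\Gamma_s$ has $O(n^2)$ vertices, applying the algorithm of Theorem~\ref{thm:trees:partition-edges} to $\Gamma_s$ takes $O((n^2)^6)=O(n^{12})$ time, which will be the claimed bound once the reduction is seen to respect the proper-contact restriction.

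The main obstacle, and the step that requires explicit verification, is showing that the transformation in the proof of Lemma~\ref{lem:split} preserves the \emph{proper} contact property, analogously to the remark already made for the non-crossing property in Section~\ref{sec:trees:restricted-contacts}. I would inspect the single atomic reassignment in that proof: a segment $ux$ is moved from a component $C'$ into the neighbouring component $C$, which contained $vx$ but not $ux$. Before the operation, the subdivision vertex $x$ is incident to exactly one edge of $C$ (namely $vx$) and exactly one edge of $C'$ (namely $ux$), so $x$ is a leaf in both and the contact at $x$ is proper. After the operation, $C\cup\{ux\}$ carries the entire $\Gamma_s$-edge $uv$ at $x$, so no contact remains at $x$; any newly created contact lies at $u$, where the transferred piece $ux$ is the only edge of $C\cup\{ux\}$ incident to $u$, making $u$ a leaf of the augmented component. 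Hence every contact of $C\cup\{ux\}$ with any other component at $u$ is proper, while all other contacts of the partition remain unchanged. Iterating this local argument along all subdivision vertices establishes that the output decomposition of $\Gamma_s$ inherits the proper-contact property.

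Combining the preservation with the size bound of Lemma~\ref{lem:split} shows that the minimum proper-contact decomposition of $\Gamma$ with splits has exactly the same size as the minimum proper-contact decomposition of $\Gamma_s$ without splits; running the $O(N^6)$ algorithm of Theorem~\ref{thm:trees:partition-edges} on $\Gamma_s$ with $N=O(n^2)$ then yields the corollary.
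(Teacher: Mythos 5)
Your proposal is correct and follows the same route as the paper: the no-split bound is Theorem~\ref{thm:trees:partition-edges} itself, and the split case is handled by running that algorithm on the subdivision~$\Gamma_s$ with $O(n^2)$ vertices via Lemma~\ref{lem:split}. The paper merely asserts this reduction ``analogously'' to the non-crossing case, so your explicit check that the reassignment of $ux$ to $C$ preserves proper contacts (because $u$ becomes a leaf of the augmented component and no contact survives at $x$) is a welcome detail that the paper leaves implicit.
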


Note that Corollary~\ref{cor:3} provides a better runtime than the
dynamic program in the conference version of this paper~\cite{Noellenburg2015}.

\section{Triangulations}\label{sec:triang}

In this section, we consider GRR partitions of polygonal
regions. Recall that a polygonal region is a GRR if and only if it
contains no pairs of conflicting edges. Further, recall that GRRs that
are polygonal regions need not be convex and that they do not have
holes~\cite{tk-ggrlssn-2012}. Since partitioning polygonal regions
into a minimum number of GRRs is \NP-hard~\cite{tk-ggrlssn-2012}, we
study special cases of this problem.

We consider partitioning a hole-free polygon $\mathcal P$ with a
fixed triangulation into a minimum number of GRRs by cutting it along
chords of~$\mathcal P$ contained in the triangulation. For such
decompositions we restrict the GRRs to consist of a group of
triangles of the triangulation whose union forms a simple polygon
without articulation points.
Note that allowing articulation points makes the problem \NP-hard. To
prove this, we can easily turn the plane straight-line tree
drawing~$\Gamma$ from Section~\ref{sec:trees:npc}, which is a
subdivision of a star, into a hole-free triangulated polygon with a
single articulation point corresponding to the star
center.

We reduce the problem to \minmulticut on trees and use it to give a
polynomial-time $(2-1/\opt)$-approximation, where $\opt$ is the number
of GRRs in an optimal partition.
Recall that a polygon is a GRR if and only if it has no conflict
edges~\cite{tk-ggrlssn-2012}. Let~$\triangle_{uvw}$ be the triangle
defined by three non-collinear points $u,v,w$.

\begin{lemma}
  \label{lem:add-deg-2}
  Let~$\mathcal P$ be a simple polygon, $uv$ an edge on its boundary and~$w
  \notin \mathcal P$ another point, such that $\mathcal P \cap \triangle_{uvw} =
  uv$. If~$\mathcal P$ is not a greedy region, neither is~$\mathcal P \cup
  \triangle_{uvw}$.
\end{lemma}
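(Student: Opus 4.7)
The plan is to use the characterization from~\cite{tk-ggrlssn-2012} that a polygonal region is a GRR iff it has no conflicting pair of edges, and to show that every conflicting pair of edges in $\mathcal{P}$ induces a conflicting pair in $\mathcal{P}' := \mathcal{P} \cup \triangle_{uvw}$. Observe that the boundary of $\mathcal{P}'$ equals that of $\mathcal{P}$ except that the edge $uv$ has been replaced by the two new edges $uw$ and $vw$.

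In the easy case, if $(e, f)$ is a conflicting pair in $\mathcal{P}$ with $e \ne uv$ and $f \ne uv$, then both $e$ and $f$ remain boundary edges of $\mathcal{P}'$. Since $\triangle_{uvw}$ only alters the local geometry near $uv$, the normal ray $\nray{e}{p}$ for any interior point $p$ of $e$ is the same in $\mathcal{P}$ and $\mathcal{P}'$, and in particular it still intersects $f$. Hence the conflict persists in $\mathcal{P}'$, and by the characterization $\mathcal{P}'$ is not a GRR.

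The main obstacle is the remaining case, in which every conflicting pair in $\mathcal{P}$ involves $uv$. Consider such a conflict $(uv, f)$, so that $\nray{uv}{p}$ meets $f$ at some point $r$; the symmetric case where $\nray{f}{r}$ meets $uv$ is handled analogously. I plan to track the ray from $p$ perpendicular to $uv$ in the exterior direction: in $\mathcal{P}'$ it now first enters $\triangle_{uvw}$ through $p$, exits at some point $q$ lying on $uw$, on $vw$, or at the apex $w$, and then traverses the remaining exterior of $\mathcal{P}'$ to reach $r$. The crux is to show that for a suitable interior point $r'$ of $f$ near $r$, the normal ray $\nray{f}{r'}$ (which agrees with its counterpart in $\mathcal{P}$ locally near $r'$, since $\triangle_{uvw}$ does not touch a neighborhood of $r'$) must intersect either $uw$, $vw$, or some other edge of $\mathcal{P}$, producing the desired conflicting pair in $\mathcal{P}'$.

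The delicate subcase I expect to be the real work is when $f$ is far from parallel to $uv$ and $\nray{f}{r'}$ could in principle avoid both the triangle and every other edge of $\mathcal{P}$. Here I plan to exploit the boundedness and the simple-polygon structure of $\mathcal{P}$: the $(uv,f)$-conflict already forces an ``exterior corridor'' between $uv$ and $f$ (the segment from $p$ to $r$ lying outside $\mathcal{P}$), and since $\mathcal{P}$ is a bounded simple polygon, some additional boundary edge of $\mathcal{P}$ must flank this corridor; that edge then conflicts with $f$ already in $\mathcal{P}$, and since it is not $uv$, its conflict with $f$ is preserved in $\mathcal{P}'$ via the easy case above.
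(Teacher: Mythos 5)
Your easy case (conflicting pairs not involving $uv$) and your ``symmetric'' subcase (a normal ray of some other edge $f$ hits $uv$) are fine: in the latter, the ray reaches $uv$ from the exterior side of $uv$, which is exactly where $\triangle_{uvw}$ now sits, so it must first cross $uw$ or $vw$ and the conflict survives in $\mathcal P' = \mathcal P \cup \triangle_{uvw}$. The gap is in the case you yourself flag as the crux: some $\nray{uv}{p}$ hits $f$, and conflicts of this kind are the only ones $\mathcal P$ has. Neither of your two proposed resolutions is sound. First, there is no reason why a normal ray of $f$ near $r$ should hit $uw$, $vw$, or any other edge: the conflict relation is not symmetric, and the exterior normals of $f$ point in a direction fixed by the slope of $f$, which can be chosen so that they pass above $u$ (or below $v$), miss the possibly very thin triangle, and escape to infinity. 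Second, the ``flanking edge'' argument conflates proximity with conflict: an edge bounding the exterior corridor between $uv$ and $f$ conflicts with $f$ only if a normal ray of one of them actually crosses the other, and flanking gives no such perpendicularity; indeed, if every $(uv,f)$-conflict forced a second conflict not involving $uv$, the lemma would be a one-line corollary of your easy case, which it is not.

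The missing idea is to extract the new conflict from the \emph{new} edges $uw$, $vw$, and not from $f$ (the conflict the paper finds need not involve $f$ at all). The paper's argument is: if some $\nray{uv}{p}$ hits $\partial\mathcal P$, then so does the perpendicular ray at one of the endpoints of $uv$, say at $v$; pick $x \in \partial\mathcal P$ with $vx \perp uv$ and $vx \cap \mathcal P = \{v,x\}$, so that $vx$ together with a boundary path $\rho$ from $v$ to $x$ encloses a pocket of the exterior whose boundary avoids $u$. Since $w$ lies on the exterior side of the line through $uv$ and $\triangle_{uvw}$ meets $\mathcal P$ only in $uv$, the edge $vw$ leans toward this pocket, and one shows that some normal ray $\nray{vw}{q}$ must cross $\rho$; hence $vw$ conflicts with an edge of $\rho$ in $\mathcal P'$. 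Without an argument of this shape --- one using the endpoint perpendiculars of $uv$ and the position of $w$ --- the hard case does not go through.
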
 
\begin{proof}
  Polygon $\mathcal P' = \mathcal P \cup \triangle_{uvw}$ can become
  greedy only if~$uv$ is a conflict edge in~$\mathcal P$.
  Then, either~$u v$ is crossed by a normal ray to another edge, or a
  normal ray to~$u v$ crosses another edge. In the former case,
  either~$u w$ or~$w v$ is crossed by a normal ray to another edge, a
  contradiction to the greediness
  of~$\mathcal P \cup \triangle_{uvw}$.

  In the latter case, there exists a point~$p$ in the interior
  of~$uv$, such that~$\nray{uv}{p}$ crosses the boundary
  $\partial \mathcal P$ of~$\mathcal P$. Let~$y$ be the first
  intersection point; see Fig.~\ref{fig:add-deg-2}.  Then,
  either~$\nray{uv}{u}$ or~$\nray{uv}{v}$ must also cross
  $\partial \mathcal P$.
  \Wlg, there exists a point~$x$ on $\partial \mathcal P$, such that:
  $vx$ and~$uv$ are orthogonal, $vx \cap \mathcal P = \{v,x\}$, and
  adding edge~$vx$ to~$\mathcal P$ would create an inner face~$f$,
  such that~$u$ is not on the boundary of~$f$; see
  Fig.~\ref{fig:add-deg-2}.

  Let~$\rho$ be the $v$-$x$ path on the boundaries of
  both~$\mathcal P$ and~$f$. \Wlg, let~$uv$ point upwards, and let~$x$
  lie to the right of~$uv$. Then,~$w$ must lie to the right of the
  line through~$uv$, and there must exist a point~$q$ on~$vw$, such
  that~$\nray{vw}{q}$ intersects~$\rho$.
\end{proof}

\begin{figure}[tb]
    \hfill \subfloat[\label{fig:nonGRR-extend}]{
  \includegraphics[scale=0.8]{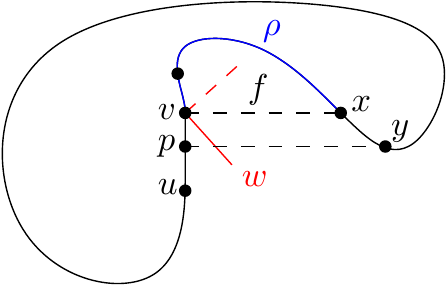}    \label{fig:add-deg-2}}
\hfill
\subfloat[]{
  \includegraphics[scale=0.8,page=2]{fig/lem-add-deg-2.pdf} \label{fig:confl-triang}}\hfill\null
\caption{ \protect\subref{fig:add-deg-2}~When adding triangles as in
  Lemma~\ref{lem:add-deg-2},~$\mathcal P$ remains non-greedy.
  \protect\subref{fig:confl-triang}~Conflicting triangles.}
\end{figure}

From now on, let~triangles $\tau_1, \dots, \tau_n$ form a
triangulation of a simple hole-free polygon~$\mathcal P$, and let~$T$ be its
corresponding dual binary tree. For simplicity we use $\tau_i$ to
refer both to a triangle in $\mathcal P$ and its dual node in $T$.

\begin{definition}[Projection of an edge]
  For three non-collinear points~$u_1,u_2,u_3$, let $\nproj{u_1}{u_2
    u_3}$ denote the set of points covered by shifting~$u_2 u_3$
  orthogonally to itself and away from~$u_1$ (blue in
  Fig.~\ref{fig:confl-triang}).
\end{definition}

\begin{definition}[Conflicting triangles]
  Let~$\tau_i = \triangle_{u_1 u_2 u_3}$ and~$\tau_j = \triangle_{v_1
    v_2 v_3}$ be two triangles such that the two edges dual
  to~$u_1u_2$ and~$v_1v_2$ are on the~$\tau_i$-$\tau_j$ path
  in~$T$. We call~$\tau_i$, $\tau_j$ \emph{conflicting}, if
  $\nproj{u_1}{u_2 u_3} \cup \nproj{u_2}{u_1 u_3}$ contains an
  interior point of~$\tau_j$.
\end{definition}

\newcommand{\lemGreedySubtreeText}{Let $T' \subset T$ be a subtree of $T$ and let $\mathcal P'$ be the
  corresponding simple polygon dual to $T'$.  Then $\mathcal P'$ is a GRR if and
  only if no two triangles $\tau, \tau'$ in $\mathcal P'$ are conflicting. }
\begin{lemma}
  \label{lem:greedy-subtree}
  \lemGreedySubtreeText
\end{lemma}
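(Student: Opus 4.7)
My plan is to connect the triangle-based conflict condition with Tan and Kermarrec's characterization~\cite{tk-ggrlssn-2012} that a polygonal region is a GRR if and only if it contains no pair of conflicting boundary edges. I would then handle the two directions of the equivalence separately.

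For the implication that the absence of conflicting triangles in $\mathcal{P}'$ implies that $\mathcal{P}'$ is a GRR, I would argue contrapositively. Suppose $\mathcal{P}'$ contains a pair of conflicting boundary edges $e \subset \tau_i$ and $f \subset \tau_j$. Since $e$ lies on $\partial \mathcal{P}'$, it cannot be the edge of $\tau_i$ shared with the next triangle along the $\tau_i$-$\tau_j$ path in $T'$; labelling that shared edge $u_1 u_2$, we have $e \in \{u_1 u_3, u_2 u_3\}$, and symmetrically for $f$ in $\tau_j = \triangle_{v_1 v_2 v_3}$ with path edge $v_1 v_2$. The normal ray from an interior point of $e$ crossing $f$ then directly witnesses that one of $\nproj{u_1}{u_2 u_3}$, $\nproj{u_2}{u_1 u_3}$ contains an interior point of $\tau_j$, so $\tau_i$ and $\tau_j$ are conflicting triangles.

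For the converse, I would proceed in two steps. Given conflicting triangles $\tau_i, \tau_j$, I would first let $\mathcal{P}''$ be the subpolygon dual to the $\tau_i$-$\tau_j$ path $\pi$ in $T'$ and show that $\mathcal{P}''$ alone is already not a GRR. Assuming \wlg that $\nproj{u_1}{u_2 u_3}$ contains an interior point of $\tau_j$, the edge $u_2 u_3$ is a boundary edge of $\mathcal{P}''$ because the only edge of $\tau_i$ shared with another triangle on $\pi$ is $u_1 u_2$. The outward normal ray from the corresponding interior point $p$ of $u_2 u_3$ reaches an interior point of $\tau_j \subset \mathcal{P}''$, hence must cross $\partial \mathcal{P}''$ somewhere, exhibiting a boundary edge of $\mathcal{P}''$ in conflict with $u_2 u_3$. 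Second, I would lift non-greediness from $\mathcal{P}''$ to $\mathcal{P}'$ via Lemma~\ref{lem:add-deg-2}: build $T'$ out of $\pi$ by attaching one triangle at a time at a leaf of the currently grown subtree, so that each new triangle meets the existing polygon along exactly one shared edge with its third vertex outside the old polygon. Each such step preserves non-greediness by Lemma~\ref{lem:add-deg-2}, so $\mathcal{P}'$ is not a GRR.

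The main obstacle will be the first step of this second direction: making precise that the outward normal ray from $u_2 u_3$ forces an actual pair of conflicting boundary edges in $\mathcal{P}''$, since the ray could in principle leave and re-enter $\mathcal{P}''$ several times before reaching $\tau_j$. I expect to handle this by choosing $(\tau_i, \tau_j)$ to be a closest conflicting pair along $\pi$, or equivalently by taking the first boundary edge of $\mathcal{P}''$ that is hit by the ray as the conflict partner of $u_2 u_3$.
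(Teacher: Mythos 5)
Your proposal is correct and follows essentially the same route as the paper: one direction passes to the subpolygon $\mathcal P''$ dual to the $\tau_i$-$\tau_j$ path, observes that the outward normal ray from the non-shared edge of $\tau_i$ witnesses a conflicting edge pair there, and lifts non-greediness to $\mathcal P'$ by attaching the remaining triangles one at a time via Lemma~\ref{lem:add-deg-2}; the other direction converts a conflicting pair of boundary edges (guaranteed by Tan and Kermarrec's characterization when $\mathcal P'$ is not a GRR) into a pair of conflicting triangles. Your closing worry is not actually an obstacle, since the first boundary edge of $\mathcal P''$ crossed by the outward normal ray already yields the required edge conflict, exactly as you suggest.
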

\begin{proof}
  Assume there are two conflicting triangles
  $\tau_i = \triangle_{u_1 u_2 u_3}, \tau_j= \triangle_{v_1 v_2 v_3}$
  in $T'$.  Let $\mathcal P''$ denote the polygon defined by the
  $\tau_i$-$\tau_j$ path in $T'$ and assume that the two edges dual
  to~$u_1u_2$ and~$v_1v_2$ are on the~$\tau_i$-$\tau_j$ path.  Since
  $\tau_i$ and $\tau_j$ are conflicting, there is, without loss of
  generality, a point $p$ on $u_2 u_3$ such that $\nray{u_2 u_3}{p}$
  intersects an edge of $\tau_j$. Hence, $\mathcal P''$ is not greedy.
  Moreover, $\mathcal P'$ is obtained from $\mathcal P''$ by adding
  triangles.  Thus Lemma~\ref{lem:add-deg-2} implies that
  $\mathcal P'$ cannot be greedy.

  Conversely, assume $\mathcal P'$ is not greedy.  There exists an outer edge
  $uv$ of $\mathcal P'$ and a point $x$ in the interior of $uv$ such that
  $\nray{uv}{x}$ crosses another boundary edge of $\mathcal P'$ in a point $y$.
  Let $\tau_x,\tau_y$ be the triangles with $x \in \tau_x$ and $y \in
  \tau_y$. Then $\tau_x$ and $\tau_y$ are conflicting.
\end{proof}
By Lemma~\ref{lem:greedy-subtree}, the decompositions
of $\mathcal P$ in $k$ GRRs correspond bijectively to the multicuts $E'$ of
$T$ with $|E'| = k-1$ where the terminal pairs are the pairs of
conflicting triangles.

We now use the 2-approximation for \minmulticut on
trees~\cite{gvy-pdaaifmt-1997} to give a $(2-1/\opt)$-approximation
for the minimum GRR decomposition of $\mathcal P$.  Let $E'$ be a 2-approximation of
\minmulticut in $T$ with respect to the pairs of conflicting
triangles.  By the above observation the minimum multicut for $T$ has
size $\opt-1$, hence $|E'| \le 2\opt-2$, which in turn yields a
decomposition into $2\opt-1$ regions.  Thus the approximation
guarantee is $2-1/\opt$.  
We summarize this in Theorem~\ref{thm:triang:decomp}.

\begin{theorem}
  There is a polynomial-time $(2-1/\opt)$-approximation for minimum
  GRR decomposition of triangulated simple polygons.
\label{thm:triang:decomp}
\end{theorem}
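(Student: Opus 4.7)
The plan is to reduce the problem to \minmulticut on trees and then apply the known 2-approximation of Garg, Vazirani and Yannakakis~\cite{gvy-pdaaifmt-1997}, sharpening the ratio by one additive unit via the observation that $k$ GRRs correspond to $k-1$ cut edges.

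First I would make explicit the bijection between admissible GRR decompositions and multicuts of the dual tree. Given the triangulation of $\mathcal P$, let $T$ be the dual binary tree, and form a \minmulticut instance $(T,X)$ by assigning weight $1$ to every edge and placing a terminal pair $(\tau_i,\tau_j)$ in $X$ for every pair of conflicting triangles (as defined before Lemma~\ref{lem:greedy-subtree}). A decomposition of $\mathcal P$ along chords of the triangulation into $k$ GRRs corresponds to removing $k-1$ edges of $T$ so that the resulting $k$ subtrees induce simple subpolygons, and by Lemma~\ref{lem:greedy-subtree} each subpolygon is a GRR if and only if the corresponding subtree contains no pair of conflicting triangles. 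Hence admissible decompositions into $k$ GRRs are exactly the multicuts of $(T,X)$ of size $k-1$; in particular the minimum multicut has size $\opt-1$.

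Next I would argue that the instance $(T,X)$ can be built in polynomial time: $T$ has $n-2$ nodes, there are $O(n^2)$ triangle pairs, and testing whether two triangles are conflicting requires only checking intersections of the two orthogonal projections with edges on the unique path between them in $T$, which is routine. Then I invoke the polynomial-time 2-approximation for \minmulticut on trees from~\cite{gvy-pdaaifmt-1997} to obtain an edge set $E'\subseteq E(T)$ with $|E'|\le 2(\opt-1)=2\opt-2$. Removing $E'$ from $T$ yields at most $2\opt-1$ subtrees; by the bijection above, the corresponding partition of $\mathcal P$ is a valid GRR decomposition into at most $2\opt-1$ regions. The approximation ratio is therefore $(2\opt-1)/\opt = 2-1/\opt$, which establishes the theorem.

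The only non-routine point is already handled by Lemma~\ref{lem:greedy-subtree}: without it, one would need to verify directly that the subtrees produced by cutting $T$ at edges of the multicut really correspond to GRRs (and not merely to subpolygons in which no two \emph{original} conflicting triangles coexist), since adding triangles to a subpolygon could in principle create new conflicts. Lemma~\ref{lem:add-deg-2} and Lemma~\ref{lem:greedy-subtree} precisely rule this out, so the reduction is clean and no further geometric argument is required beyond plugging into the known tree multicut approximation.
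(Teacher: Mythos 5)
Your proposal is correct and follows essentially the same route as the paper: it uses Lemma~\ref{lem:greedy-subtree} to establish the bijection between GRR decompositions into $k$ regions and multicuts of size $k-1$ in the dual tree with conflicting-triangle terminal pairs, and then applies the 2-approximation of Garg, Vazirani and Yannakakis to get $|E'|\le 2\opt-2$ cut edges and hence $2\opt-1$ regions. The additional remarks on polynomial-time constructibility of the instance are a harmless elaboration of what the paper leaves implicit.
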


\section{Conclusions} 

Motivated by a geographic routing protocol for dense wireless sensor
networks proposed by Tan and Kermarrec~\cite{tk-ggrlssn-2012}, we
further studied the problem of finding minimum GRR decompositions of
polygons. We considered the special case of decomposing plane
straight-line drawings of graphs, which correspond to infinitely thin
polygons.
For this case, we could apply insights gained from the study of
self-approaching and increasing-chord drawings by the graph drawing
community.

We extended the result of Tan and Kermarrec~\cite{tk-ggrlssn-2012} for
polygonal regions with holes by showing that partitioning a plane
graph drawing into a minimum number of increasing-chord components is
\NP-hard.
We then considered plane drawings of trees and showed how to model the
decomposition problem using \minmulticut, which provided a
polynomial-time 2-approximation.
We solved the partitioning problem
for trees optimally in polynomial time using dynamic programming.
Finally, using insights gained from the decomposition of graph
drawings, we gave a polynomial-time 2-approximation for decomposing
triangulated polygons along their chords.

\subsubsection*{Open questions}

For the \NP-hard problem of decomposing plane drawings of graphs
into the minimum number of GRRs, it is interesting to find
approximation algorithms.

For decomposing polygons, many problems remain open. For example, one
could investigate whether minimum decomposition is \NP-hard for simple
polygons for different types of allowed partition types.
Is finding the optimum solution hard for partitioning triangulations
as in Section~\ref{sec:triang}? Is the minimum GRR decomposition
problem hard if we allow cutting the polygon at any diagonal? Is it
hard if arbitrary polygonal cuts are allowed, i.e., the partition can
use Steiner points?
Finally, are there approximations for partitioning polygons with and
without holes into GRRs?

\subsubsection*{Acknowledgements} The second author thanks Jie Gao for
pointing him to the topic of GRR decompositions.

 {\small
%  \bibliography{abbrv,bib}
%  \bibliographystyle{abbrv}

}

\end{document}